\documentclass{llncs}
 
\usepackage{array}
\usepackage[dvips]{graphicx}
\usepackage{epic, eepic}
\usepackage{amssymb,amsmath,stmaryrd}
\usepackage{latexsym}
\usepackage{subfigure}
\usepackage[usenames]{color}
\usepackage{multirow}
\usepackage{gastex}
\usepackage{times}
\usepackage{subfigure}
\sloppy

\setlength{\textwidth}{14cm}
\setlength{\textheight}{19.2cm}
\setlength{\oddsidemargin}{1cm}
\setlength{\evensidemargin}{1cm}
\setlength{\topmargin}{1cm}

\let\emptyset\varnothing

\newcounter{compressEnum}
\renewcommand{\thecompressEnum}{$\roman{compressEnum}$}
\newenvironment{compressEnum}
{\setcounter{compressEnum}{0}}
{}
\newcommand{\itCompress}{\stepcounter{compressEnum}{(\thecompressEnum) }}

\newenvironment{myProof}[1][\unskip]{\medskip\par\noindent{\bfseries Proof
    #1.}\ \ 
        \global\def\qed{\origQED\global\def\qed{}}\penalty10000}%
        {\qed\par\medskip\global\def\qed{\origQED\global\def\qed{}}}

\def\endof{%
  \leavevmode
  \parfillskip=0pt%
  \widowpenalty=10000%
  \displaywidowpenalty=10000%
  \finalhyphendemerits=0%
  \unskip\nobreak\null\hfil\penalty50\hskip2em\null\hfill%
}
\def\eodsymbol{\ensuremath\square}
\def\eopsymbol{\ensuremath\blacksquare}

\def\origEOD{\nobreak\leavevmode\endof\eodsymbol\par}
\def\EOD{\origEOD\global\def\EOD{}}
\def\origQED{{\nobreak\leavevmode\endof\eopsymbol\par\medskip}}
\def\qed{\origQED\global\def\qed{}}

\def\abs#1{\ensuremath{\lvert #1\rvert}}

\renewcommand{\P}{\mathbb P}

\newcommand{\nat}{\mathbb N} 
\newcommand{\rat}{{\mathbb Q}}

\newcommand{\real}{{\mathbb R}}

\newcommand{\tuple}[1]{\langle #1 \rangle}

\newcommand{\C}{\mathcal{C}}
\newcommand{\D}{\mathcal{D}}

\newcommand{\Run}{{\sf Run}}


\newcommand{\nmax}{{\sc NSup}}
\newcommand{\ndi}{{\sc NDisc}}
\newcommand{\nla}{{\sc NLimAvg}}
\newcommand{\nli}{{\sc NLimInf}}
\newcommand{\nls}{{\sc NLimSup}}
\newcommand{\nbw}{{\sc NBW}}


\newcommand{\dmax}{{\sc DSup}}
\newcommand{\ddi}{{\sc DDisc}}
\newcommand{\dla}{{\sc DLimAvg}}
\newcommand{\dli}{{\sc DLimInf}}
\newcommand{\dls}{{\sc DLimSup}}
\newcommand{\dbw}{{\sc DBW}}
\newcommand{\dcw}{{\sc DCW}}


\newcommand{\adi}{{\sc ADisc}}
\newcommand{\ala}{{\sc ALavg}}
\newcommand{\ali}{{\sc ALinf}}

\newcommand{\acw}{{\sc ACW}}


\newcommand{\umax}{{\sc USup}}
\newcommand{\udi}{{\sc UDisc}}


\newcommand{\zmax}{{\sc PosSup}}
\newcommand{\zdi}{{\sc PosDisc}}
\newcommand{\zla}{{\sc PosLimAvg}}
\newcommand{\zli}{{\sc PosLimInf}}
\newcommand{\zls}{{\sc PosLimSup}}
\newcommand{\zbw}{{\sc PosBW}}
\newcommand{\zcw}{{\sc PosCW}}

\newcommand{\asmax}{{\sc AsSup}}
\newcommand{\asdi}{{\sc AsDisc}}
\newcommand{\asla}{{\sc AsLimAvg}}
\newcommand{\asli}{{\sc AsLimInf}}
\newcommand{\asls}{{\sc AsLimSup}}
\newcommand{\asbw}{{\sc AsBW}}
\newcommand{\ascw}{{\sc AsCW}}

\newcommand{\ndcw}{{\sc 
\raisebox{2.0pt}{\scalebox{0.45}{\begin{tabular}{c}D\\[-3pt]N\end{tabular}}}CW
}}
\newcommand{\ndli}{{\sc 
\raisebox{2.0pt}{\scalebox{0.45}{\begin{tabular}{c}D\\[-3pt]N\end{tabular}}}Linf
}}
\newcommand{\ndmax}{{\sc 
\raisebox{2.0pt}{\scalebox{0.45}{\begin{tabular}{c}D\\[-3pt]N\end{tabular}}}Sup
}}
\newcommand{\nd}{{\sc 
\raisebox{2.0pt}{\scalebox{0.45}{\begin{tabular}{c}D\\[-3pt]N\end{tabular}}}
}}
\newcommand{\anbw}{{\sc 
\raisebox{2.0pt}{\scalebox{0.45}{\begin{tabular}{c}A\\[-3pt]N\end{tabular}}}BW
}}
\newcommand{\anls}{{\sc 
\raisebox{2.0pt}{\scalebox{0.45}{\begin{tabular}{c}A\\[-3pt]N\end{tabular}}}Lsup
}}

\newcommand{\Val}{\mathsf{Val}}

\newcommand{\Max}{\mathsf{Sup}}
\newcommand{\LimSup}{\mathsf{LimSup}}
\newcommand{\LimInf}{\mathsf{LimInf}}
\newcommand{\LimAvg}{\mathsf{LimAvg}}
\newcommand{\Disc}{\mathsf{Disc}}

\def\set#1{\ensuremath{\{#1\}}}
\newcommand{\ok}{\raisebox{0.2em}{$\sqrt{}$}}
\newcommand{\ko}{$\times$}
\newcommand{\weight}{\gamma}
\newcommand{\wt}{\weight}

\newcommand{\ov}{\overline}

\newcommand{\trans}{\delta}
\newcommand{\restr}{\upharpoonright}

\newcommand{\Buchi}{\textrm{B\"uchi}}
\newcommand{\coBuchi}{\textrm{coB\"uchi}}

\newcommand{\Prb}{\mathit{Prb}}



\makeatletter

\begingroup \catcode `|=0 \catcode `[= 1
\catcode`]=2 \catcode `\{=12 \catcode `\}=12
\catcode`\\=12 |gdef|@xcomment#1\end{comment}[|end[comment]]
|endgroup

\def\@comment{\let\do\@makeother \dospecials\catcode`\^^M=10\def\par{}}

\def\begincomment{\@comment\@xcomment}

\makeatother

\newenvironment{comment}{\begincomment}{}

\title{Probabilistic Weighted Automata}

\author{Krishnendu Chatterjee$^{1}$ \and Laurent Doyen$^{2}$ \and Thomas A. Henzinger$^{1,3}$}

\institute{IST Austria (Institute of Science and Technology Austria) \and CNRS, Cachan, France \and
EPFL, Switzerland}

\begin{document}
\pagestyle{plain}
\maketitle

\begin{abstract}
Nondeterministic weighted automata are finite automata with numerical
weights on transitions.  They define quantitative languages $L$ that
assign to each word $w$ a real number~$L(w)$.  The value of an
infinite word $w$ is computed as the maximal value of all runs
over~$w$, and the value of a run as the maximum, limsup, liminf, limit
average, or discounted sum of the transition weights.  We introduce
probabilistic weighted automata, in which the transitions are chosen
in a randomized (rather than nondeterministic) fashion.  Under
almost-sure semantics (resp.\ positive semantics), the value of a word
$w$ is the largest real $v$ such that the runs over $w$ have value at
least $v$ with probability~1 (resp.\ positive probability).

We study the classical questions of automata theory for probabilistic
weighted automata: emptiness and universality, expressiveness, and
closure under various operations on languages.  For quantitative
languages, emptiness and universality are defined as whether the value
of some (resp.\ every) word exceeds a given threshold.  We prove some
of these questions to be decidable, and others undecidable.  Regarding
expressive power, we show that probabilities allow us to define a wide
variety of new classes of quantitative languages, except for
discounted-sum automata, where probabilistic choice is no more
expressive than nondeterminism.  Finally, we give an almost complete
picture of the closure of various classes of probabilistic weighted
automata for the following pointwise operations on quantitative
languages: max, min, sum, and numerical complement.
\end{abstract}

\section{Introduction}

In formal design, specifications describe the set of correct
behaviours of a system.  An implementation satisfies a specification
if all its behaviours are correct.  If we view a behaviour as a word,
then a specification is a language, i.e., a set of words.  Languages
can be specified using finite automata, for which a large number of
results and techniques are known; see \cite{automata,Vardi95}.  We
call them \emph{boolean languages} because a given behaviour is either
good or bad according to the specification.  Boolean languages are
useful to specify functional requirements.

In a generalization of this approach, we consider \emph{quantitative
languages}, where each word is assigned a real number.  The value of a
word can be interpreted as the amount of some resource (e.g., memory
or power) needed to produce it, or as a quality measurement for the
corresponding behaviour \cite{CAHS03,CAFH+06}.
Therefore, quantitative languages are useful to specify non-functional
requirements such as resource constraints, reliability properties, or
levels of quality (such as quality of service).

Quantitative languages can be defined using (nondeterministic)
weighted automata, i.e., finite automata with numerical weights on
transitions \cite{CulikK94,EsikK04}.  In~\cite{CDH08a}, we studied
quantitative languages of infinite words and defined the value of an
infinite word $w$ as the maximal value of all runs of an automaton
over~$w$ (if the automaton is nondeterministic, then there may be many
runs over~$w$).  The value of a run $r$ is a function of the infinite
sequence of weights that appear along~$r$.  There are several natural
functions to consider, such as $\Max$, $\LimSup$,
$\LimInf$, limit average, and discounted sum of weights.  For example,
peak power consumption can be modeled as the maximum of a sequence of
weights representing power usage; energy use, as a discounted sum;
average response time, as a limit average \cite{CCHK+05,CAHS03}.

\begin{figure}[!tb]
\begin{center}
     \subfigure[Low reliability but cheap. \label{fig:aut-net1}]{\unitlength=.8mm
\def\fsize{\normalsize}

\begin{picture}(70,50)(0,0)

{\fsize

\node[Nmarks=i](q0)(18,25){$q_0$}
\node[Nmarks=n](q1)(53,45){$q_1$}
\node[Nmarks=n](q2)(53,5){$q_2$}

\drawedge[ELpos=50, ELside=l, ELdist=-3, syo=1, curvedepth=-6](q0,q1){\rotatebox{30}{send, 1}}
\drawedge[ELpos=65, ELside=r, ELdist=.5, syo=1, curvedepth=-6](q0,q1){$\frac{9}{10}$}   

\drawedge[ELpos=50, ELside=r, ELdist=-3, syo=-1, curvedepth=6](q0,q2){\rotatebox{-30}{send, 1}}
\drawedge[ELpos=65, ELside=l, ELdist=.5, syo=-1, curvedepth=6](q0,q2){$\frac{1}{10}$}   

\drawedge[ELpos=50, ELside=r, ELdist=1, curvedepth=-6](q1,q0){ack, 2}
\drawedge[ELpos=50, ELside=l, ELdist=1, curvedepth=6](q2,q0){ack, 5}



}
\end{picture}}
     \subfigure[High reliability but expensive. \label{fig:aut-net2}]{\unitlength=.8mm
\def\fsize{\normalsize}

\begin{picture}(70,50)(0,0)

{\fsize

\node[Nmarks=i](q0)(17,25){$q'_0$}
\node[Nmarks=n](q1)(52,45){$q'_1$}
\node[Nmarks=n](q2)(52,5){$q'_2$}

\drawedge[ELpos=50, ELside=l, ELdist=-3, syo=1, curvedepth=-6](q0,q1){\rotatebox{30}{send, 5}}
\drawedge[ELpos=65, ELside=r, ELdist=.5, syo=1, curvedepth=-6](q0,q1){$\frac{99}{100}$}

\drawedge[ELpos=50, ELside=r, ELdist=-3, syo=-1, curvedepth=6](q0,q2){\rotatebox{-30}{send, 5}}
\drawedge[ELpos=65, ELside=l, ELdist=.5, syo=-1, curvedepth=6](q0,q2){$\frac{1}{100}$}

\drawedge[ELpos=50, ELside=r, ELdist=1, curvedepth=-6](q1,q0){ack, 1}
\drawedge[ELpos=50, ELside=l, ELdist=1, curvedepth=6](q2,q0){ack, 20}



}
\end{picture}}
\end{center}
\caption{Two specifications of a channel.\label{fig:network}}
\end{figure}
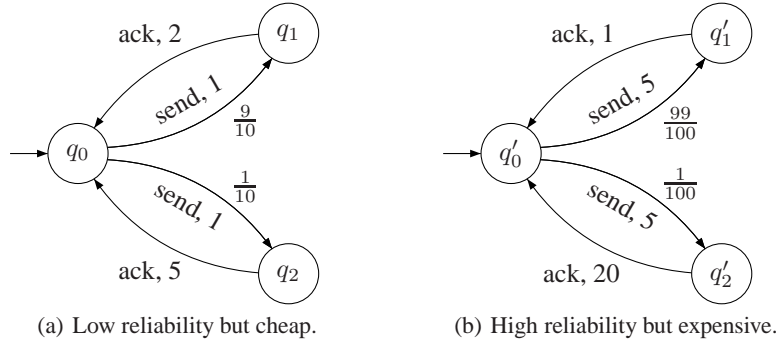

In this paper, we consider \emph{probabilistic} weighted automata as
generators of quantitative languages.  In such automata,
nondeterministic choice is replaced by probability distributions on
successor states.  The value of an infinite word $w$ is defined to be
the maximal value $v$ such that the set of runs over $w$ with value at
least $v$ has either positive probability (\emph{positive semantics}),
or probability~1 (\emph{almost-sure semantics}). 
This simple definition combines in a general model the natural quantitative extensions 
of logics and automata~\cite{AlfaroFS04,DrosteK06,CDH08a}, and the probabilistic 
models of automata for which boolean properties have been well studied~\cite{Rabin63,BlondelC03,BG05}.
Note that the probabilistic B\"uchi and coB\"uchi automata of~\cite{BG05} are 
a special case of probabilistic weighted automata with weights 0 and 1 only 
(and the value of an infinite run computed as $\LimSup$ or $\LimInf$,
respectively). 
While quantitative objectives are standard in the branching-time context of
stochastic games~\cite{Sha53,EM79,FV97,CAHS03,CJH04,Gimbert06},
we are not aware of any model combining probabilities and weights in the 
linear-time context of words and languages, though such a model is very natural
for the specification of quantitative properties. Consider the 
specification of two types of communication channels given in \figurename~\ref{fig:network}.
One has low cost (sending costs $1$ unit) and low reliability
(a failure occurs in 10\% of the case and entails an increased cost for the 
operation), while the second is expensive (sending costs $5$ units),
but the reliability is high (though the cost of a failure is prohibitive).
In the figure, we omit the self-loops with cost $0$ in state $q_0$ and $q'_0$
over {\it ack}, and in $q_1,q_2,q'_1,q'_2$ over {\it send}.
Natural questions can be formulated in this framework, such as whether
the average-cost of every word $w \in \{send,ack\}^\omega$ is really 
smaller in the low-cost channel, or to construct a probabilistic weighted 
automaton that assigns the minimum of the average-cost of the two types
of channels.
In this paper, we attempt a comprehensive study of such fundamental questions, 
about the expressive power, closure properties, and decision problems for 
probabilistic weighted automata.

First, we compare the expressiveness of the various classes of
probabilistic and nondeterministic weighted automata over infinite
words.  For $\LimSup$, $\LimInf$, and limit average, we show that a
wide variety of new classes of quantitative languages can be defined
using probabilities, which are not expressible using nondeterminism.
Our results rely on reachability properties of closed recurrent sets
in Markov chains.  For discounted sum, we show that probabilistic
weighted automata under the positive semantics have the same
expressive power as nondeterministic weighted automata, while under
the almost-sure semantics, they have the same expressive power as
weighted automata with universal branching, where the value of a word
is the minimal (instead of maximal) value of all runs.
The question of whether the positive semantics of weighted limit-average
automata is more expressive than nondeterminism, remains open.

Second, we give an almost complete picture of the closure of
probabilistic weighted automata under the pointwise operations of
maximum, minimum, and sum for quantitative languages. We also define the
\emph{complement} $L^c$ of a quantitative language $L$ by $L^c(w) =
1-L(w)$ for all words~$w$.\footnote{One can define $L^c(w) = k-L(w)$ 
for any constant $k$ without changing the results of this paper.}
Note that maximum and minimum are in fact the operation of least upper bound and 
greatest lower bound for the pointwise natural order on quantitative languages
(where $L_1 \leq L_2$ if and only if $L_1(w) \leq L_2(w)$ for all words $w$).
Therefore, they also provide natural generalization of the classical union and
intersection operations of boolean languages.

Note that
closure under max trivially holds for the positive semantics, and
closure under min for the almost-sure semantics.  We also define the
\emph{complement} $L^c$ of a quantitative language $L$ by $L^c(w) =
1-L(w)$ for all words~$w$.  Only $\LimSup$-automata under positive
semantics and $\LimInf$-automata under almost-sure semantics are
closed under all four operations; these results extend corresponding
results for the boolean (i.e., non-quantitative) case~\cite{BG08}.  To
establish the closure properties of limit-average automata, we
characterize the expected limit-average reward of Markov chains.  Our
characterization answers all closure questions except for the language
sum in the case of positive semantics, which we leave open.  Note that
expressiveness results and closure properties are tightly connected.
For instance, because they are closed under max, the
$\LimInf$-automata with positive semantics can be reduced to
$\LimInf$-automata with almost-sure semantics and to
$\LimSup$-automata with positive semantics; and because they are not
closed under complement, the $\LimSup$-automata with almost-sure
semantics and $\LimInf$-automata with positive semantics have
incomparable expressive powers.

Third, we investigate the emptiness and universality problems for
probabilistic weighted automata, which ask to decide if some (resp.\
all) words have a value above a given threshold.  Using our
expressiveness results, as well as \cite{BG08,CDH08b}, we establish 
some decidability and undecidability results for $\Max$,
$\LimSup$, and $\LimInf$ automata; in particular, emptiness and
universality are undecidable for 
$\LimSup$-automata with positive semantics and for 
$\LimInf$-automata with almost-sure semantics,
while the question is open for 
the emptiness of $\LimInf$-automata with positive semantics and for 
the universality of $\LimSup$-automata with almost-sure semantics.
We also prove the decidability of emptiness for probabilistic 
discounted-sum automata with positive semantics,
while the universality problem is as hard as for the nondeterministic 
discounted-sum automata, for which no decidability result is known.
We leave open the case of limit average. 


\section{Definitions}

A quantitative language over a finite alphabet $\Sigma$ is a function $L:\Sigma^{\omega} \to \real$.
A boolean language (or a set of infinite words) is a special case where $L(w) \in \{0,1\}$ for
all words $w \in \Sigma^{\omega}$. Nondeterministic weighted automata define the value of a word 
as the maximal value of a run~\cite{CDH08a}. In this paper, we study probabilistic weighted automata
as generator of quantitative languages.

\smallskip\noindent{\bf Value functions.}
We consider the following value functions $\Val: \rat^\omega \to \real$ to define quantitative languages.
Given an infinite sequence $v=v_0 v_1 \dots$ of rational numbers, define
\begin{itemize}
\item $\Max(v)    = \sup \{v_n \mid n \geq 0\}$;\smallskip
\item $\LimSup(v) = \displaystyle\limsup_{n\to\infty} \ v_n = \lim_{n\to\infty} \sup \{v_i \mid i \geq n\}$;\smallskip
\item $\LimInf(v) = \displaystyle\liminf_{n\to\infty} \ v_n = \lim_{n\to\infty} \inf \{v_i \mid i \geq n\}$;\smallskip
\item $\LimAvg(v) = \displaystyle\liminf_{n\to\infty} \ \frac{1}{n} \sum_{i=0}^{n-1} v_i$;
\item For $0 < \lambda < 1$, $\Disc_{\lambda}(v) = \displaystyle \sum_{i=0}^{\infty} \lambda^i \cdot v_i$;
\end{itemize}

Given a finite set~$S$, a \emph{probabilistic distribution} over~$S$ is
a function $f: S \to [0,1]$ such that $\sum_{s\in S} f(s) = 1$.
We denote by $\D(S)$ the set of all probabilistic distributions over~$S$.

\smallskip\noindent{\bf Probabilistic weighted automata.}
A \emph{probabilistic weighted automaton} is a tuple 
$A=\tuple{Q,\rho_I,\Sigma,\delta,\weight}$ where:
\begin{itemize}
\item $Q$ is a finite set of states;
\item $\rho_I \in \D(Q)$ is the initial distribution;
\item $\Sigma$ is a finite alphabet;
\item $\delta: Q \times \Sigma \to \D(Q)$ is a probabilistic transition function;
\item $\weight: Q \times \Sigma \times Q \to \rat$ is a weight function.
\end{itemize}

We can define a \emph{non-probabilistic} automaton from $A$ by ignoring
the probability values, and saying that $q$ is initial if $\rho_I(q) > 0$,
and $(q,\sigma,q')$ is an edge of $A$ if $\delta(q,\sigma)(q') > 0$.
The automaton~$A$ is \emph{deterministic} if $\rho_I(q_I) = 1$ for some $q_I\in Q$,
and for all $q \in Q$ and $\sigma \in \Sigma$,
there exists $q' \in Q$ such that $\delta(q,\sigma)(q') = 1$. 

A \emph{run} of $A$ over a finite (resp. infinite) word $w=\sigma_1 \sigma_2 \dots$ 
is a finite (resp. infinite) sequence $r = q_0 \sigma_1 q_1 \sigma_2 \dots $ 
of states and letters such that 
\begin{compressEnum}
\itCompress $\rho_I(q_0) > 0$, and
\itCompress $\delta(q_i,\sigma_{i+1},q_{i+1}) > 0$ for all $0 \leq i < \abs{w}$.
\end{compressEnum}
We denote by $\weight(r) = v_0 v_1 \dots$ the sequence of weights that occur in~$r$ 
where $v_i = \weight(q_i,\sigma_{i+1},q_{i+1})$ for all $0 \leq i < \abs{w}$.

The probability of a finite run $r= q_0 \sigma_1 q_1 \sigma_2 \dots \sigma_{k} q_k$ over a finite word 
$w = \sigma_1 \dots \sigma_k$ is 
$\P^A(r) = \rho_I(q_0).\prod_{i=1}^{k} \delta(q_{i-1}, \sigma_{i})(q_i)$. 
For each $w \in \Sigma^{\omega}$, 
the function $\P^A(\cdot)$ defines a unique probability measure over Borel 
sets of (infinite) runs of~$A$ over~$w$. 

Given a value function $\Val: \rat^\omega \to \real$, we say that
the probabilistic $\Val$-automaton $A$ generates 
the quantitative languages defined for all words $w \in \Sigma^{\omega}$ by 
$L^{=1}_A(w) = \sup \{\eta \mid \P^A(\{r \in \Run^A(w) \mid \Val(\weight(r)) \geq \eta\}) = 1 \}$
under the almost-sure semantics, 
and $L^{>0}_A(w) = \sup \{\eta \mid \P^A(\{r \in \Run^A(w) \mid \Val(\weight(r)) \geq \eta\}) > 0\}$
under the positive semantics. 
For non-probabilistic automata, the value of a word is either the maximal
value of the runs (i.e., $L^{\max}_A(w) = \sup \{\Val(\weight(r)) \mid r \in \Run^A(w)\}$
for all $w \in \Sigma^{\omega}$) and the automaton is then called \emph{nondeterministic},
or the minimal value of the runs, and the automaton is then called \emph{universal}.

Note that B\"uchi and coB\"uchi automata (\cite{BG05}) are special cases of respectively 
$\LimSup$- and $\LimInf$-automata, where all weights are either $0$ or $1$. 

\smallskip\noindent{\bf Notations.} 
The first letter in acronyms for classes of automata
can be {\sc N}(ondeterministic), {\sc D}(eterministic), {\sc U}(niversal), 
{\sc Pos} for the language in the positive semantics, or 
{\sc As} for the language in the almost-sure semantics.
We use the notations \nd to denote the classes of automata
whose deterministic version has the
same expressiveness as their nondeterministic version.
When the type of an automaton~$A$ is clear from the context,
we often denote its language simply by~$L_A(\cdot)$ or even~$A(\cdot)$, 
instead of $L^{=1}_A$, $L^{\max}_A$, etc.

\smallskip\noindent{\bf Reducibility.} 
A class $\C$ of weighted automata is \emph{reducible}
to a class $\C'$ of weighted automata if for every $A \in \C$ there exists 
$A' \in \C'$ such that $L_A=L_{A'}$, i.e. $L_{A}(w)=L_{A'}(w)$ for all words~$w$.
Reducibility relationships for (non)deterministic weighted automata
are given in~\cite{CDH08a}.

\smallskip\noindent{\bf Composition.} 
Given two quantitative languages $L, L': \Sigma^{\omega} \to \real$,
we denote by $\max(L,L')$ (resp. $\min(L,L')$ and $L+L'$)    
the quantitative language that assigns $\max\{L(w),L'(w)\}$
(resp. $\min\{L(w),L'(w)\}$ and $L(w) + L'(w)$)     
to each word $w \in \Sigma^{\omega}$. 
The language $1-L$ is called the \emph{complement} of $L$.
The $\max$, $\min$ and complement operators for quantitative languages 
generalize respectively the union, intersection and complement 
operator for boolean languages. 
The closure properties of (non)deterministic weighted automata are given in~\cite{CDH08b}.

\smallskip\noindent{\bf Remark.} 
We sometimes use automata with weight functions $\weight: Q \to \rat$ that 
assign a weight to states instead of transitions. This is a convenient notation
for weighted automata in which from each state, all outgoing transitions 
have the same weight. In pictorial descriptions of probabilistic weighted
automata, the transitions are labeled with probabilities, and states with weights.

\section{Expressive Power of Probabilistic Weighted Automata}

We complete the picture given in~\cite{CDH08a} about reducibility for nondeterministic weighted automata,
by adding the relations with probabilistic automata. 
The results for $\LimInf$, $\LimSup$, and $\LimAvg$ are summarized in \figurename~\ref{fig:reducibility}s, and
for $\Max$- and $\Disc$-automata in Theorems~\ref{theo:zmax-asmax-to-dmax} and~\ref{theo:ndi-zdi}.

\begin{figure}[!tb]
\begin{center}
\def\fsize{\normalsize}

\hrule
\begin{picture}(140,50)(0,0)

{\fsize

\put(-1.5,0)
{


\node[Nframe=n,Nadjust=wh,Nmarks=n](asli-zls)(107,47){\asli\/ $\leftrightarrow$ \zls}
\node[Nframe=n,Nadjust=wh,Nmarks=n](asla)(11,35){\asla}
\node[Nframe=n,Nadjust=wh,Nmarks=n](zla)(35,35){\zla}
\node[Nframe=n,Nadjust=wh,Nmarks=n](nla)(59,35){\nla}
\node[Nframe=n,Nadjust=wh,Nmarks=n](zli)(83,35){\zli}
\node[Nframe=n,Nadjust=wh,Nmarks=n](nls)(107,35){\nls}
\node[Nframe=n,Nadjust=wh,Nmarks=n](asls)(131,35){\asls}
\node[Nframe=n,Nadjust=wh,Nmarks=n](dla)(35,25){\dla}
\node[Nframe=n,Nadjust=wh,Nmarks=n](ndli)(83,25){\ndli}
\node[Nframe=n,Nadjust=wh,Nmarks=n](dls)(107,25){\dls}
\node[Nframe=n,Nadjust=wh,Nmarks=n](nbw)(131,25){\nbw}
\node[Nframe=n,Nadjust=wh,Nmarks=n](dbw)(119,15){\dbw}
\node[Nframe=n,Nadjust=wh,Nmarks=n](ndmax)(95,15){\ndmax}
\node[Nframe=n,Nadjust=wh,Nmarks=n](ndcw)(107,5){\ndcw}

\drawedge[ELpos=55, ELside=l, ELdist=1, curvedepth=7, exo=-2](ndcw,ndli){}
\drawedge[ELpos=55, ELside=l, ELdist=1, curvedepth=-6, exo=1](ndcw,nbw){}


\drawedge[ELpos=55, ELside=l, ELdist=1, curvedepth=0](ndmax,ndli){}
\drawedge[ELpos=55, ELside=l, ELdist=1, curvedepth=0](ndmax,dls){}
\drawedge[ELpos=55, ELside=l, ELdist=1, curvedepth=0](dbw,dls){}
\drawedge[ELpos=55, ELside=l, ELdist=1, curvedepth=0](dbw,nbw){}

\drawedge[ELpos=55, ELside=l, ELdist=1, curvedepth=0](zli,asli-zls){}
\drawedge[ELpos=55, ELside=l, ELdist=1, curvedepth=0](nls,asli-zls){}
\drawedge[ELpos=55, ELside=l, ELdist=1, curvedepth=0](asls,asli-zls){}
\drawedge[ELpos=55, ELside=l, ELdist=1, curvedepth=0](dla,asla){}
\drawedge[ELpos=55, ELside=l, ELdist=1, curvedepth=0](dla,zla){}
\drawedge[ELpos=55, ELside=l, ELdist=1, curvedepth=0](dla,nla){}
\drawedge[ELpos=55, ELside=l, ELdist=1, curvedepth=0](ndli,nla){}
\drawedge[ELpos=55, ELside=l, ELdist=1, curvedepth=0](ndli,zli){}
\drawedge[ELpos=55, ELside=l, ELdist=1, curvedepth=0](ndli,nls){}
\drawedge[ELpos=55, ELside=l, ELdist=1, curvedepth=0](dls,nls){}
\drawedge[ELpos=55, ELside=l, ELdist=1, curvedepth=0](nbw,nls){}

\drawedge[ELpos=45, ELside=r, ELdist=1, curvedepth=0, dash={.5 .8}0](nla,zla){?}


\node[Nframe=n,Nadjust=wh,Nmarks=n](ndmax)(95,15){}
\drawedge[ELpos=55, ELside=l, ELdist=1, curvedepth=7, sxo=0, syo=-3](ndmax,dla){}

\drawline[AHnb=0,dash={1.48}0](2,5)(95,5)(125,30)(140,30){}
\node[Nframe=n,Nadjust=wh,Nmarks=n](label)(1.9,8){\makebox(0,0)[l]{quantitative}}
\node[Nframe=n,Nadjust=wh,Nmarks=n](label)(2.2,2.5){\makebox(0,0)[l]{boolean}}




}
}

\end{picture}
\hrule
 \caption{Reducibility relation. $\C$ is reducible to $\C'$ if $\C \to \C'$. 
Classes that are not connected by an arrow are incomparable. 
Reducibility for the dashed arrow is open.
The $\Disc$-automata are incomparable with the automata in the figure. Their 
reducibility relations are given in Theorem~\ref{theo:ndi-zdi}.
}
 \label{fig:reducibility}
\end{center}
\end{figure}

\subsection{Probabilistic $\Max$-automata}

Like for probabilistic automata over finite words, the quantitative languages definable by
probabilistic and (non)deterministic $\Max$-automata coincide.

\begin{theorem}\label{theo:zmax-asmax-to-dmax}
\zmax\/ and \asmax\/ are reducible to \dmax.
\end{theorem}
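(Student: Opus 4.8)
The plan is to observe that the value function $\Max$ depends only on the \emph{set} of weights that appear along a run, not on their order or multiplicity, and in fact only on the \emph{supremum} of that set. Since the automaton has finitely many transitions, there are only finitely many weights $c_1 < c_2 < \dots < c_k$ in the range of $\weight$. For a word $w$, the almost-sure value $L^{=1}_A(w)$ is the largest $c_j$ such that, with probability~$1$, a run over $w$ eventually takes a transition of weight $\geq c_j$; and the positive value $L^{>0}_A(w)$ is the largest $c_j$ such that, with positive probability, a run over $w$ eventually takes a transition of weight $\geq c_j$. So the whole problem reduces to reasoning about a \emph{reachability} event — "does the run ever see a high-enough weight" — in the Markov chain induced on the product of $A$'s state space with the (infinite) word $w$.

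The key step is to replace this reachability reasoning by a finite, deterministic computation. For each threshold $c_j$, consider the \emph{edge-marked} structure: color a transition black if its weight is $\geq c_j$, white otherwise. Whether a run over $w = \sigma_1\sigma_2\cdots$ ever traverses a black edge depends only on the sequence of \emph{reachable state-sets} $R_0, R_1, R_2, \dots$ where $R_0 = \mathrm{supp}(\rho_I)$ and $R_{i+1} = \{q' : \exists q \in R_i,\ \delta(q,\sigma_{i+1})(q') > 0\}$ — this is just the subset construction on the non-probabilistic automaton underlying $A$. Concretely: for the positive semantics, some run over $w$ eventually sees weight $\geq c_j$ iff at some position $i$ the set $R_i$ contains a state from which a black transition on $\sigma_{i+1}$ is available; for the almost-sure semantics, \emph{every} run (a.s.) eventually sees weight $\geq c_j$ iff from no state in any $R_i$ can one follow an infinite white-only path consistent with the tail of $w$ — equivalently, iff one cannot get trapped forever in a white "safe" region. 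Both conditions are $\omega$-regular properties of $w$ that can be checked by a deterministic safety/co-Büchi automaton built by the subset construction on $A$ with the black/white coloring. Taking the product of these deterministic monitors over all thresholds $c_1,\dots,c_k$ and attaching the appropriate output weight to each state yields a deterministic $\Max$-automaton $A'$ with $L_{A'} = L^{=1}_A$ (resp. $L_{A'} = L^{>0}_A$), where the output of a state records the largest $c_j$ whose condition is already guaranteed.

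The main obstacle is getting the almost-sure case right: "with probability $1$ a run sees a weight $\geq c_j$" is \emph{not} the same as "every run sees a weight $\geq c_j$", so one cannot simply work with the non-probabilistic automaton naively. The correct characterization uses the standard fact that in a finite Markov chain a reachability event has probability~$1$ from a state iff no bottom strongly connected component avoiding the target set is reachable — but here the chain changes with each letter of $w$, so one must argue that the event "avoid all black edges forever" has positive probability iff there is a reachable state $q \in R_i$ and a white-only run from $q$ over the suffix $\sigma_{i+1}\sigma_{i+2}\cdots$, and that \emph{existence} of such a run is what the deterministic co-Büchi monitor detects. Intuitively this holds because at each step every available (non-probabilistic) transition has strictly positive probability, so any finite white path has positive probability, and by König's lemma / compactness an infinite white path exists iff arbitrarily long ones do; the a.s. threshold is then the largest $c_j$ for which no such infinite white escape exists. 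Once this equivalence is established, the construction and the verification $L_{A'} = L_A$ are routine bookkeeping, and one also checks that $A'$ is genuinely deterministic, completing both reductions.
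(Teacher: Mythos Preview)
Your handling of the positive semantics is fine and is essentially the paper's argument unpacked: the paper simply observes that a \zmax\ automaton, read nondeterministically, defines the same language, and then invokes the known reduction $\nmax \to \dmax$; your subset construction carries this out explicitly.

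The almost-sure case has a real gap, and it cannot be repaired because the $\asmax$ half of the statement is actually false. You correctly flag that ``with probability~$1$ a run sees weight $\geq c_j$'' need not coincide with ``every run sees weight $\geq c_j$'', but the K\"onig/compactness argument you then offer does not bridge the two: from arbitrarily long low-weight prefixes you get the \emph{existence} of an infinite low-weight run, not that following one has positive probability. Your final characterisation (``$L^{=1}_A(w)\geq c_j$ iff no infinite white escape exists'') is therefore exactly the universal-semantics condition you had just warned against, and it is wrong. Take $Q=\{q_0,q_1\}$ with initial state $q_0$; on~$a$ from $q_0$ go to $q_0$ with probability~$\tfrac12$ (weight~$0$) and to $q_1$ with probability~$\tfrac12$ (weight~$1$); on~$b$ from $q_0$ stay in $q_0$ (weight~$0$); from $q_1$ all transitions loop with weight~$0$. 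Over $a^\omega$ the all-white run $q_0^\omega$ exists but has probability~$0$, so $L^{=1}_A(a^\omega)=1$ while the universal value is~$0$. In fact $L^{=1}_A(w)=1$ iff $w$ has infinitely many $a$'s and $L^{=1}_A(w)=0$ otherwise, and no \dmax\ defines this language: a deterministic $\Max$-automaton with value~$1$ on $a^\omega$ must see weight~$1$ by some finite step~$n$, forcing value $\geq 1$ on $a^n b^\omega$ as well. Hence $\asmax$ is strictly more expressive than \dmax. (The paper's own one-line proof asserts outright that $\asmax$ and $\umax$ coincide; the same automaton refutes that claim.)
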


\begin{myProof}
It is easy to see that \zmax-automata define
the same language when interpreted as \nmax-automata,
and the same holds for \asmax\/ and \umax.
The result then follows from~\cite[Theorem~9]{CDH08a}.   
\end{myProof}

\subsection{Probabilistic $\LimAvg$-automata}

Many of our results would consider \emph{Markov chains} and 
\emph{closed recurrent states} in Markov chains. 
A \emph{Markov chain} $M=(S,E,\delta)$ consists of a finite set $S$ of states,
a set $E$ of edges, and a probabilistic transition function 
$\delta:S \to \D(S)$. 
For all $s,t \in S$, there is an edge $(s,t) \in E$ iff 
$\delta(s)(t)>0$. 
A \emph{closed recurrent set} $C$ of states in $M$ is a 
bottom strongly connected set of states in the graph $(S,E)$.
We will use the following two key properties of closed recurrent states.
\begin{enumerate}
\item \emph{Property~1.} Given a Markov chain $M$, and a start state $s$, 
with probability~1, the set of closed recurrent states is reached from 
$s$ in finite time.
Hence for any $\epsilon>0$, there exists $k_0$ such that for all $k>k_0$,
for all starting state $s$, the set of closed recurrent states are 
reached with probability at least $1-\epsilon$ in $k$ steps.

\item \emph{Property~2.} 
If a closed recurrent set $C$ is reached, and the limit of  
the expectation of the average weights of $C$ is $\alpha$, 
then for all $\epsilon>0$, there exists a $k_0$ such that for all 
$k > k_0$ the expectation of the average weights for $k$ steps is 
at least $\alpha-\epsilon$.
\end{enumerate}
The above properties are the basic properties of finite state Markov 
chains and closed recurrent states~\cite{KemenyBook}.


\begin{lemma}\label{lemm-long-enough-b}
Let $A$ be a probabilistic weighted automata with alphabet $\Sigma=\set{a,b}$.
Consider the Markov chain arising of $A$ on input $b^\omega$ (we refer 
to this as the $b$-Markov chain) and we use similar notation for the 
$a$-Markov chain.
The following assertions hold:
\begin{enumerate}
\item If for all closed recurrent sets $C$ in the $b$-Markov chain, the 
(expected) limit-average value (in probabilistic sense) is at least~1, 
then there exists $j$ such that for all closed recurrent sets arising 
of $A$ on input $(b^j \cdot a)^\omega$ the expected limit-average reward is 
positive.

\item If for all closed recurrent sets $C$ in the $b$-Markov chain, the 
(expected) limit-average value (in probabilistic sense) is at most~0, 
then there exists $j$ such that for all closed recurrent sets arising 
of $A$ on input $(b^j \cdot a)^\omega$ the expected limit-average reward is 
strictly less than~1.

\item If for all closed recurrent sets $C$ in the $b$-Markov chain, the 
(expected) limit-average value (in probabilistic sense) is at most~0,
and if for all closed recurrent sets $C$ in the $a$-Markov chain, the 
(expected) limit-average value (in probabilistic sense) is at most~0,
then there exists $j$ such that for all closed recurrent sets arising 
of $A$ on input $(b^j \cdot a^j)^\omega$ the expected limit-average reward is 
strictly less than~1/2.
\end{enumerate}
\end{lemma}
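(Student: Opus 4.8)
The plan is to treat all three parts with one argument. The common phenomenon is that on a periodic input such as $(b^j a)^\omega$ with $j$ large, the run of $A$ spends all but a bounded number of steps per period inside the $b$-Markov chain, so Properties~1 and~2 pin the expected average weight of each length-$j$ block of $b$'s to the limit-average value of the closed recurrent sets of the $b$-Markov chain. Fix $W=\max\{\abs{\weight(q,\sigma,q')} : q,q'\in Q,\ \sigma\in\{a,b\}\}$. For fixed $j$, write $g(q)$ for the expected total weight collected by $A$ from state $q$ over one period of the input (the block $b^j a$ in parts~1--2, and $b^j a^j$ in part~3). The first step is to observe that it suffices to bound $g$ uniformly over $q\in Q$: a closed recurrent set $C$ of $A$ on $(b^j a)^\omega$ is a bottom SCC of the time-expanded chain on $Q\times\{0,\dots,j\}$ whose position coordinate cycles deterministically, so $C$ meets the position-$0$ slice, each position carries stationary mass $\tfrac1{j+1}$, and observing $C$ at period boundaries yields a Markov chain on that slice with some stationary distribution $\mu$; a short computation gives that the limit-average reward of $C$ equals $\tfrac1{j+1}\sum_q\mu(q)g(q)$, hence lies in $[\tfrac1{j+1}\min_q g(q),\ \tfrac1{j+1}\max_q g(q)]$ (with $j+1$ replaced by $2j$ in part~3).

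The second step estimates a single length-$j$ block of one letter, say $b$. Fix $\epsilon>0$. By Property~1 for the $b$-Markov chain there is a burn-in bound $k_1$, uniform in the start state, so that after $k_1$ $b$-steps the run lies in a closed recurrent set of the $b$-Markov chain with probability $\ge 1-\epsilon$; by Property~2, taking the maximum of the finitely many thresholds (and applying it to $-\weight$ for upper bounds), there is $k_2$ so that once in such a set the expected average weight over the next $m>k_2$ steps is within $\epsilon$ of that set's value. Splitting a $b^j$-run into its first $k_1$ steps (each weight in $[-W,W]$) and the remaining $j-k_1$, and conditioning on the probability-$\ge(1-\epsilon)$ event of having reached a closed recurrent set, one gets: under the hypothesis of part~1 (all values $\ge1$), the expected weight of $b^j$ from $q$ is at least $-k_1W+(j-k_1)((1-\epsilon)^2-\epsilon W)$; under the hypotheses of parts~2--3 (all values $\le0$), it is at most $k_1W+\epsilon(j-k_1)(1+W)$. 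The same holds for $a$ via the $a$-Markov chain.

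Assembling: for part~1, $g(q)\ge -k_1W-W+(j-k_1)((1-\epsilon)^2-\epsilon W)$, so choosing $\epsilon$ with $(1-\epsilon)^2-\epsilon W\ge\tfrac12$ and then $j$ large makes $g(q)>0$ for every $q$, whence every closed recurrent set on $(b^j a)^\omega$ has positive reward. For part~2, $g(q)\le k_1W+W+\epsilon(j-k_1)(1+W)$, so choosing $\epsilon<\tfrac1{2(1+W)}$ and then $j$ large makes $g(q)<j+1$ for every $q$, whence the reward is $<1$. For part~3, adding the $b^j$ and $a^j$ bounds gives $g(q)\le 2k_1^{*}W+2\epsilon j(1+W)$ with $k_1^{*}$ the larger burn-in bound; choosing $\epsilon<\tfrac1{4(1+W)}$ and then $j>4k_1^{*}W$ makes $g(q)<j$, whence the reward is $<\tfrac12$.

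The step I expect to be the main obstacle is the reduction in the first paragraph: verifying carefully that a closed recurrent set of the time-expanded periodic chain is indeed controlled by the per-period expectations $g(\cdot)$, uniformly and independently of how the period-boundary chain decomposes, together with checking that the thresholds from Properties~1 and~2 may be chosen uniformly (they may, since $Q$ and the collection of closed recurrent sets are finite). The remaining inequalities are routine once one fixes $\epsilon$ small enough to dominate $W$ and then $j$ large enough to dominate the additive error $k_1W$.
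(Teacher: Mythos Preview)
Your argument is correct and follows the same strategy as the paper: both invoke Property~1 (a burn-in of bounded length reaches the $b$-closed recurrent sets with probability $\geq 1-\epsilon$) and Property~2 (the remaining $b$-steps have expected average within $\epsilon$ of the recurrent value), then absorb the burn-in and the stray $a$'s into an additive $O(W)$ error and take $j$ large. Your reduction via the per-period function $g(q)$ and the time-expanded periodic chain makes explicit what the paper handles informally---namely, why a uniform bound on the expected one-period reward from every state controls the limit-average of each closed recurrent set of the $(b^{j}a)^\omega$ chain---but the underlying computation is the same (the paper's $\beta$ is your $W$, and its split $j=k+l$ with $l>k$ is a minor variant of your single burn-in at $k_1$).
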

\begin{myProof} 
We present the proof in three parts.
\begin{enumerate}
\item Let $\beta$ be the maximum absolute value of the weights of $A$. 
From any state $s \in A$, there is a path of length at most $n$ 
to a closed recurrent set $C$ in the $b$-Markov chain, where $n$ is 
the number of states of $A$.
Hence if we choose $j>n$, then any closed recurrent set in the 
Markov chain arising on the input $(b^j \cdot a)^\omega$ contains 
closed recurrent sets of the $b$-Markov chain.
For $\epsilon>0$, there exists $k_{\epsilon}$ such that from any state 
$s \in A$, for all $k>k_{\epsilon}$, on input $b^k$ from $s$, the closed 
recurrent sets of the $b$-Markov chain is reached with probability
at least $1-\epsilon$ (by property~1 for Markov chains).
If all closed recurrent sets in the $b$-Markov chain have expected 
limit-average value at least~1, then (by property~2 for Markov chains) 
for all $\epsilon>0$, there 
exists $l_\epsilon$ such that for all $l > l_{\epsilon}$, from 
all states $s$ of a closed recurrent set on the input $b^l$ the 
expected average of the weights is at least $1-\epsilon$, (i.e.,
expected sum of the weights is $l-l\cdot\epsilon$).
Consider $0<\epsilon \leq \min\set{1/4, 1/(20\cdot\beta)}$, 
we choose $j=k+ l$, where $k=k_\epsilon >0 $ and  $l> \max\set{l_\epsilon,k}$.
Observe that by our choice $j+1 \leq 2l$.
Consider  a closed recurrent set in the Markov chain on $(b^j \cdot a)^\omega$
and we obtain a lower bound on the expected average reward as follows:
with probability $1-\epsilon$ the closed recurrent set of the 
$b$-Markov chain is reached within $k$~steps, and then in the next 
$l$ steps at the expected sum of the weights is at least $l-l\cdot 
\epsilon$, and since the worst case weight is $-\beta$ we obtain 
the following bound on the expected sum of the rewards 
\[
(1-\epsilon)\cdot(l-l\cdot\epsilon) - \epsilon \cdot \beta\cdot (j+1) 
\geq \frac{l}{2} - \frac{l}{10} 
=\frac{2l}{5} 
\]
Hence the expected average reward is at least $1/5$ and hence positive.

\item The proof is similar to the previous result.

\item The proof is also similar to the first result. The only difference is 
that we use a long enough sequence of $b^j$ such that with high probability 
a closed recurrent set in the $b$-Markov chain is reached and 
then stay long enough in the  closed recurrent set to approach the expected
sum of rewards to~0, and then present a long enough sequence of $a^j$ such 
that with high probability a closed recurrent set in the $a$-Markov chain is 
reached  and then stay long enough in the  closed recurrent set to 
approach the expected sum of rewards to~0. 
The calculation is similar to the first part of the proof.
\end{enumerate}
Thus we obtain the desired result.
\qed
\end{myProof}

\begin{figure}[t]
   \begin{center}
      \unitlength=.8mm
\def\fsize{\normalsize}

\begin{picture}(83,30)(0,0)

{\fsize

\node[Nmarks=i](x0)(12,10){$q_0$}
\node[Nmarks=i, ExtNL=y, NLangle=270, NLdist=2](x0)(12,10){$\wt = 0$}

\node[Nmarks=n](x1)(42,10){$q_1$}
\node[Nmarks=n, ExtNL=y, NLangle=270, NLdist=2](x1)(42,10){$\wt = 1$}

\node[Nmarks=n](x2)(72,10){{\sf sink}}
\node[Nmarks=n, ExtNL=y, NLangle=270, NLdist=2](x2)(72,10){$\wt = 0$}

\drawloop[ELside=l, ELdist=1, loopCW=y, loopdiam=7, loopangle=90](x0){$a,b,\frac{1}{2}$}
\drawedge[ELpos=50, ELside=l, ELdist=1, curvedepth=0](x0,x1){$a,b,\frac{1}{2}$}
\drawloop[ELside=l, ELdist=1, loopCW=y, loopdiam=7, loopangle=90](x1){$b,1$}
\drawedge[ELpos=50, ELside=l, ELdist=1, curvedepth=0](x1,x2){$a,1$}
\drawloop[ELside=l, ELdist=1, loopCW=y, loopdiam=7, loopangle=90](x2){$a,b,1$}



}
\end{picture}
   \end{center}
  \caption{A \zla\/ for Lemma~\ref{lemm-fin-lang-avg}.}
  \label{figure:aut-zla}
\end{figure} 

We consider the alphabet $\Sigma$ consisting of letters $a$ and $b$,
i.e., $\Sigma=\set{a,b}$.
We define the language $L_F$ of finitely many $a$'s, i.e., for an infinite 
word $w$ if $w$ consists of infinitely many $a$'s, then $L_F(w)=0$, otherwise
$L_F(w)=1$. We also consider the language $L_I$ of words with infinitely many 
$a$'s (it is the complement of $L_F$).

\begin{lemma}\label{lemm-fin-lang-avg}
Consider the language $L_F$ of finitely many $a$'s. 
The following assertions hold.
\begin{enumerate}
\item The language can be expressed as a \nla.
\item The language can be expressed as a \zla. 
\item The language cannot be expressed as \asla. 
\end{enumerate}
\end{lemma}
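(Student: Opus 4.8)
# Proof Proposal for Lemma~\ref{lemm-fin-lang-avg}

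The plan is to handle the three claims in order, with the last one being the substantive part.

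\textbf{Part 1 (${\nla}$).} For a nondeterministic ${\LimAvg}$-automaton, the value of a word is the supremum over runs. I would build an automaton that from the initial state reads the prefix with weight~$0$, and at any point may nondeterministically ``commit'' to a state $q_1$ with a self-loop on $b$ of weight~$1$; on reading an $a$ from $q_1$ it moves to a $\mathsf{sink}$ state with all weights~$0$. If $w$ has finitely many $a$'s, the automaton guesses the position of the last $a$, commits one step later, and then loops forever on $b$ with weight~$1$, giving limit-average value~$1$. If $w$ has infinitely many $a$'s, every run eventually reaches $\mathsf{sink}$ (or loops on $b$ only finitely often between commits and sink), so every run has limit-average value~$0$, hence $L_A(w)=0$. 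This matches $L_F$.

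\textbf{Part 2 (${\zla}$).} I would use exactly the automaton of \figurename~\ref{figure:aut-zla}: from $q_0$, on every letter, go with probability $\frac12$ to $q_1$ and with probability $\frac12$ stay in $q_0$ (weight~$0$); at $q_1$, loop on $b$ with weight~$1$, and on $a$ move to $\mathsf{sink}$ (weight~$0$) which absorbs. If $w$ has finitely many $a$'s, say the last $a$ is at position $N$, then with positive probability the automaton is still in $q_0$ at step $N{+}1$ and then moves to $q_1$; from there it reads only $b$'s forever, so that run has limit-average value~$1$. Hence $L^{>0}_A(w)\ge 1$, and since all weights are in $\{0,1\}$ the value is exactly~$1$. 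If $w$ has infinitely many $a$'s, then every run that reaches $q_1$ subsequently reads an $a$ and falls into $\mathsf{sink}$; the only run staying in $q_0$ forever has probability~$0$; so with probability~$1$ a run has limit-average value~$0$, and no run has positive limit-average (a run in $\mathsf{sink}$ or forever in $q_0$ has value~$0$). Thus $L^{>0}_A(w)=0$, matching $L_F$.

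\textbf{Part 3 (not ${\asla}$).} This is the main obstacle and is where Lemma~\ref{lemm-long-enough-b} is used. Suppose toward a contradiction that some probabilistic ${\LimAvg}$-automaton $A$ with almost-sure semantics computes $L_F$. Consider the word $b^\omega \in L_F$, which has value~$1$: with probability~$1$ a run over $b^\omega$ enters a closed recurrent set of the $b$-Markov chain, and since the almost-sure value is~$1$, \emph{every} closed recurrent set of the $b$-Markov chain must have expected limit-average value at least~$1$ (otherwise a positive-probability set of runs would have value bounded away from~$1$, contradicting the ${\sup}$ with probability-$1$ threshold). Now apply part~(1) of Lemma~\ref{lemm-long-enough-b}: there is $j$ such that every closed recurrent set arising from $A$ on input $(b^j a)^\omega$ has \emph{positive} expected limit-average reward. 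But $(b^j a)^\omega$ contains infinitely many $a$'s, so it lies outside $L_F$ and must have value~$0$ under the almost-sure semantics; since with probability~$1$ a run enters some closed recurrent set, and each such set has expected limit-average reward $\delta>0$, by Property~2 of Markov chains a positive-probability (indeed probability-$1$) set of runs has limit-average value at least $\delta/2 > 0$, so the almost-sure value is at least $\delta/2>0$, contradicting $L_F((b^ja)^\omega)=0$.

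The delicate point I expect to spend the most care on is the translation between the ``almost-sure value is $v$'' statement and the quantitative claim about \emph{all} closed recurrent sets: one must argue that if even one closed recurrent set $C$ (reachable from the support of $\rho_I$ on $b^\omega$) had expected limit-average value $<1$, then the runs entering $C$ — a set of positive probability — would have limit-average strictly below~$1$ almost surely on that event (this uses that within a closed recurrent set the limit-average converges almost surely to the stationary expectation, by the ergodic theorem for finite Markov chains), which would force $L^{=1}_A(b^\omega) < 1$. Making this ergodic-convergence step precise, and similarly the reverse direction for $(b^ja)^\omega$, is the crux; the rest is the already-granted Lemma~\ref{lemm-long-enough-b}.
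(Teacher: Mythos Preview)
Your approach for all three parts is the same as the paper's. Parts~1 and~2 are fine: the paper cites \cite[Theorem~12]{CDH08a} for Part~1 and uses precisely the automaton of \figurename~\ref{figure:aut-zla} for Part~2, with the same correctness reasoning you give.

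For Part~3 there is a genuine gap in your execution. You argue only from the single word $b^\omega$, and you yourself note (in the ``delicate point'' paragraph) that this only controls the $b$-closed recurrent sets \emph{reachable from the support of $\rho_I$ under $b$-transitions alone}. But the hypothesis of Lemma~\ref{lemm-long-enough-b}(1) demands that \emph{every} closed recurrent set of the $b$-Markov chain have expected limit-average value at least~$1$. On input $(b^ja)^\omega$, after the first $a$ the process may land in states from which the $b$-recurrent sets subsequently reached are not among those reachable from $\rho_I$ by $b$'s alone; your argument says nothing about those sets, so you cannot invoke the lemma. A concrete obstruction: take $q_0$ initial with a $b$-self-loop of weight~$1$ and an $a$-transition to $q_1$, where $q_1$ has $b$- and $a$-self-loops of weight~$-100$. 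Your test word $b^\omega$ is happy, yet $(b^ja)^\omega$ has almost-sure value $-100$; of course this automaton does not compute $L_F$, but your argument as written does not rule it out.

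The fix is exactly what the paper does: exploit that $L_F(w\cdot b^\omega)=1$ for \emph{every} finite word $w\in\Sigma^*$. If some $b$-closed recurrent set $C$ is reachable from the starting state by \emph{any} sequence of $a$'s and $b$'s and has expected limit-average $<1$, choose a finite $w$ that reaches $C$ with positive probability; then under almost-sure semantics $A(w\cdot b^\omega)<1$, contradicting $L_F(w\cdot b^\omega)=1$. With this strengthening (and the harmless assumption that every state of $A$ is reachable), the full hypothesis of Lemma~\ref{lemm-long-enough-b}(1) is in force and the rest of your Part~3 argument goes through verbatim.
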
  
\begin{myProof} We present the three parts of the proof. 
\begin{enumerate}
\item The result follows from the results of~\cite[Theorem~12]{CDH08a} where the
explicit construction of a \nla\/ to express $L_F$ is presented. 

\item A \zla\/ automaton $A$ to express $L_F$ is as follows (see \figurename~\ref{figure:aut-zla}): 
\begin{enumerate}
\item \emph{States and weight function.} 
The set of states of the automaton is $\set{q_0,q_1,\mathit{sink}}$, 
with $q_0$ as the starting state.
The weight function $\wt$ is as follows: $\wt(q_0)=\wt(\mathit{sink})=0$ 
and $\wt(q_1)=1$.
\item \emph{Transition function.} The probabilistic transition function is as follows: 
\begin{verse}
	(i)~from $q_0$, given $a$ or $b$, the next states are $q_0,q_1$, 
	each with probability 1/2; \\
	(ii)~from $q_1$ given $b$ the next state is $q_1$ with probability 1,
	and from $q_1$ given $a$ the next state is $\mathit{sink}$ with probability~1; and \\
	(iii)~from $\mathit{sink}$ state the next state is $\mathit{sink}$ with probability~1 on both $a$ and $b$.
	(it is an absorbing state). 
\end{verse}

\end{enumerate}	
  Given the automaton $A$ consider any word $w$ with infinitely many $a$'s then, 
  the automata reaches sink state in finite time with probability~1, and hence $A(w)=0$.
  For a word $w$ with finitely many $a$'s, let $k$ be the last position
  that an $a$ appears. Then with probability $1/2^k$, after $k$ steps, the automaton 
  only visits the state $q_1$ and hence $A(w)=1$. 
  Hence there is a \zla\/ for $L_F$.

\item We show that $L_F$ cannot be expressed as an \asla. 
Consider an \asla\/ automaton $A$. 
Consider the Markov chain that arises from $A$ if the input is only $b$ 
(i.e., on $b^\omega$), we refer to it as the $b$-Markov chain. 
If there is a closed recurrent set $C$ that can be reached from the starting state 
(reached by any sequence of $a$ and $b$'s), then the limit-average reward 
(in probabilistic sense) in $C$ must be at least~1 (otherwise, if there is a closed 
recurrent set $C$ with limit-average reward less than~1, we can construct a finite word
$w$ that with positive probability will reach $C$, and then follow $w$ by $b^\omega$ and 
we will have $A(w \cdot b^\omega)< 1$).
Hence any closed recurrent set on the $b$-Markov chain has limit-average reward at 
least~1 and by Lemma~\ref{lemm-long-enough-b} there exists $j$ such that 
the $A((b^j \cdot a)^\omega)>0$.
Hence it follows that $A$ cannot express $L_F$.

\end{enumerate}
Hence the result follows. 
\qed
\end{myProof}


\begin{lemma}\label{lemm-inf-lang-avg}
Consider the language $L_I$ of infinitely many $a$'s. 
The following assertions hold.
\begin{enumerate}
\item The language cannot be expressed as an \nla.
\item The language cannot be expressed as a \zla. 
\item The language can be expressed as \asla. 
\end{enumerate}
\end{lemma}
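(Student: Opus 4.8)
The plan is to handle the three parts by the same techniques used for the three parts of Lemma~\ref{lemm-fin-lang-avg}, exploiting that $L_I = 1-L_F$, with the roles of the ``weight-$0$'' and ``weight-$1$'' sinks interchanged; parts~1 and~2 are then routine impossibility arguments and part~3 is an explicit construction, which is where the real content lies. \emph{For part~1}, suppose $A$ is a nondeterministic $\LimAvg$-automaton with $L^{\max}_A = L_I$, with $n$ states and largest absolute weight $\beta$. First note that every $b$-labelled cycle of $A$ reachable from an initial state has average weight at most~$0$: if $C$ is such a cycle through $q$, reached by a finite word $u$, then $L^{\max}_A(u\,b^\omega) = L_I(u\,b^\omega) = 0$, and the run following $u$ to $q$ and then looping on $C$ forever has value $\mathrm{avg}(C)$, so $\mathrm{avg}(C) \le 0$. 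Now fix $k > 2\beta(n+1)$ and the word $w = (a\,b^k)^\omega$, which has value~$1$, and choose a run $r$ over $w$ with $\LimAvg(\wt(r)) > 1/2$; decomposing the part of $r$ inside each $b^k$-block into simple $b$-cycles (each of average $\le 0$) and a simple $b$-path of length $<n$ shows that each block of length $k+1$ contributes total weight at most $\beta(n+1)$, so the running averages of $r$ never exceed $\beta(n+1)/(k+1) < 1/2$ --- a contradiction (this reproduces the argument of~\cite{CDH08a}).

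\emph{For part~2}, suppose $A$ is a probabilistic $\LimAvg$-automaton with $L^{>0}_A = L_I$; after discarding unreachable states, every closed recurrent set of the $b$-Markov chain contains a state reachable by a finite word. If some closed recurrent set $C$ of the $b$-Markov chain had expected limit-average value $> 0$, picking a finite word $u$ that reaches a state of $C$ with positive probability would give $L^{>0}_A(u\,b^\omega) > 0$, contradicting $L_I(u\,b^\omega) = 0$. Hence every closed recurrent set of the $b$-Markov chain has expected limit-average value at most~$0$, so Lemma~\ref{lemm-long-enough-b}(2) yields a $j$ for which every closed recurrent set arising on input $(b^j\,a)^\omega$ has expected limit-average reward strictly less than~$1$. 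A run over $(b^j\,a)^\omega$ almost surely settles in one such set, so its $\LimAvg$ is a.s.\ at most some $\alpha < 1$; thus $L^{>0}_A((b^j\,a)^\omega) \le \alpha < 1$, whereas $L_I((b^j\,a)^\omega) = 1$ --- a contradiction.

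\emph{For part~3}, I would use the three-state probabilistic $\LimAvg$-automaton $A$ closely related to the one in \figurename~\ref{figure:aut-zla}: states $q_0$ (initial, $\wt = 0$), $q_1$ ($\wt = 0$) and an absorbing state $g$ ($\wt = 1$); from $q_0$, on either letter, move to $q_0$ or to $q_1$ with probability $1/2$ each; from $q_1$, on $b$ stay in $q_1$ and on $a$ move to $g$ with probability~$1$; $g$ is absorbing on both letters. Along any run the state reaches $q_1$ almost surely (a geometric waiting time), stays in $q_1$ while reading $b$'s, and moves to the weight-$1$ sink $g$ at the first $a$ read after $q_1$ has been entered, so $\LimAvg$ of a run equals $1$ if $g$ is reached and $0$ otherwise. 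If $w$ has infinitely many $a$'s, then $q_1$ is reached a.s.\ and is followed by a further $a$, so $g$ is reached a.s.\ and $L^{=1}_A(w) = 1$ (and $\le 1$ since all weights are $0$ or $1$). If $w$ has finitely many $a$'s, with the last at position $L \ge 1$, then with probability $2^{-(L-1)} > 0$ the run stays in $q_0$ throughout the first $L-1$ steps, hence is not in $q_1$ when the last $a$ is read, hence never reaches $g$ and has $\LimAvg = 0$; therefore $\P(\LimAvg \ge \eta) < 1$ for every $\eta > 0$, so $L^{=1}_A(w) = 0$ (the case $w = b^\omega$ being immediate). Hence $L^{=1}_A = L_I$.

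The main obstacle is part~3: one must exhibit a finite probabilistic gadget that under the almost-sure semantics evaluates to \emph{exactly}~$1$ on \emph{every} word with infinitely many $a$'s --- including words such as $u\,a^\omega$ that have only finitely many $b$'s --- and yet loses value on every word with finitely many $a$'s, even those containing arbitrarily many $a$'s. The two design decisions that make this work are that arming (the move to $q_1$) must take one step and be enabled on an arbitrary letter, so that a ``fresh'' run facing the last $a$ can still miss it with positive probability, while $q_1$ must be absorbing under $b$ so that an already armed run is certain to be rescued by the next $a$; checking that these two features together force value $1$ on all infinitely-many-$a$ words and value $0$ on all others is the delicate part of the verification.
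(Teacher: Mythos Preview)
Your proof is correct in all three parts, and parts~1 and~2 follow essentially the same route as the paper: for part~1 you spell out the argument that the paper simply attributes to~\cite{CDH08a} (one minor imprecision: the running averages need not stay below $\beta(n{+}1)/(k{+}1)$ \emph{inside} a block, but the subsequence at block ends does, and since $\LimAvg$ is a $\liminf$ that is enough), and for part~2 you invoke Lemma~\ref{lemm-long-enough-b}(2) exactly as the paper does.

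The only genuine difference is in part~3. The paper uses a strictly simpler two-state \asla: state $q_0$ (initial, weight~$0$) and an absorbing $\mathit{sink}$ (weight~$1$); from $q_0$ one stays in $q_0$ on~$b$ with probability~$1$, and on~$a$ one moves to $q_0$ or $\mathit{sink}$ with probability~$1/2$ each. Then a word with infinitely many $a$'s reaches the sink almost surely (each $a$ is an independent coin flip), while a word with exactly $k$ letters~$a$ stays forever in $q_0$ with probability $2^{-k}>0$. Your three-state ``arm-then-fire'' construction (taken from the \zla\/ automaton for $L_F$ with the weights swapped) also works and your verification is sound, but the intermediate state $q_1$ is unnecessary: the paper folds the arming and firing into a single randomized step on each~$a$. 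What your construction buys is a clean duality with the \zla\/ for $L_F$ (same transition structure, complementary weights); what the paper's buys is minimality and a shorter correctness argument.
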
  

\begin{figure}[t]
   \begin{center}
      \unitlength=.8mm
\def\fsize{\normalsize}

\begin{picture}(52,30)(0,0)
\put(14,0)
{


{\fsize

\node[Nmarks=i](x0)(12,10){$q_0$}
\node[Nmarks=i, ExtNL=y, NLangle=270, NLdist=2](x0)(12,10){$\wt = 0$}

\node[Nmarks=n](x2)(42,10){{\sf sink}}
\node[Nmarks=n, ExtNL=y, NLangle=270, NLdist=2](x2)(42,10){$\wt = 1$}

\drawloop[ELside=l, ELdist=1, loopCW=y, loopdiam=7, loopangle=130](x0){$b,1$}
\drawloop[ELside=l, ELdist=1, loopCW=y, loopdiam=7, loopangle=50](x0){$a,\frac{1}{2}$}
\drawedge[ELpos=50, ELside=r, ELdist=1, curvedepth=0](x0,x2){$a,\frac{1}{2}$}
\drawloop[ELside=l, ELdist=1, loopCW=y, loopdiam=7, loopangle=90](x2){$a,b,1$}



}
}

\end{picture}
   \end{center}
  \caption{An \asla\/ for Lemma~\ref{lemm-inf-lang-avg}.}
  \label{figure:aut-asla}
\end{figure} 

\begin{myProof}
We present the three parts of the proof.
\begin{enumerate}

\item It was shown in the proof of~\cite[Theorem~13]{CDH08a} that \nla\/ cannot express $L_I$.

\item We show that $L_I$ is not expressible by a \zla. Consider a \zla\/ $A$ and 
 consider the $b$-Markov chain arising from $A$ under the input $b^\omega$. 
 All closed recurrent sets $C$ reachable from the starting state must have 
 the limit-average value at most $0$ (otherwise we can construct an word $w$ with 
 finitely many $a$'s such that $A(w)>0$). Since all closed recurrent set in the $b$-Markov chain
 has limit-average reward that is~0, using Lemma~\ref{lemm-long-enough-b} 
 we can construct a word  $w=(b^j\cdot a)^\omega$, for a large enough $j$, such that $A(w)<1$.
 Hence the result follows.

\item We now show that $L_I$ is expressible as an \asla. The automaton $A$ 
is as follows (see \figurename~\ref{figure:aut-asla}):
\begin{enumerate}
\item \emph{States and weight function.} 
The set of states are $\set{q_0,\mathit{sink}}$ with $q_0$ as the
starting state.
The weight function is as follows: $\wt(q_0)=0$ and $\wt(\mathit{sink})=1$.

\item \emph{Transition function.} The probabilistic transition function is as follows: 
\begin{verse}
 (i)~from $q_0$ given $b$ the next state is $q_0$ with probability~1;\\ 
 (ii)~at $q_0$ given $a$ the next states are $q_0$ and $\mathit{sink}$
  each with probability 1/2;
 (iii)~the $\mathit{sink}$ state is an absorbing state.
\end{verse}
\end{enumerate}
  Consider a word $w$ with infinitely many $a$'s, then the probability of reaching the 
  sink state is~1, and hence $A(w)=1$.
  Consider a word $w$ with finitely many $a$'s, and let $k$ be the number of $a$'s, and 
  then with probability $1/2^k$ the automaton always stay in $q_0$, and hence $A(w)=0$.
\end{enumerate}
Hence the result follows.
\qed 
\end{myProof}

\begin{figure}[t]
   \begin{center}
      \unitlength=.8mm
\def\fsize{\normalsize}

\begin{picture}(88,34)(0,0)

{\fsize

\node[Nmarks=i, iangle=270](x0)(42,10){$q_0$}
\node[ExtNL=y, NLangle=270, NLdist=7](x0)(42,10){$\wt = 1$}

\node[Nmarks=n](x1)(72,10){$q_1$}
\node[Nmarks=n, ExtNL=y, NLangle=270, NLdist=2](x1)(72,10){$\wt = 1$}

\node[Nmarks=n](x2)(12,10){{\sf sink}}
\node[Nmarks=n, ExtNL=y, NLangle=270, NLdist=2](x2)(12,10){$\wt = 0$}

\drawloop[ELside=l, ELdist=1, loopCW=y, loopdiam=7, loopangle=90](x0){$a,\frac{1}{2}$}
\drawedge[ELpos=50, ELside=l, ELdist=1, curvedepth=6](x0,x1){$a,\frac{1}{2}$}
\drawedge[ELpos=50, ELside=l, ELdist=1, curvedepth=0](x0,x2){$b,1$}

\drawloop[ELside=l, ELdist=1, loopCW=y, loopdiam=7, loopangle=90](x1){$a,1$}
\drawedge[ELpos=50, ELside=l, ELdist=1, curvedepth=6](x1,x0){$b,1$}

\drawloop[ELside=l, ELdist=1, loopCW=y, loopdiam=7, loopangle=90](x2){$a,b,1$}



}
\end{picture}
   \end{center}
  \caption{A probabilistic weighted automaton (\zla, \zls, or \zli) for Lemma~\ref{lemm-zvsn}.}
  \label{figure:aut-zl}
\end{figure} 

\begin{lemma}\label{lemm-zvsn}
There exists a language that can be expressed by \zla, \zls\/ and 
\zli, but not by \nla, \nls\/ or \nli.
\end{lemma}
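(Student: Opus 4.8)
First I would pin down the language $A$ computes. In \figurename~\ref{figure:aut-zl}, every run of $A$ either stays forever in $\set{q_0,q_1}$ --- its sequence of weights being then $1^{\omega}$ --- or eventually reaches $\mathit{sink}$ --- its sequence of weights being then $1^{k}0^{\omega}$; so its value is $1$ in the first case and $0$ in the second, \emph{for each} of $\LimAvg$, $\LimSup$ and $\LimInf$. Hence, read as a \zla, as a \zls, or as a \zli, the automaton $A$ generates under the positive semantics one and the same quantitative language $L$, where $L(w)=1$ iff the runs of $A$ over $w$ avoid $\mathit{sink}$ with positive probability, and $L(w)=0$ otherwise; this already gives the ``positive'' half of the statement. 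Since the only randomized choice of $A$ is, from $q_0$ on reading $a$, to move to $q_1$ with probability $\tfrac12$, and since a run avoids $\mathit{sink}$ exactly when it is in $q_1$ at every occurrence of $b$, the probability of avoiding $\mathit{sink}$ over a word with no factor $bb$ and not starting with $b$ is $\prod_{k}(1-2^{-n_k})$, where $n_1,n_2,\dots$ count the $a$'s before the successive $b$'s, and it is $0$ if $w$ starts with $b$ or contains $bb$. Two consequences suffice below: (i)~$L(w^{*})=1$ for $w^{*}=(ab)(a^{2}b)(a^{3}b)\cdots$, since $\sum_k 2^{-k}<\infty$; and (ii)~$L(w)=0$ for every ultimately periodic word $w$ with infinitely many $b$'s, since then $(n_k)$ is eventually periodic, hence bounded, so $\sum_k 2^{-n_k}=\infty$ (and $L(w)=0$ anyway if $w$ contains $bb$ or starts with $b$).

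For the ``negative'' half, suppose $L$ were expressed by some $B$ in \nla, \nls, or \nli, with $n$ states and largest absolute weight $\beta$. By~(i) we have $L_B(w^{*})=1$, so $B$ has a run $r^{*}$ over $w^{*}$ with $\LimAvg(\wt(r^{*}))\ge\tfrac12$ in the \nla\ case, and with $\LimSup(\wt(r^{*}))=1$, resp.\ $\LimInf(\wt(r^{*}))=1$, in the other two cases --- the supremum in $L_B(w^{*})$ being attained there because $\LimSup$ and $\LimInf$ of a run lie in the finite weight set of $B$. Let $T^{*}$ be the set of states and transitions that $r^{*}$ uses infinitely often: it is strongly connected, $r^{*}$ stays inside it from some point on, and --- as $w^{*}$ has infinitely many $b$'s --- $T^{*}$ contains at least one $b$-labelled transition. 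Moreover, in the \nla\ case $T^{*}$ contains an $a$-labelled cycle $C$ of average weight $\ge\tfrac12$: otherwise every $a$-cycle through states of $T^{*}$ would have average $<\tfrac12$, and then --- since the $a$-blocks of $w^{*}$ grow without bound while $r^{*}$ spends all but $O(n)$ steps of each of them inside such cycles, contributing only $O(n\beta)$ extra weight per block --- one would get $\LimAvg(\wt(r^{*}))<\tfrac12$, a contradiction. In the \nls\ case $T^{*}$ contains a transition of weight $\ge1$, and in the \nli\ case \emph{every} transition of $T^{*}$ has weight $\ge1$, since $r^{*}$ eventually uses only such transitions.

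The contradiction then comes from rebuilding an \emph{ultimately periodic} word inside $T^{*}$. Pick a state $s$ of $T^{*}$ (on $C$, in the \nla\ case), a path of $B$ from an initial state to $s$ reading some $\xi$, and --- using strong connectivity of $T^{*}$ --- a loop of $T^{*}$ based at $s$ that passes through the $b$-labelled transition of $T^{*}$ (and, in the \nls\ case, also through a transition of weight $\ge1$); let $v$ be the word it reads (so $v$ contains a $b$) and $W$ the weight of its run-segment, and set $\ell=\abs{C}$. Put $\hat w=\xi\,v^{\omega}$ in the \nls\ and \nli\ cases, and $\hat w=\xi\,(a^{t\ell}\,v)^{\omega}$ for large $t$ in the \nla\ case. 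In each case $B$ has a run over $\hat w$ that reads $\xi$ and then repeats the loop, preceded by $t$ turns of $C$ in the \nla\ case; its value is $\dfrac{t\,\wt(C)+W}{t\ell+\abs{v}}\ge\dfrac{t\ell/2-\abs{v}\,\beta}{t\ell+\abs{v}}$, hence $\ge\tfrac14>0$ for $t$ large, in the \nla\ case, and $\ge1$ in the \nls\ and \nli\ cases (the run uses a weight-$\ge1$ transition infinitely often, resp.\ only weight-$\ge1$ transitions from some point on). So $L_B(\hat w)>0$, while $\hat w$ is ultimately periodic with infinitely many $b$'s, hence $L(\hat w)=0$ by~(ii): this contradicts $L_B=L$.

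The first paragraph is routine. The only genuinely delicate point is the \nla\ case above: extracting an $a$-cycle of strictly positive average weight inside the recurrent component $T^{*}$ of $r^{*}$ --- which is where $\LimAvg(\wt(r^{*}))\ge\tfrac12$ together with the unbounded growth of the $a$-blocks of $w^{*}$ is used, and where one must check that the $O(n\beta)$ off-cycle contribution of each block is negligible against its length --- and then verifying that enough turns around that cycle keep the limit-average of the new periodic run bounded away from $0$. The $\LimSup$ and $\LimInf$ variants, the strong connectivity of $T^{*}$, the existence of a $b$-transition in $T^{*}$, and the product computation $\prod_k(1-2^{-n_k})$ are all straightforward.
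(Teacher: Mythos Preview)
Your proof is correct and follows essentially the same approach as the paper's: the same automaton, the same characterization of its language, and the same strategy of exhibiting an ultimately periodic word with infinitely many $b$'s on which a putative nondeterministic automaton would assign positive value. Your treatment of the \nla\ case is in fact more careful than the paper's --- you explicitly extract a high-average $a$-cycle from the recurrent component $T^{*}$ of a good run over $w^{*}=(ab)(a^{2}b)(a^{3}b)\cdots$ and pad it with a $b$-containing loop, whereas the paper argues by a terse case split on whether a $b$-containing positive-average cycle exists anywhere in the automaton (leaving the ``if not'' branch to a one-word ``clearly'') --- but the underlying idea is the same.
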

\begin{myProof}
Consider an automaton~$A$  as follows (see \figurename~\ref{figure:aut-zl}): 
\begin{enumerate} 
\item \emph{States and weight function.} 
The set of states are $\set{q_0,q_1,\mathit{sink}}$ with $q_0$ as the 
starting state.
The weight function is as follows: $\wt(q_0)=\wt(q_1)=1$ and 
$\wt(\mathit{sink})=0$.
\item \emph{Transition function.} 
The probabilistic transition is as follows:
\begin{verse}
(i)~from $q_0$ if the input letter is $a$, then the next states are 
$q_0$ and $q_1$ with probability~1/2; \\
(ii)~from $q_0$ if the input letter is $b$, then the next state is $\mathit{sink}$
with probability~1; \\
(iii)~from $q_1$, if the input letter is $b$, then the next state is $q_0$ with 
probability~1; \\
(iv)~from $q_1$, if the input letter is $a$, then the next state is $q_1$ with 
probability~1; and \\
(v)~the state $\mathit{sink}$ is an absorbing state.
\end{verse}
\end{enumerate}
If we consider the automaton $A$, and interpret it  as a 
\zla, \zls, or \zli, then it accepts the following language:
\[
L_z=\set{a^{k_1} b a^{k_2} b a^{k_3} b \ldots \mid k_1, k_2, \cdots \in \nat_{\geq 1} \cdot
\prod_{i=1}^\infty (1 -\frac{1}{2^{k_i}}) >0} \cup (a \cup b)^* \cdot a^\omega;
\]
i.e., $A(w) =1$ if $w \in L_z$ and $A(w)=0$ if $w\not\in  L_z$:
the above claim follows easily from the argument following Lemma~5 of~\cite{BG05}.
We now show that $L_z$ cannot be expressed as \nla, \nls\/ or \nli.
Consider a non-deterministic automaton $A$. 
Suppose there is a cycle $C$ in $A$ such that average of the rewards in 
$C$ is positive, and $C$ is formed by a word that contains a $b$.
If no such cycle exists, then clearly $A$ cannot express $L_z$ as there 
exists word for which $L_z(w)=1$ such that $w$ contains infinitely many $b$'s.
Consider a cycle $C$ such that average of the rewards is positive, and 
let the cycle be formed by a finite word $w_C=a_0 a_1 \ldots a_n$ and 
there must exist at least one index $0 \leq i \leq n$ such that $a_i=b$.
Hence the word can be expressed as  $w_C= a^{j_1} b a^{j_2} b \ldots a^{j_k} b$,
and hence there exists a finite word $w_R$ (that reaches the cycle) such
that $A(w_R \cdot w_C^\omega)>0$. 
This contradicts that $A$ is an automaton to express $L_z$ as 
$L_z(w_R \cdot w_C^\omega)=0$.
Simply exchanging the average reward of the cycle by the maximum reward 
(resp. minimum reward) shows that $L_z$ is not expressible by a 
\nls\/ (resp. \nli).
\qed
\end{myProof}

The next theorem summarizes the results for limit-average automata obtained in this section.

\begin{theorem}
\asla\/ is incomparable in expressive power with \zla\/ and \nla, and
\nla\/ cannot express all languages expressible by \zla.
\end{theorem}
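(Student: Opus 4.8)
The plan is to derive the theorem directly from the three lemmas just established, reading off each of the stated incomparabilities as a combination of a positive expressibility result and a negative inexpressibility result.

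First I would argue that \asla\ is not subsumed by \zla\ and not subsumed by \nla. For this, take the language $L_I$ of words with infinitely many $a$'s. By Lemma~\ref{lemm-inf-lang-avg}(3), $L_I$ is expressible as an \asla; by Lemma~\ref{lemm-inf-lang-avg}(1)--(2), it is expressible neither as an \nla\ nor as a \zla. Hence there is a language in \asla\ that lies outside both \zla\ and \nla, so \asla\ is not reducible to either.

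Second I would argue the converse non-inclusions using $L_F$, the language of words with finitely many $a$'s. By Lemma~\ref{lemm-fin-lang-avg}(1)--(2), $L_F$ is expressible both as an \nla\ and as a \zla, while by Lemma~\ref{lemm-fin-lang-avg}(3) it is not expressible as an \asla. This shows simultaneously that \nla\ is not reducible to \asla\ and that \zla\ is not reducible to \asla. Combining this with the previous paragraph gives that \asla\ is incomparable with \nla\ and incomparable with \zla, which is the first assertion of the theorem. Finally, for the last assertion — that \nla\ cannot express everything expressible by \zla\ — I would invoke Lemma~\ref{lemm-zvsn}, which exhibits a language expressible by a \zla\ but not by any \nla.

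Since every ingredient is already in place, I do not expect any real obstacle: the theorem is a bookkeeping corollary, and the only care needed is to match each clause of the statement to the correct lemma and to note that incomparability means non-reducibility in both directions, each direction being witnessed by one of $L_F$ or $L_I$.
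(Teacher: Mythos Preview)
Your proposal is correct and matches the paper's own approach exactly: the theorem is presented there as a summary of the preceding lemmas, and you have correctly identified that $L_I$ (via Lemma~\ref{lemm-inf-lang-avg}) separates \asla\ from both \zla\ and \nla, that $L_F$ (via Lemma~\ref{lemm-fin-lang-avg}) separates \zla\ and \nla\ from \asla, and that Lemma~\ref{lemm-zvsn} gives the final clause.
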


\medskip\noindent{\bf Open question.} Whether \nla\/ is reducible to \zla\/ or 
\nla\/ is incomparable to \zla\/ (i.e., there is a language expressible by
\nla\/ but not by a \zla) remains open.

\subsection{Probabilistic $\LimInf$-automata}

\begin{lemma}\label{lemm-liminf-ntozas}
\nli\/ is reducible to both \asli\/ and \zli.
\end{lemma}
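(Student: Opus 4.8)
The plan is to take any nondeterministic $\LimInf$-automaton $N$ and simulate it by a probabilistic $\LimInf$-automaton that, at each step, "guesses" the transition $N$ would take, assigning uniform (or any fixed positive) probability to each available successor. The key observation is that for $\LimInf$ the value of a word under nondeterminism is $\sup_r \LimInf(\weight(r))$ over runs $r$, and a single "good" run already witnesses a value close to the supremum. So if the probabilistic automaton reproduces $N$'s transition graph with positive probabilities, then: (i) every run of the probabilistic automaton is a run of $N$, so its value is at most $L^{\max}_N(w)$ with probability one, which handles the upper bound for both semantics; and (ii) we must produce, with positive probability (for \zli) and with probability one (for \asli), runs achieving values arbitrarily close to $L^{\max}_N(w)$.

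For the \textbf{positive semantics} (\zli), this is almost immediate: fix $\eta < L^{\max}_N(w)$ and pick a run $r$ of $N$ over $w$ with $\LimInf(\weight(r)) > \eta$, hence $\weight(r)$ has only finitely many weights $\leq \eta$, say all after position $k_0$ are $>\eta$. In the probabilistic copy, the cylinder of runs agreeing with $r$ on the first $k_0$ steps has positive probability; I need every extension of that prefix (consistent with the transition graph) to still have $\LimInf > \eta$. That is not automatic, so the cleaner route is: observe that $\LimInf$ of the weights only depends on the tail, so it suffices that from the state $r$ reaches at time $k_0$, \emph{every} infinite continuation consistent with $N$'s graph over the remaining suffix of $w$ has $\LimInf > \eta$. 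If that fails, then $L^{\max}_N(w) \le \eta$ after all is false in general — so instead I would use the standard trick of an auxiliary "threshold-tracking" determinization on the fly, or more simply invoke that \nli{} is already known to be reducible to \dli{} (this is in \cite{CDH08a}, reflected in \figurename~\ref{fig:reducibility}), so it suffices to simulate a \emph{deterministic} $\LimInf$-automaton, where the single run carries probability one and the argument is trivial. I'll spell this out: reduce $N$ to a deterministic $D$, then a deterministic automaton is a special case of a probabilistic one (with point-mass distributions), and its unique run has value $L_D(w) = L^{\max}_N(w)$ with probability one, giving $L^{=1}_A(w) = L^{>0}_A(w) = L^{\max}_N(w)$.

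The \textbf{main obstacle}, and the only real content, is making sure the reduction \nli{} $\to$ \dli{} from \cite{CDH08a} is applicable here and produces an automaton that is legitimately viewed as probabilistic under \emph{both} semantics simultaneously — which it is, since a deterministic automaton has a single run of probability one, so "value $\geq \eta$ with probability one" and "with positive probability" coincide and both equal the run's value. Thus the entire lemma collapses to: (1) cite \cite[Theorem~?]{CDH08a} to get $\dli$ expressing $L^{\max}_N$; (2) note $\dli \subseteq \asli$ and $\dli \subseteq \zli$ because a deterministic weighted automaton is the special case of a probabilistic one where every $\delta(q,\sigma)$ is a Dirac distribution, and for such automata the positive and almost-sure languages both equal $L^{\max}$. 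I would write the proof in two or three sentences along exactly these lines, with the bulk being the verification in step (2) that point-mass distributions make the two semantic definitions agree with the ordinary run value.

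\begin{myProof}
Since \nli{} is reducible to \dli{} \cite[Theorem~12]{CDH08a}, it suffices to show that \dli{} is reducible to both \asli{} and \zli{}. Let $D = \tuple{Q,\rho_I,\Sigma,\delta,\weight}$ be a deterministic $\LimInf$-automaton; by definition $\rho_I$ is a Dirac distribution on some $q_I$, and for each $q,\sigma$ the distribution $\delta(q,\sigma)$ is a Dirac distribution on some state. View $D$ as a probabilistic weighted automaton $A$ (with these Dirac distributions). Then for every $w \in \Sigma^\omega$ there is exactly one run $r_w$ of $A$ over $w$, and $\P^A(\{r_w\}) = 1$. Consequently, for every $\eta$, the set $\{r \in \Run^A(w) \mid \LimInf(\weight(r)) \geq \eta\}$ is either $\{r_w\}$ (if $\LimInf(\weight(r_w)) \geq \eta$) or empty, so it has probability $1$ in the former case and $0$ in the latter. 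Hence
\[
L^{=1}_A(w) = L^{>0}_A(w) = \LimInf(\weight(r_w)) = L^{\max}_D(w) = L_D(w),
\]
which proves the claim.
\qed
\end{myProof}
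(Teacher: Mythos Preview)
Your proof is correct and takes essentially the same approach as the paper: reduce \nli\/ to \dli\/ via the known result from~\cite{CDH08a}, then observe that deterministic automata are special cases of both \asli\/ and \zli. The paper's proof is the two-sentence version of exactly this argument; your additional verification that Dirac distributions make $L^{=1}$ and $L^{>0}$ coincide with $L^{\max}$ is sound but more detail than the paper deems necessary.
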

\begin{myProof}
It was shown in~\cite{CDH08a} that \nli\/ is reducible to 
\dli.
Since \dli\/ are special cases of \asli\/ and \zli\/ the
result follows.
\qed
\end{myProof}

\begin{lemma}\label{lemm-liminf-asvz}
The language $L_I$ is expressible by an \asli, but cannot be 
expressed as a \nli\/ or a \zli. 
\end{lemma}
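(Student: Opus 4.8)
The statement has three parts, which I would prove separately. For the \asli\/ part, the plan is to reuse the automaton $A$ of \figurename~\ref{figure:aut-asla}, now interpreted as a $\LimInf$-automaton. If $w$ has infinitely many letters~$a$, then the probability of never taking the branch of probability $1/2$ into the absorbing state $\mathit{sink}$ (of weight~$1$) is $0$, so almost every run is eventually at $\mathit{sink}$ and has $\LimInf(\weight(r))=1$; hence $A(w)=1$. If $w$ has exactly $k<\infty$ letters~$a$, then with probability $1/2^k>0$ the run remains forever in $q_0$ (of weight~$0$), so the set of runs with $\LimInf(\weight(r))\ge\eta$ has probability $\le 1-1/2^k<1$ for every $\eta>0$ and probability~$1$ for $\eta\le 0$; hence $A(w)=0$. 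Thus $L^{=1}_A=L_I$.

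For the \nli\/ part, the shortest argument is to use reducibility: \nli\/ is reducible to \nla\/ (\figurename~\ref{fig:reducibility} and~\cite{CDH08a}), and $L_I$ is not expressible as an \nla\/ (the first part of Lemma~\ref{lemm-inf-lang-avg}, from~\cite[Theorem~13]{CDH08a}); hence $L_I$ is not expressible as an \nli. A self-contained pumping argument also works: since $A((b^{n+1}a)^\omega)=1$ with $n=\abs{Q}$, one extracts a reachable cycle of $A$, compatible with the periodic input, all of whose transition weights are $\ge 1$; this cycle reads a power of a conjugate of $b^{n+1}a$ and therefore contains $n+1$ consecutive $b$-transitions, hence (pigeonhole) a sub-cycle reading $b^l$ with $1\le l\le n+1$ and all weights $\ge 1$; looping this sub-cycle after a finite word $u$ reaching it gives a run over $u\cdot b^\omega$ with $\LimInf(\weight(r))\ge 1$, contradicting $L_I(u\cdot b^\omega)=0$.

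For the \zli\/ part (the substantive one), suppose $A$ is a \zli\/ with $L^{>0}_A=L_I$, let $n=\abs{Q}$, and consider the $b$-Markov chain of $A$. The first step is to show that every closed recurrent set $C$ of the $b$-Markov chain that is reachable in the underlying (non-probabilistic) automaton of $A$ contains a transition of weight $\le 0$: otherwise the least weight $\epsilon$ of a transition inside $C$ is positive, and reaching $C$ by a finite word $u$ with positive probability and then reading $b^\omega$ produces, with positive probability, a run that remains in $C$ and visits every transition inside $C$ infinitely often, so $\LimInf(\weight(r))=\epsilon>0$; this would force $A(u\cdot b^\omega)\ge\epsilon>0$ although $u\cdot b^\omega\notin L_I$. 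The second step is to exhibit a word with infinitely many $a$'s on which $A$ is nonetheless $\le 0$. I would take $w=(b^j\cdot a)^\omega$ for $j$ large and argue, in the spirit of Lemma~\ref{lemm-long-enough-b} via Properties~1 and~2: during each $b^j$-block a run reaches some reachable closed recurrent set of the $b$-Markov chain within at most $n$ steps, and for $j$ large enough it then visits that set's weight-$\le 0$ transition before the block ends with probability arbitrarily close to~$1$, uniformly over the current state; a conditional Borel--Cantelli estimate over the successive blocks then gives that almost every run over $w$ uses a weight-$\le 0$ transition infinitely often, so $\LimInf(\weight(r))\le 0$ almost surely and $A(w)\le 0$. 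Since $w$ has infinitely many $a$'s, $L_I(w)=1$, a contradiction, so no \zli\/ expresses $L_I$.

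The main obstacle is the second step of the \zli\/ part: relating the recurrent behaviour of the Markov chain induced by $(b^j\cdot a)^\omega$ to that of the pure $b$-Markov chain, so that the low-weight transitions guaranteed by the first step are actually hit infinitely often. This is the same type of estimate as in Lemma~\ref{lemm-long-enough-b}, but easier here, since for $\LimInf$ it is enough to witness one weight-$\le 0$ transition infinitely often rather than to track average rewards; concretely I would isolate a $\LimInf$-analogue of Lemma~\ref{lemm-long-enough-b} and then apply it exactly as in the proof of Lemma~\ref{lemm-inf-lang-avg}(2).
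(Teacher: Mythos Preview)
Your proposal is correct and follows essentially the same approach as the paper. The \asli\/ part reuses the automaton of \figurename~\ref{figure:aut-asla} exactly as the paper does; for \nli\/ the paper simply cites~\cite{CDH08a}, which your reducibility argument (via \nla) unpacks; and for \zli\/ the paper's proof is precisely ``argue as in Lemma~\ref{lemm-inf-lang-avg}(2) with the minimum weight of the closed recurrent set in place of the average'', which is the $\LimInf$-analogue of Lemma~\ref{lemm-long-enough-b} you propose to isolate in your last paragraph.

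One small remark on the second step of the \zli\/ part: your Borel--Cantelli route works, but the structural observation already present in the proof of Lemma~\ref{lemm-long-enough-b} gives a shorter argument. For $j>n=\abs{Q}$, every closed recurrent set of the $(b^{j}\cdot a)$-Markov chain contains a closed recurrent set of the $b$-Markov chain (since from any state one reaches a $b$-recurrent set along a $b$-path of length at most $n$), and hence contains a transition of weight~$\le 0$ by your first step. Standard Markov-chain recurrence then yields that almost every run over $(b^{j}\cdot a)^\omega$ takes that transition infinitely often, so $\LimInf(\weight(r))\le 0$ almost surely and $L^{>0}_A((b^{j}\cdot a)^\omega)\le 0$; no uniform-in-state probability estimate or conditional Borel--Cantelli is needed. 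Also, your sentence ``a run reaches some reachable closed recurrent set of the $b$-Markov chain within at most $n$ steps'' should be read probabilistically (as you do in the next clause), not pathwise.
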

\begin{myProof}
It was shown in~\cite{CDH08a} that the language $L_I$ is not 
expressible by \nli.
If we consider the automaton $A$ of Lemma~\ref{lemm-inf-lang-avg}
and interpret it as an \asli, then the automaton $A$ expresses
the language $L_I$.
The proof of the fact that \zli\/ cannot express $L_I$ is similar to the 
the proof of Lemma~\ref{lemm-inf-lang-avg} (part(2)) and instead of the
average reward of the closed recurrent set $C$, we need to consider the 
minimum reward of the closed recurrent set $C$.
\qed
\end{myProof}

\begin{lemma}\label{lemm-inf-ztoas}
\zli\/  is reducible to \asli.
\end{lemma}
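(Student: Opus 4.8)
The plan is to build, from a \zli-automaton $A=\tuple{Q,\rho_I,\Sigma,\delta,\weight}$ with finite weight set $W$, an \asli-automaton $A'$ with $L^{=1}_{A'}=L^{>0}_A$. Since $W$ is finite, $\LimInf(\weight(r))\in W$ for every run, hence
$L^{>0}_A(w)=\max\{t\in W \mid \P^A(\{r\in\Run^A(w)\mid \LimInf(\weight(r))\geq t\})>0\}$,
and $\LimInf(\weight(r))\geq t$ holds iff $r$ eventually uses only edges of weight $\geq t$. Call $w$ \emph{$t$-good} if the above probability is positive (every $w$ is $(\min W)$-good). Using that \asli\/ is closed under the pointwise $\max$ operation (or, if one prefers, by a single, more tedious, product construction), it suffices to produce for each $t\in W$ an \asli-automaton $A'_t$ with $L^{=1}_{A'_t}(w)=t$ when $w$ is $t$-good and $L^{=1}_{A'_t}(w)=\min W$ otherwise, and to set $A'=\max_{t\in W}A'_t$.

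The automaton $A'_t$ has state space $(Q\cup\{\star\})\times 2^{Q}$: the second component runs the subset construction of $A$ (tracking the set $R$ of states reachable in $A$ along the prefix read so far), and the first runs a ``committed run with restarts''. From a committed state $q\in Q$, on a letter $\sigma$, $A'_t$ follows $\delta(q,\sigma)$: mass reaching a successor along an edge of weight $\geq t$ stays committed, with the transition given weight $t$, while mass along an edge of weight $<t$ goes to $\star$ with transition weight $\min W$. From $\star$, on a letter $\sigma$, $A'_t$ \emph{restarts}: it picks $q$ uniformly in the current set $R$, takes one $\delta(q,\sigma)$-step, and again goes to a committed state on a weight-$\geq t$ edge with weight $t$, or back to $\star$ with weight $\min W$. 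Thus every run of $A'_t$ has weights in $\{\min W,t\}$, with weight $t$ exactly during committed phases, so the $\LimInf$ of a run equals $t$ if the run eventually stays committed forever (some committed phase survives) and equals $\min W$ otherwise.

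If $w$ is not $t$-good, then for every $n$ and every $q\in R_n$ the run of $A$ from $q$ on the suffix $w_{>n}$ stays within weight-$\geq t$ edges with probability $0$ — otherwise prepending a positive-probability prefix reaching $q$ would witness $t$-goodness of $w$ — so in $A'_t$ every committed phase fails almost surely, $\star$ is visited infinitely often, and $L^{=1}_{A'_t}(w)=\min W$. If $w$ is $t$-good, we must show that almost surely some committed phase eventually survives. By L\'evy's $0$--$1$ law (equivalently, on the ultimately periodic fragment, by Properties~1 and~2 of closed recurrent sets used in Lemma~\ref{lemm-long-enough-b}), $t$-goodness of $w$ implies that for all sufficiently large $n$ the reachable set $R_n$ contains a state from which $w_{>n}$ is used within the weight-$\geq t$ edges with probability at least $\tfrac12$. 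Since $A'_t$ restarts at every visit to $\star$, and visits $\star$ infinitely often as long as it has not stabilized, each late restart survives with probability at least $\tfrac1{2|Q|}$; as $\prod(1-\tfrac1{2|Q|})=0$, one of them survives almost surely, after which the run stays committed forever and $L^{=1}_{A'_t}(w)=t$.

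The main obstacle is precisely this last step, turning ``positive probability'' into ``almost sure''. A naive ``retry from the current state'' scheme does not work, because with positive probability the run of $A$ is driven into a region from which staying within weight-$\geq t$ edges is impossible; restarting from a \emph{resampled reachable} state avoids this, and the $0$--$1$ law supplies the uniform lower bound $\tfrac12$ on the survival probability along the ``good'' part of the computation that makes the Borel--Cantelli argument go through. The remaining bookkeeping — clamping the committed weight to $t$ so $A'_t$ does not overshoot when $w$ is $t$-good only for a smaller threshold, the degenerate case $t=\min W$, and reducing the $\max$ over the finitely many thresholds to closure of \asli\/ under $\max$ — is routine.
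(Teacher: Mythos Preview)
Your high-level decomposition is exactly the paper's: reduce to one automaton per threshold $t$ in the finite weight set, then take a pointwise $\max$ using closure of \asli\ under $\max$ (Lemma~\ref{lemm-closed-asliw-zlsw}). The difference is in how the per-threshold step is handled. The paper observes that, for fixed $t$, the threshold question ``$L^{>0}_A(w)\geq t$'' is precisely acceptance of $w$ by the \zcw\ obtained from $A$ by declaring states of weight $\geq t$ accepting, and then invokes the known reduction \zcw$\to$\ascw\ from~\cite{BG08} as a black box. You instead build the \asli\ $A'_t$ from scratch via a ``subset-tracking plus committed run with restarts'' construction. What you are doing is essentially re-deriving (a variant of) the \zcw$\to$\ascw\ construction inside the proof; this buys self-containment at the cost of extra work, while the paper's route is shorter but relies on the external result.

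Your construction and the not-$t$-good direction are fine. The $t$-good direction has a small but genuine gap in the justification. L\'evy's $0$--$1$ law tells you that along almost every run in the good event, the \emph{conditional probability of the tail event} $G=\{\LimInf\geq t\}$ given the state at time $n$ tends to~$1$; that conditional probability is the probability, from $q_n$ on $w_{>n}$, of \emph{eventually} staying on weight-$\geq t$ edges. For a committed phase to \emph{survive}, however, you need the run from the restart state to stay on weight-$\geq t$ edges \emph{from the very first step}, i.e.\ you need a lower bound on $p_{q,n}:=\P_{q,w_{>n}}(\text{all edges }\geq t)$, which is in general strictly smaller. So L\'evy does not deliver the ``$\tfrac12$'' you claim. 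The fix is elementary and does not need L\'evy: let $S_n=\{\text{all edges from step }n\text{ on have weight}\geq t\}$; then $S_n\uparrow G$, so $\P(S_n)\uparrow p:=\P(G)>0$, and since $\P(S_n)=\sum_{q}\P(q_n=q)\,p_{q,n}\leq \max_{q\in R_n}p_{q,n}$, for all large $n$ there is $q\in R_n$ with $p_{q,n}\geq p/2$. Any fixed positive constant suffices for the remaining Borel--Cantelli-style step, so replacing $\tfrac12$ by $p/2$ and $\tfrac{1}{2|Q|}$ by $\tfrac{p}{2|Q|}$ makes your argument go through.
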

\begin{myProof}
Let $A$ be a \zli\/ and we construct a \asli\/ $B$ such 
that $B$ is equivalent to $A$. 
Let $V$ be the set of weights that appear in $A$ and let 
$v_1$ be the least value in $V$.
For each weight $v \in V$, consider the \zcw\/ $A^v$ 
that is obtained from $A$ by considering all states with weight at 
least $v$ as accepting states.
It follows from the results of~\cite{BG08}  that \zcw\/ is reducible to 
\ascw\/ (it was shown in~\cite{BG08} that \asbw\/ is reducible to \zbw\/ and
it follows easily that dually \zcw\/ is reducible to \ascw).  
Let $D^v$ be an \ascw\/ that is equivalent to $A^v$.
We construct a \zli\/ $B^v$ from $D^v$ by assigning weights 
$v$ to the accepting states of $D^v$ and the minimum weight 
$v_1$ to all other states.
Consider a word $w$, and we consider the following cases.
\begin{enumerate}
\item If $A(w)=v$, then for all $v' \in V$ such that 
$v' \leq v$ we have $D^{v'}(w)=1$, (i.e., the \zcw\/ $A^{v'}$ 
and the \ascw\/ $D^{v'}$ accepts $w$).

\item For $v \in V$, if $D^v(w)=1$, then $A(w) \geq v$
\end{enumerate}
It follows from above that 
$A = \max_{v \in V} B^v$.
We will show later that \asli\/ is closed under $\max$ 
(Lemma~\ref{lemm-closed-asliw-zlsw}) and hence 
we can construct an \asli\/ $B$ such that 
$B=\max_{v \in V} B^v$.
Thus the result follows. 
\end{myProof}

\begin{theorem}
We have the following strict inclusion
\begin{center}
\nli\/ $\subsetneq$ \zli\/ $\subsetneq$ \asli
\end{center}
\end{theorem}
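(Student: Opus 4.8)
The plan is to read this theorem off as a corollary of the lemmas of this subsection, chaining two inclusions together with two separating languages.

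For the inclusion $\nli \subseteq \zli$ I would invoke Lemma~\ref{lemm-liminf-ntozas} directly (it reduces \nli\/ to \dli\/ via \cite{CDH08a}, and \dli\/ is a special case of \zli). Strictness comes from Lemma~\ref{lemm-zvsn}: the language $L_z$ is expressible by a \zli\/ but not by any \nli, since any non-deterministic limit-inf automaton for $L_z$ would have to contain a cycle of positive minimum reward whose underlying word contains a $b$, and pumping that cycle contradicts $L_z(w_R\cdot w_C^\omega)=0$. Hence $\nli \subsetneq \zli$.

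For the inclusion $\zli \subseteq \asli$ I would invoke Lemma~\ref{lemm-inf-ztoas}, whose proof decomposes a \zli\/ $A$ into threshold \zcw\/ automata $A^v$, converts each to an \ascw\/ $D^v$ using the \asbw-to-\zbw\/ reduction of \cite{BG08} (dualized to \zcw-to-\ascw), rebuilds two-level \zli-automata $B^v$, observes $A=\max_{v\in V}B^v$, and appeals to closure of \asli\/ under $\max$. Strictness comes from Lemma~\ref{lemm-liminf-asvz}: $L_I$ is expressible by an \asli\/ but not by any \zli\/ (the recurrent-set argument of Lemma~\ref{lemm-inf-lang-avg}(2), with minimum reward in place of average reward). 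Hence $\zli \subsetneq \asli$, and combining with the previous paragraph yields $\nli \subsetneq \zli \subsetneq \asli$.

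I do not expect a real obstacle inside this theorem; the substance lives in the cited lemmas. The one point I would be careful about is the forward reference: Lemma~\ref{lemm-inf-ztoas} uses closure of \asli\/ under $\max$, proved later as Lemma~\ref{lemm-closed-asliw-zlsw}, and I would check that this dependency chain is acyclic (the max-closure proof does not in turn rely on this theorem). With that verified, the theorem follows in a couple of lines.
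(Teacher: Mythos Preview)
Your proposal is correct and matches the paper's own proof essentially line for line: the paper cites Lemma~\ref{lemm-liminf-ntozas} and Lemma~\ref{lemm-zvsn} for $\nli \subsetneq \zli$, and Lemma~\ref{lemm-inf-ztoas} and Lemma~\ref{lemm-liminf-asvz} for $\zli \subsetneq \asli$. Your remark about the forward reference to Lemma~\ref{lemm-closed-asliw-zlsw} is well taken and the dependency is indeed acyclic, since that lemma's proof rests on the closure properties of \ascw\/ from~\cite{BG08} and not on the present theorem.
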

\begin{myProof}
The fact that \nli\/ is reducible to \zli\/ follows from 
Lemma~\ref{lemm-liminf-ntozas},
and the fact the \zli\/ is not reducible to \nli\/ follows from
Lemma~\ref{lemm-zvsn}.
The fact that \zli\/ is reducible to \asli\/ follows from 
Lemma~\ref{lemm-inf-ztoas} and the fact that \asli\/ is not 
reducible to \zli\/ follows from Lemma~\ref{lemm-liminf-asvz}.
\qed
\end{myProof}

\subsection{Probabilistic $\LimSup$-automata}

\begin{lemma}\label{lemm-nzlstoasls}
\nls\/ and \zls\/ are not reducible to \asls.
\end{lemma}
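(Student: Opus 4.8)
The plan is to exhibit one language that is expressible by an \nls\ (hence also by a \zls, since \nls\ is reducible to \zls; see \figurename~\ref{fig:reducibility}) but by no \asls. The witness will be $L_F$, the language of words over $\set{a,b}$ with only finitely many $a$'s (``eventually always $b$''). That $L_F$ is expressible by an \nls\ is immediate: $L_F$ is $\omega$-regular, accepted for instance by a nondeterministic B\"uchi automaton that guesses a position past which no $a$ occurs and then verifies it. All the work is to show that \emph{no} \asls\ expresses $L_F$.

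So suppose, towards a contradiction, that $A$ is an \asls\ with $L^{=1}_A=L_F$. I would first record a normalisation: for every word $w$, $L^{=1}_A(w)$ equals the largest weight $v$ of $A$ with $\P^A(\{r\in\Run^A(w) : \LimSup(\weight(r))\ge v\})=1$, since the limit superior of a sequence taking values in the finite weight set of $A$ is itself one of those weights. As $L^{=1}_A(b^\omega)=1$, the value $1$ is a weight of $A$ and almost every run over $b^\omega$ has $\LimSup(\weight(r))\ge 1$.

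The second ingredient concerns the $b$-Markov chain of $A$ (states $Q$, with an edge $q\to q'$ of weight $\weight(q,b,q')$ whenever $\delta(q,b)(q')>0$); write $n=|Q|$ and let $\mu>0$ be the least positive transition probability of $A$. \emph{Claim:} if a closed recurrent set $C$ of the $b$-Markov chain is reachable from an initial state over some finite word $u$, then its maximum edge weight $m_C$ (the largest weight on an edge with both endpoints in $C$) satisfies $m_C\ge 1$. Indeed $u\,b^\omega\in L_F$, so almost every run over $u\,b^\omega$ has $\LimSup(\weight(r))\ge 1$; with positive probability such a run follows $u$ into $C$ and, $b$ being closed on $C$, stays in $C$ forever, and conditioned on that positive-probability event the run is a trajectory of the finite irreducible chain $C$, which traverses every edge of $C$ infinitely often almost surely, so $\LimSup(\weight(r))=m_C$ there; hence $m_C\ge 1$.

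Finally, fix $j>2n$ and set $w^\ast=(b^j a)^\omega$; since $w^\ast$ has infinitely many $a$'s, $L_F(w^\ast)=0$, so it suffices to derive $L^{=1}_A(w^\ast)\ge 1$. Cut the run over $w^\ast$ into ``$b$-blocks'', the $i$-th being where the $i$-th factor $b^j$ is read. From whatever state the run occupies at the start of a block, while reading $b$'s it reaches some closed recurrent set $C$ of the $b$-Markov chain within $n$ steps with probability at least $\mu^n$ (follow a shortest graph path; once in $C$ it cannot leave while reading $b$), and then, $C$ being $b$-strongly-connected, the run can traverse the maximum-weight edge of $C$ within another $n$ $b$-steps with probability at least $\mu^n$ — all fitting inside the block since $j>2n$, and that edge has weight $m_C\ge 1$ by the Claim (the sets $C$ reached here are reachable over finite prefixes of $w^\ast$, so the Claim applies). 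Hence, conditioned on the run's history up to the start of block $i$, the event $A_i=$``a transition of weight $\ge 1$ is taken during block $i$'' has probability at least $p_0:=\mu^{2n}>0$. By the conditional (Lévy) form of the Borel--Cantelli lemma, almost every run over $w^\ast$ has $A_i$ occurring for infinitely many $i$, whence $\LimSup(\weight(r))\ge 1$ almost surely and $L^{=1}_A(w^\ast)\ge 1>0=L_F(w^\ast)$ — a contradiction. Thus $L_F$ is not expressible by an \asls, and since $L_F$ is expressible by an \nls\ and a \zls, neither \nls\ nor \zls\ is reducible to \asls.

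\textbf{Main obstacle.} The delicate point is the last paragraph: the per-block ``hit a weight-$\ge 1$ transition'' events are dependent, so the ordinary Borel--Cantelli lemma does not apply; one must use the uniform positive lower bound $p_0$ on the \emph{conditional} success probabilities together with the martingale version of Borel--Cantelli. One should also be careful to invoke $m_C\ge 1$ only for closed recurrent sets that are genuinely reachable over finite words, and to note that a run confined to $C$ realises limit superior exactly $m_C$ because a finite irreducible Markov chain uses every edge infinitely often almost surely.
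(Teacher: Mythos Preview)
Your proof is correct and follows the same overall strategy as the paper: use $L_F$ (finitely many $a$'s) as the separating witness, observe it is \nls- and \zls-definable, show that every reachable $b$-closed recurrent set in a purported \asls\ for $L_F$ must carry maximum weight at least~$1$, and then fool the automaton with $(b^j a)^\omega$. The paper's own proof is terse---it simply says the inexpressibility argument is ``similar to Lemma~\ref{lemm-fin-lang-avg} (part~3)'' with maximum reward in place of average reward, which in turn defers to (an adaptation of) Lemma~\ref{lemm-long-enough-b} reasoning about the closed recurrent sets of the periodic $(b^j a)$-Markov chain. Your final step is organised differently: instead of analysing the $(b^j a)$-chain globally, you give a per-block argument with a uniform lower bound $\mu^{2n}$ on the conditional probability of hitting a weight-$\ge 1$ transition in each $b^j$-block, and close with the conditional Borel--Cantelli lemma. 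This is more self-contained (no need to state a $\LimSup$ analogue of Lemma~\ref{lemm-long-enough-b}) and makes the dependence on $j$ explicit; the paper's route has the virtue of reusing verbatim the recurrent-set machinery already developed for the $\LimAvg$ case.
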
 
\begin{myProof} 
The language $L_F$ of finitely many $a$'s can be expressed 
as a non-deterministic B\"uchi automata, and hence as a 
\nls.
We will show that \nls\/ is reducible to \zls.
It follows that $L_F$ is expressible as \nls\/ and \zls.
The proof of the fact that \asls\/ cannot express $L_F$ is similar to the 
the proof of Lemma~\ref{lemm-fin-lang-avg} (part(3)) and instead of the
average reward of the closed recurrent set $C$, we need to consider the 
maximum reward of the closed recurrent set $C$.
\qed
\end{myProof}
 

\medskip\noindent{\bf Deterministic in limit \nls.} 
Consider an automaton $A$ that is a \nls. 
Let $v_1 < v_2 < \ldots < v_k$ be the weights that appear in $A$.
We call the automaton $A$ \emph{deterministic in the limit} if for all
states $s$ with weight greater than $v_1$, all states $t$ reachable 
from $s$ are deterministic.

\begin{lemma}
For every \nls\/ $A$, there exists a \nls\/ $B$ that is deterministic 
in the limit and equivalent to $B$. 
\end{lemma}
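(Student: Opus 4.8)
The plan is to build $B$ from $A$ by a subset-style determinization that keeps track of the set of states reachable while guessing, but commits to a single branch as soon as we wish to "lock in" a high weight. The key observation is the following: in a $\LimSup$-automaton, a run has value $v$ iff it visits weight $v$ infinitely often and visits strictly larger weights only finitely often. So I would first establish a normal-form lemma (or recall it from \cite{CDH08a}): for each threshold $v_i$, the $\LimSup$-automaton $A$ can be viewed as a nondeterministic co-B\"uchi condition "eventually only weights $\geq v_i$ appear", and \cite{CDH08a} shows nondeterministic co-B\"uchi (equivalently \nli-type behaviour above a threshold) can be made deterministic. The idea is to apply such a determinization only to the "tail" part of the automaton, i.e. to the part that is entered once we decide to never again see a weight below $v_1$.

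Concretely, first I would take the sub-automaton $A_{>v_1}$ induced by the states of weight $> v_1$ together with all states reachable from them; on this sub-automaton the value of any infinite run is determined by a $\LimSup$ over the weights $v_2 < \dots < v_k$, which is a finite-range value function, and by the results cited in \figurename~\ref{fig:reducibility} (\nls\ is reducible to \dls, via \ndmax/\ndli) this sub-automaton has an equivalent deterministic $\LimSup$-automaton $D$. Then I would form $B$ as the disjoint union of the original $A$ (used only to read a finite prefix and to make the nondeterministic guess of when to "switch") and $D$: from every state $s$ of weight $> v_1$ in $A$, and also nondeterministically from states of weight $v_1$, $B$ has transitions into the corresponding state of $D$; inside $A$ all weights are left unchanged, and once in $D$ the run is deterministic. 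Since $D$ is reached only through states of weight $> v_1$ (or through a nondeterministic switch, after which no weight $\le v_1$ is ever read again in $D$ because $D$ simulates $A_{>v_1}$), the resulting automaton $B$ is deterministic in the limit by construction: every state of weight $> v_1$ lies in the $D$-part, which is deterministic, and all states reachable from it stay in $D$.

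For correctness I would argue $L_B = L_A$ pointwise. For "$\le$": any run of $B$ either stays forever in the $A$-part, in which case (since it never switched) it saw weight $> v_1$ only finitely often — actually one must be careful here, so the switch should be forced, not optional, whenever a weight $> v_1$ is produced; then a non-switching run has value $v_1 \le A(w)$ trivially, and a switching run is, after its prefix, a run of $A_{>v_1}$, hence its value is the value of some run of $A$. For "$\ge$": given an optimal run $r$ of $A$ on $w$ with value $v = \LimSup(\weight(r))$, if $v = v_1$ we are done; if $v > v_1$, then $r$ eventually only visits states of weight $\ge v > v_1$, so from that point $r$ is a run of $A_{>v_1}$, and the deterministic $D$ on the corresponding suffix achieves the same $\LimSup$ value $v$; prepending the matching finite prefix of $r$ (with a forced switch at the right moment) gives a run of $B$ of value $v$.

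The main obstacle I expect is getting the "switch" discipline exactly right: we must ensure that (i) committing to $D$ never lowers the achievable value — handled because $D$ is \emph{equivalent} to $A_{>v_1}$, not merely a subautomaton, so it recovers the best residual run — and (ii) a run that is forced into $D$ "too early" (before $A$'s optimal run would have stopped seeing weight $v_1$) does not create a spuriously low value; this is handled by also allowing $D$, within its own deterministic simulation, to track returns to weight-$v_1$ behaviour, i.e. $D$ should be the determinization of $A_{>v_1}$ \emph{with} the weight-$v_1$ transitions re-added as self-delaying moves, so that $D$ faithfully computes $\LimSup$ of the residual run of $A$ and never undervalues it. Once $D$ is the correct deterministic $\LimSup$-automaton for "the $\LimSup$-value of the $A$-run from here on, allowed to dip back to $v_1$", both inequalities close, and the determinism-in-the-limit property is immediate since the only nondeterminism left in $B$ is in the weight-$v_1$ states of the $A$-part.

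(Typo in the statement: "equivalent to $B$" should read "equivalent to $A$".)
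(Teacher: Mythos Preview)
Your approach has a genuine gap. You claim that ``by the results cited in \figurename~\ref{fig:reducibility}, \nls\ is reducible to \dls'', and you rely on this to produce a fully deterministic automaton $D$ equivalent to the sub-automaton $A_{>v_1}$ (and, in your final paragraph, effectively to all of $A$). This reducibility is false: the figure records the \emph{opposite} strict inclusion $\dls \subsetneq \nls$. With weights in $\{0,1\}$ a $\LimSup$-automaton is exactly a B\"uchi automaton, and \nbw\ is not reducible to \dbw; hence \nls\ is not reducible to \dls. (The \nd\ prefix in \ndli\ and \ndmax\ records that determinization succeeds for $\LimInf$ and $\Max$; there is no such annotation for $\LimSup$, and you seem to have conflated the B\"uchi-type threshold condition ``weight $\geq v$ infinitely often'' with a coB\"uchi one.) Since in general no deterministic $D$ computing the residual $\LimSup$ value exists, your construction collapses at this step.

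The paper's proof avoids this by not asking for full determinism. It decomposes $A$ into one threshold \nbw\ $A_v$ per weight $v\in V$ (accepting states are those of weight $\geq v$, so $A_v$ accepts exactly the words $w$ with $L_A(w)\geq v$), and then invokes the Courcoubetis--Yannakakis result~\cite{CY95} that every \nbw\ has an equivalent \nbw\ that is \emph{deterministic in the limit}: all states reachable from an accepting state are deterministic, while the pre-accepting part may stay nondeterministic. Assigning weight $v$ to accepting states of each $B_v$ and $v_1$ elsewhere, and taking the initial-nondeterminism union over $v\in V$, yields an \nls\ equivalent to $A$ in which every state of weight $>v_1$ is an accepting state of some $B_v$ and hence has only deterministic successors. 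The idea you are missing is exactly this relaxation: limit-determinism of \nbw\ is always achievable, whereas full determinization of \nls\ is not.

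(You are right about the typo: ``equivalent to $B$'' should read ``equivalent to $A$''.)
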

\begin{myProof}
From the results of~\cite{CY95} it follows that a \nbw\/ $A$ can be 
reduced to an equivalent \nbw\/ $B$ such that $B$ is deterministic in
the limit. 
Let $A$  be a \nls, and let $V$ be the set of weights that appear in $A$.
and let $V= \{v_1,\dots,v_k\}$ with $v_1 < v_2 < \dots < v_k$.
For each $v \in V$, consider the \nbw\/ $A_v$ whose (boolean) language 
is the set of words $w$ such that $L_A(w) \geq v$, by declaring to be accepting
the states with weight at least $v$.
Let $B_v$ be the deterministic in the limit \nbw\/ that is equivalent to 
$A_v$. 
The automaton $B$ that is deterministic in the limit and is equivalent to 
$A$ is obtained as the automaton that by initial non-determinism chooses 
between the $B_v$'s, for $v \in V$.
\qed
\end{myProof}

\begin{lemma}
\nls\/ is reducible to \zls.
\end{lemma}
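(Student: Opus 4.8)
The plan is to exploit the normal form just established. Given a \nls\ automaton $A$, I first invoke the previous lemma to replace it by an equivalent \nls\ automaton, still called $A$, that is deterministic in the limit; I also assume $A$ complete, as is standard. Let $v_1 < v_2 < \dots < v_k$ be the weights occurring in $A$, so that $v_1$ is the minimal weight. I then build a \zls\ automaton $B$ by turning the nondeterminism of $A$ into uniform randomization: $B$ has the same states and weight function as $A$, its initial distribution is uniform over $\{q \mid \rho_I^A(q) > 0\}$, and $\delta^B(q,\sigma)$ is uniform over the set of $A$-successors of $q$ on $\sigma$. Thus the non-probabilistic automaton underlying $B$ is exactly $A$; in particular every finite run of $A$ has strictly positive probability in $B$, and whenever $B$ sits in a deterministic state of $A$ it follows the unique continuation with probability $1$. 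The claim is that $L_B = L_A$, where $L_B$ refers to the positive semantics and $L_A$ to the max-language of the nondeterministic automaton.

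First I would check the easy inequality $L_B(w) \le L_A(w)$: every run of $B$ is a run of $A$, so if $\P^B(\{r \in \Run^B(w) \mid \Val(\wt(r)) \ge \eta\}) > 0$ then $A$ has a run over $w$ of value at least $\eta$, whence the supremum defining $L_B(w)$ is bounded by $\sup\{\Val(\wt(r)) \mid r \in \Run^A(w)\} = L_A(w)$. For the reverse inequality, set $v = L_A(w)$ and note $v \ge v_1$. If $v = v_1$, then every run of $B$ has $\LimSup$-value at least $v_1$, so the relevant event has probability $1$ and $L_B(w) \ge v_1$. If $v > v_1$, fix a run $r = q_0\sigma_1 q_1\sigma_2\cdots$ of $A$ with $\Val(\wt(r)) \ge v$; it visits states of weight $\ge v$ infinitely often, and since $v > v_1$ there is a least index $k$ with $\wt(q_k) > v_1$. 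Let $p = q_0\sigma_1\cdots\sigma_k q_k$; then $\P^B(p) > 0$. Because $\wt(q_k) > v_1$ and $A$ is deterministic in the limit, every state reachable from $q_k$ is deterministic, so $A$ has a unique run from $q_k$ over the corresponding suffix of $w$, which must be the suffix $q_k\sigma_{k+1}q_{k+1}\cdots$ of $r$. Hence, conditioned on the prefix $p$, the run produced by $B$ is almost surely $r$ itself, so $\P^B(\{r' \mid \Val(\wt(r')) \ge v\}) \ge \P^B(p) > 0$, giving $L_B(w) \ge v$. The two inequalities together yield $L_B = L_A$, so \nls\ is reducible to \zls.

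The step I expect to be the real obstacle --- and the reason the preceding lemma is needed --- is the implication ``$L_A(w) \ge v$'' $\Rightarrow$ ``$\P^B(\text{value} \ge v) > 0$''. For an arbitrary \nls\ this fails: realizing a value-$v$ run might require infinitely many favourable choices, whose joint probability is $0$; this is precisely why probabilistic B\"uchi automata under positive semantics are strictly more expressive than nondeterministic ones. The deterministic-in-the-limit normal form confines all nondeterminism of any run whose value exceeds $v_1$ to a finite prefix, after which the run is forced, and this is exactly what makes the finite-prefix argument above legitimate. Two minor points deserve attention: that the $\LimSup$ of a weight sequence over the finite set $\{v_1,\dots,v_k\}$ always equals one of the $v_i$, so the suprema involved are attained and the case split on $v$ is exhaustive; and that, as an alternative to the direct construction, one could instead decompose $A$ into the B\"uchi automata $A_v$ accepting $\{w \mid L_A(w) \ge v\}$, convert each to a probabilistic B\"uchi automaton with positive semantics using~\cite{BG08}, and recombine them via the (trivial) closure of \zls\ under $\max$.
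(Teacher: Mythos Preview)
Your proof is correct and follows exactly the same approach as the paper: pass to a deterministic-in-the-limit \nls\ via the preceding lemma, then replace nondeterminism by uniform probabilities to obtain an equivalent \zls. The paper's proof is a two-line sketch asserting equivalence without argument, whereas you supply the full justification (including the key observation that any run of value $>v_1$ becomes deterministic after a finite, positive-probability prefix); your added detail is sound and your noted alternative via B\"uchi decomposition is also valid.
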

\begin{myProof}
Given a \nls\/ $A$, consider the \nls\/ $B$ that is deterministic in the 
limit and equivalent to $B$. 
By assigning equal probabilities to all out-going transitions from a state
we obtain a \zls\/ $C$ that is equivalent to $B$ (and hence $A$).
The result follows.
\qed
\end{myProof}

\begin{lemma}
\asls\/ is reducible to \zls.
\end{lemma}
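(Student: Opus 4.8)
The plan is to mirror the reduction of \zli\/ to \asli\/ from Lemma~\ref{lemm-inf-ztoas}, read off dually: slice the $\LimSup$-automaton into a family of B\"uchi conditions by thresholding its weights, push each threshold automaton from the almost-sure to the positive semantics using the reduction of \asbw\/ to \zbw\/ of~\cite{BG08}, re-attach weights, and glue the pieces together with $\max$, which is free for the positive semantics.

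Concretely, let $A$ be an \asls\/ and let $V=\set{v_1<v_2<\dots<v_k}$ be the finite set of weights of $A$. Because the $\LimSup$ of any weight sequence of $A$ already lies in $V$, one first observes that $L_A(w)=\max\set{v\in V \mid \P^A(\set{r\mid \LimSup(\weight(r))\geq v})=1}$ for every word $w$; in particular $L_A(w)\in V$. For each $v\in V$ let $A^v$ be the \asbw\/ obtained from $A$ by declaring accepting exactly the states whose weight is at least $v$. Since, over the finite weight set $V$, a run has $\LimSup\geq v$ iff it visits a state of weight $\geq v$ infinitely often, $A^v$ accepts $w$ almost surely iff $L_A(w)\geq v$, i.e.\ $L_{A^v}(w)=1 \iff L_A(w)\geq v$.

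Now apply~\cite{BG08}: \asbw\/ is reducible to \zbw, so pick a \zbw\/ $D^v$ equivalent to $A^v$. Build a \zls\/ $B^v$ from $D^v$ by keeping the same underlying graph and the same probabilities, and putting weight $v$ on the accepting states of $D^v$ and weight $v_1$ on every other state. A run of $B^v$ is literally a run of $D^v$, and since its weight sequence takes only the values $v_1\leq v$, its $\LimSup$ equals $v$ if the run visits accepting states infinitely often and equals $v_1$ otherwise; hence under the positive semantics $B^v(w)=v$ when $D^v(w)=1$ and $B^v(w)=v_1$ when $D^v(w)=0$. Combining with the previous step, $B^v(w)=v$ when $L_A(w)\geq v$ and $B^v(w)=v_1\leq L_A(w)$ otherwise, so $\max_{v\in V}B^v(w)=L_A(w)$ for all $w$, using $L_A(w)\in V$. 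Finally \zls\/ is closed under $\max$ (disjoint union with a split initial distribution; this is the trivial closure of the positive semantics under $\max$ noted in the introduction), so there is a single \zls\/ $B$ with $B=\max_{v\in V}B^v=L_A$, which is the claimed reduction.

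I do not expect a genuine obstacle: the only real content is imported, namely the reduction of \asbw\/ to \zbw\/ from~\cite{BG08} and the trivial closure of the positive semantics under $\max$. The points that deserve a line of care are purely bookkeeping: that $L_A(w)$ indeed lies in $V$, that the threshold automaton $A^v$ captures precisely the event $\set{\LimSup(\weight(r))\geq v}$, and that re-weighting is compatible with the \asbw\/-to-\zbw\/ reduction because $A^v$ and $D^v$ differ only in an accepting set, not in weights.
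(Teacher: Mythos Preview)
Your proof is correct and follows essentially the same approach as the paper: threshold the weights to obtain \asbw\/ automata, convert each to a \zbw\/ via~\cite{BG08}, re-weight, and combine using the closure of \zls\/ under $\max$. The paper's proof is slightly more terse (it directly builds the initial uniform choice among the $C_i$'s rather than invoking closure under $\max$ as a lemma), but the content is identical.
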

\begin{myProof}
Consider a \asls\/ $A$ and let the weights of $A$ 
be $v_1 < v_2 \ldots < v_l$.
For $1 \leq i \leq l$ consider the \asbw\/ obtained from $A$ 
with the set of state with reward at least $v_i$ as the B\"uchi 
states.
It follows from the results of~\cite{BG08} that 
\asbw\/ is reducible to \zbw.
Let $B_i$ be the \zbw\/ that is equivalent to $A_i$. 
Let $C_i$ be the automaton such that all B\"uchi states of 
$B_i$ is assigned weight $v_i$ and all other states are 
assigned $v_1$.
Consider the automata $C$ that goes with equal probability to 
the starting states of $C_i$, for $1 \leq i \leq l$, and 
we interpret $C$ as a \zls.
Consider a word $w$, and let $A(w)=v_j$ for some $1\leq j \leq l$,
i.e., given $w$, the set of states with reward at least $v_j$ 
is visited infinitely often with probability~1 in $A$.
Hence the \zbw\/ $B_i$ accepts $w$ with positive probability, 
and since $C$ chooses $C_i$ with positive probability, it follows
that given $w$, in $C$ the weight $v_j$ is visited infinitely often
with positive probability, i.e., $C(w) \geq v_j$.
Moreover, given $w$, for all $v_k>v_l$, the set of states with 
weight at least $v_k$ is visited infinitely often with probability~0
in $A$.
Hence for all $k>j$, the automata $B_k$ accepts $w$ with probability~0.
Thus $C(w)< v_k$ for all $v_k>v_j$.
Hence $C(w)=A(w)$ and thus \asls\/ is reducible to \zls.
\qed
\end{myProof}

\begin{lemma}
\asls\/ is not reducible to \nls.
\end{lemma}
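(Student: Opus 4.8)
The plan is to separate the two classes by a single language, using a structural obstruction on the \nls\ side and a genuinely probabilistic construction on the \asls\ side. The structural fact is that \nls\ automata are quantitatively ``$\omega$-regular'': if $A$ is a \nls\ with finite weight set $V$, then for every $v\in\real$ the set $\{w\mid L_A(w)\geq v\}$ is the language of the nondeterministic B\"uchi automaton obtained from the underlying non-probabilistic automaton of $A$ by declaring the states of weight $\geq v$ to be accepting — indeed, since $V$ is finite, a run has $\LimSup$-value $\geq v$ iff it visits such a state infinitely often. Hence every \nls\ language is finitely valued with $\omega$-regular level sets. It therefore suffices to exhibit an \asls\ language that does \emph{not} have all level sets $\omega$-regular; concretely, a $\{0,1\}$-valued \asls\ language whose support is not $\omega$-regular.

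For the witness, recall that B\"uchi automata are exactly the $\LimSup$-automata with weights in $\{0,1\}$, so it suffices to exhibit a non-$\omega$-regular language recognized by a probabilistic B\"uchi automaton under the almost-sure semantics; the existence of such a language belongs to the theory of probabilistic $\omega$-automata \cite{BG05,BG08}. A natural concrete candidate is the language $L_z$ of Lemma~\ref{lemm-zvsn}, which is already shown there not to be expressible as a \nls\ (equivalently, not $\omega$-regular). The automaton of \figurename~\ref{figure:aut-zl} read under the positive semantics assigns to $w=a^{k_1}ba^{k_2}b\cdots$ the value $1$ with probability $\prod_i(1-2^{-k_i})$; converting ``positive probability'' into ``probability $1$'' on precisely the words with $\prod_i(1-2^{-k_i})>0$ requires the standard ``restart at every $b$'' amplification device of \cite{BG05}: along words with $\prod_i(1-2^{-k_i})>0$ the tails $\prod_{i\geq m}(1-2^{-k_i})$ tend to $1$, so infinitely many fresh attempts succeed almost surely, whereas along words with $\prod_i(1-2^{-k_i})=0$ every attempt reaches the sink almost surely and the $\LimSup$-value is $0$ with probability $1$. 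I expect the delicate bookkeeping of this amplification — keeping the restart events independent across the $b$'s while never cancelling an already successful attempt — to be the main obstacle; alternatively one sidesteps it entirely by importing a non-$\omega$-regular almost-sure probabilistic B\"uchi language directly from \cite{BG05,BG08}.

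Granting such a language $L\in\asls$ with $\{w\mid L(w)=1\}$ not $\omega$-regular, the conclusion is immediate. Suppose $L$ were reducible to a \nls\ automaton $A$. Then $A$ takes only the values $0$ and $1$, and since $L$ is not identically $0$, some weight of $A$ is $\geq 1$; let $v_0$ be the least such weight. By the first paragraph, $\{w\mid L_A(w)\geq v_0\}$ is $\omega$-regular, but this set equals $\{w\mid L_A(w)\geq 1\}=\{w\mid L(w)=1\}$, which is not $\omega$-regular — a contradiction. Hence $L$ is not reducible to any \nls, and therefore \asls\ is not reducible to \nls.
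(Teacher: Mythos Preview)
Your proposal is correct and follows essentially the same route as the paper: both import from \cite{BG08} a non-$\omega$-regular language $L_\lambda$ expressible by an almost-sure probabilistic B\"uchi automaton (hence by an \asls), and then argue that no \nls\ can define it because every level set $\{w\mid L_A(w)\geq v\}$ of an \nls\ is the language of an NBW and hence $\omega$-regular. The paper compresses the second step into ``argument similar to Lemma~\ref{lemm-zvsn}'' while you spell out the $\omega$-regularity observation; your detour through $v_0$ is harmless but unnecessary, since the first-paragraph argument already applies directly with $v=1$.
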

\begin{myProof}
It follows from~\cite{BG08} that for $0< \lambda< 1$ the following language $L_\lambda$ 
can be expressed by a \asbw\/ and hence by \asls:
\[
L_{\lambda} =\set{a^{k_1} b a^{k_2} b a^{k_3} b \ldots \mid k_1, k_2, \cdots \in \nat_{\geq 1}. 
\prod_{i=1}^\infty (1 -\lambda^{k_i}) >0}.
\]
It follows from argument similar to Lemma~\ref{lemm-zvsn} that there exists $0 < \lambda <1$ 
such that $L_\lambda$ cannot be expressed by a \nls.
Hence the result follows.
\qed
\end{myProof}

\begin{theorem}
\asls\/ and \nls\/ are incomparable in expressive power, and 
\zls\/ is more expressive than \asls\/ and \nls.
\end{theorem}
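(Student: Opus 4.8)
The plan is to assemble the theorem from the reducibility and non-reducibility lemmas already proved in this subsection, together with Lemma~\ref{lemm-zvsn}. The statement splits into three parts, and each part is obtained by pairing one positive reduction with one separating language, so there is essentially no new work to do --- only bookkeeping.

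First I would handle the incomparability of \asls\/ and \nls. The lemma stating that \asls\/ is not reducible to \nls, witnessed by the family $L_\lambda$ and the cycle-pumping argument inherited from Lemma~\ref{lemm-zvsn}, gives one direction. For the converse direction I would invoke Lemma~\ref{lemm-nzlstoasls}, which asserts in particular that \nls\/ is not reducible to \asls. Together these two facts say that neither class is reducible to the other, i.e.\ they are incomparable.

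Next I would treat the claim that \zls\/ is strictly more expressive than \asls: the lemma showing \asls\/ is reducible to \zls\/ gives the inclusion, and Lemma~\ref{lemm-nzlstoasls} supplies strictness since it states that \zls\/ is not reducible to \asls. Finally, for the claim that \zls\/ is strictly more expressive than \nls, the lemma showing \nls\/ is reducible to \zls\/ gives the inclusion, and Lemma~\ref{lemm-zvsn} supplies strictness, because the language $L_z$ it exhibits is expressible as a \zls\/ but provably not as a \nls.

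I do not expect a genuine obstacle: all the technical content lives in the preceding lemmas (the deterministic-in-the-limit normal form for \nls, the \asbw-to-\zbw\/ reduction of~\cite{BG08}, and the cycle-pumping separation of Lemma~\ref{lemm-zvsn}), and the theorem merely collects them into the Hasse-diagram statement $\textrm{\nls}, \textrm{\asls} \subsetneq \textrm{\zls}$ with \nls\/ and \asls\/ incomparable. The only point requiring a little care is to keep the two separating languages straight: $L_z$ is the witness for "\zls\/ not reducible to \nls", while the family $L_\lambda$ (for the specific $\lambda$ produced by the argument of Lemma~\ref{lemm-zvsn}) is the witness for "\asls\/ not reducible to \nls"; one should make sure the value of $\lambda$ is fixed exactly as in that proof.
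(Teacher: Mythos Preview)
Your proposal is correct and matches the paper's (implicit) approach: the theorem is stated without proof in the paper precisely because it is meant as a summary of the preceding lemmas, and you have collected exactly the right ones --- Lemma~\ref{lemm-nzlstoasls} for both ``\nls\/ not reducible to \asls'' and ``\zls\/ not reducible to \asls'', the $L_\lambda$ lemma for ``\asls\/ not reducible to \nls'', the two reduction lemmas for the inclusions into \zls, and Lemma~\ref{lemm-zvsn} for strictness over \nls. One tiny wording point: the $L_\lambda$ lemma only asserts \emph{existence} of a suitable $\lambda\in(0,1)$ via an argument similar to Lemma~\ref{lemm-zvsn}, it does not literally reuse the value from that lemma, so your final caveat should simply say ``fix some $\lambda$ as guaranteed by that lemma'' rather than ``the specific $\lambda$ produced by the argument of Lemma~\ref{lemm-zvsn}''.
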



\begin{lemma}\label{lemm:zcwtozbw}
\zcw\/ is reducible to \zbw.
\end{lemma}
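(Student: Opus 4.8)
The plan is to prove this by a direct gadget construction rather than by unwinding~\cite{BG08}. Let $A$ be a \zcw with state set $Q$, weights in $\set{0,1}$ and run value $\LimSup$\,---\,sorry, $\LimInf$; write $F=\set{q\in Q\mid \wt(q)=1}$. Since the value of a run of $A$ is the $\LimInf$ of a $0/1$ sequence, that value is $1$ iff the run is eventually confined to $F$ and $0$ otherwise, so for every word $w$ we have $L_A(w)=1$ iff, writing $\P^A$ for the run-measure of $A$ on $w$, the set of runs that stay in $F$ from some step on has positive $\P^A$-measure. The idea for the target \zbw $B$ is to let it randomly ``guess'' the step from which the run of $A$ enters $F$ for good, and then verify the guess.

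Concretely, $B$ has states $Q\cup\widehat F\cup\set{\bot}$, where $\widehat F=\set{\widehat q\mid q\in F}$ is a fresh copy of $F$; every state of $Q$ and the state $\bot$ get weight $0$, every state of $\widehat F$ gets weight $1$, and the initial distribution of $B$ is that of $A$ (supported on $Q$). From $q\in Q$ on letter $\sigma$, $B$ with probability $\tfrac12$ follows $\delta(q,\sigma)$ inside $Q$ (``keep guessing'') and with probability $\tfrac12$ ``commits'': it moves to $\widehat{q'}$ whenever $\delta(q,\sigma)(q')>0$ and $q'\in F$, and to $\bot$ whenever $\delta(q,\sigma)(q')>0$ and $q'\notin F$. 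From $\widehat q\in\widehat F$ on $\sigma$, $B$ follows $\delta(q,\sigma)$ but redirects to $\bot$ every transition whose target leaves $F$; and $\bot$ is absorbing. I would first check that each $\delta_B(\cdot,\sigma)$ is a genuine distribution (the stray mass is accumulated into $\delta_B(\cdot,\sigma)(\bot)$), so that $B$ is a bona fide probabilistic weighted automaton with weights in $\set{0,1}$, hence a \zbw.

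The correctness argument reduces to a single probability identity. Because only $\widehat F$-states carry weight $1$, $\bot$ is absorbing of weight $0$, and $Q$-states have weight $0$, a run of $B$ has $\LimSup$ value $1$ exactly if it commits at some step $k\ge 1$ and its committed continuation never reaches $\bot$, i.e.\ stays in $F$ forever. The event ``commit exactly at step $k$'' has probability $2^{-k}$ and, conditioned on it, the committed portion of $B$ reproduces $A$'s own path probabilities with the $F$-escaping mass diverted to $\bot$; so the $B$-runs that commit at $k$ and survive are in measure-$2^{-k}$-scaled bijection with the $A$-runs that lie in $F$ from step $k$ on. Summing over $k$ gives, for every $w$,
\[
\P^B\!\bigl(\set{r\mid\LimSup(\wt(r))=1}\bigr)
=\sum_{k\ge 1}2^{-k}\,\P^A\!\bigl(\set{r\mid r\text{ stays in }F\text{ from step }k\text{ on}}\bigr).
\]
The right-hand side is strictly positive iff some summand is, and by $\sigma$-additivity over the increasing union on $k$ this holds iff $\P^A(\set{r\mid r\text{ eventually in }F})>0$. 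Hence $L_B(w)=1\iff L_A(w)=1$ for all $w$, i.e.\ $L_A=L_B$, which is the desired reduction.

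I expect the one genuinely delicate point to be the justification of the scaled-bijection claim underlying the display\,---\,namely that, after $B$ commits at step $k$, the probability it follows a given $F$-path of $A$ equals $A$'s own probability of that path, with no conditioning distortion\,---\,whereas well-formedness of $\delta_B$, the weight-to-acceptance translation, and the final countable-union step are routine. I would also note that the construction is robust: the commit probability $\tfrac12$ may be replaced by any fixed value in $(0,1)$, and the same gadget in fact shows more generally that any probabilistic \coBuchi automaton under positive semantics is simulated by a probabilistic \Buchi automaton under positive semantics.
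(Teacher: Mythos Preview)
Your construction is essentially identical to the paper's: the paper duplicates all of $Q$ into a copy $\bar{Q}$, makes the non-accepting copies absorbing, and from each $q\in Q$ sends half the mass to $Q$ and half to $\bar{Q}$; you keep only the copies $\widehat{F}$ of the accepting states and merge all the absorbing non-accepting copies into a single sink $\bot$, which is a cosmetic collapsing of the same automaton. Your correctness argument via the explicit identity $\P^B(\LimSup=1)=\sum_{k\ge 1}2^{-k}\,\P^A(\text{in }F\text{ from step }k)$ is a touch more precise than the paper's bounding argument, but both proofs rest on the same random ``commit'' idea.
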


\begin{myProof}
Let $A=\tuple{Q,q_I,\Sigma,\delta,C}$ be a \zcw\/ with the set 
$C \subseteq Q$ of accepting states.
We construct a \zbw\/  $\ov{A}$ as follows:
\begin{enumerate}
\item The set of states is $Q \cup \ov{Q}$ where $\ov{Q} =\set{\ov{q} \mid q \in Q}$
is a copy of the states in $Q$;

\item $q_I$ is the initial state;

\item The transition function is as follows, for all $\sigma \in \Sigma$: 
\begin{enumerate} 
\item for all states $q,q' \in Q$, we have 
$\ov{\delta}(q,\sigma,q') = \ov{\delta}(q,\sigma,\ov{q'}) = 
\frac{1}{2}\cdot \delta(q,\sigma,q')$, i.e., the state $q'$ 
and its copy $\ov{q'}$ are reached with half of the original 
transition probability;

\item the states $\ov{q} \in \ov{Q}$ such that $q \not\in C$ are
absorbing states (i.e., $\ov{\delta}(\ov{q},\sigma,\ov{q}) = 1$);

\item for all states $q \in C$ and $q' \in Q$,
we have $\ov{\delta}(\ov{q},\sigma,\ov{q'}) = \delta(q,\sigma,q')$, i.e., 
the transition function in the copy automaton follows that of $A$ 
for states that are copy of the accepting states.
 
\end{enumerate}
\item The set of accepting states is $\ov{C} = \set{\ov{q} \in \ov{Q} 
\mid q \in C}$.
\end{enumerate}
We now show that the language of the \zcw\/ $A$ and the language of 
\zbw\/ $\ov{A}$ coincides.
Consider a word $w$ such that $A(w)=1$.
Let $\alpha$ be the probability that given the word $w$ 
eventually always states in $C$ are visited in $A$, and since $A(w)=1$
we have $\alpha>0$.
In other words, as limit $k$ tends to $\infty$, the probability that 
after $k$ steps only states in $C$ are visited is $\alpha$.
Hence there exists $k_0$ such that the probability that after $k_0$ 
steps only states in $C$ are visited is at least $\frac{\alpha}{2}$.
In the automaton $\ov{A}$, the probability to reach states of $\ov{Q}$
after $k_0$ steps has probability $p = 1 - \frac{1}{2^{k_0}} > 0$.
Hence with positive probability (at least 
$p \cdot \frac{\alpha}{2}$) 
the automaton visits infinitely often the states of $\ov{C}$, and hence
$\ov{A}(w)=1$.
Observe that since every state in $\ov{Q} \setminus \ov{C}$ 
is absorbing and non-accepting), it follows that if we consider an accepting 
run $\ov{A}$, then the run must eventually always visits 
states in $\ov{C}$ (i.e., the copy of the accepting states $C$).
Hence it follows that for a given word $w$, if $\ov{A}(w)=1$, then with 
positive probability eventually always states in $C$ are visited in 
$A$. Thus $A(w)=1$, and the result follows.
\end{myProof}

\begin{lemma}
\zli\/ is reducible to \zls, and \asls\/ is reducible to \asli.
\end{lemma}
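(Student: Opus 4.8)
The plan is to prove both reductions by the same \emph{threshold decomposition} used in Lemma~\ref{lemm-inf-ztoas}, the two statements being dual to each other under the exchange of the positive semantics with the almost-sure semantics and of $\LimInf$ with $\LimSup$. Throughout I would use that, for an automaton with finite weight set $V=\set{v_1<\dots<v_k}$, every run $r$ satisfies $\LimInf(\weight(r))\ge v$ iff $r$ is eventually trapped in states of weight $\ge v$, and $\LimSup(\weight(r))\ge v$ iff $r$ visits states of weight $\ge v$ infinitely often; in particular $\LimInf(\weight(r)),\LimSup(\weight(r))\in V$, hence $L_A(w)\in V$ for every word $w$.

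\emph{First part: \zli\ reducible to \zls.} Let $A$ be a \zli. For each $v\in V$ let $A^v$ be the \zcw\ obtained by declaring the states of weight $\ge v$ accepting; then $A^v(w)=1$ iff with positive probability a run of $A$ over $w$ has $\LimInf(\weight(r))\ge v$, i.e.\ iff $L_A(w)\ge v$. By Lemma~\ref{lemm:zcwtozbw} there is a \zbw\ $\ov{A^v}$ with the same language. I would then build a \zls\ $B^v$ from $\ov{A^v}$ by putting weight $v$ on its B\"uchi states and the minimal weight $v_1$ on all other states: along any run of $B^v$ the $\LimSup$ equals $v$ if the B\"uchi states recur infinitely often and $v_1$ otherwise, and it can never exceed $v$, so $B^v(w)=v$ when $\ov{A^v}(w)=1$ (equivalently $L_A(w)\ge v$) and $B^v(w)=v_1$ otherwise. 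Since $L_A(w)\in V$ and $L_A(w)\ge v_1$ always, $L_A=\max_{v\in V}B^v$; as the positive semantics is trivially closed under $\max$ (disjoint union of the $B^v$ with an initial distribution spread over the components), the $B^v$ merge into a single \zls\ computing $L_A$.

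\emph{Second part: \asls\ reducible to \asli.} This is the precise dual. Let $A$ be an \asls. For each $v\in V$ let $A_v$ be the \asbw\ obtained by declaring the states of weight $\ge v$ accepting; then $A_v(w)=1$ iff with probability $1$ a run of $A$ over $w$ has $\LimSup(\weight(r))\ge v$, i.e.\ iff $L_A(w)\ge v$. Here I would invoke that \asbw\ is reducible to \ascw\ (from~\cite{BG08}; it is the almost-sure counterpart of Lemma~\ref{lemm:zcwtozbw}, with "positive'' replaced by "probability $1$'' and B\"uchi interchanged with co-B\"uchi) to obtain an \ascw\ $D_v$ equivalent to $A_v$, and then form an \asli\ $C^v$ from $D_v$ exactly as in Lemma~\ref{lemm-inf-ztoas}: weight $v$ on the co-B\"uchi accepting states of $D_v$ and weight $v_1$ elsewhere. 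As checked there, $\LimInf$ of a run of $C^v$ is almost surely $v$ when $D_v$ accepts $w$ and is $v_1$ with positive probability otherwise, so $C^v(w)=v$ if $L_A(w)\ge v$ and $C^v(w)=v_1$ otherwise. Hence $L_A=\max_{v\in V}C^v$, and since \asli\ is closed under $\max$ (Lemma~\ref{lemm-closed-asliw-zlsw}) the $C^v$ merge into one \asli\ computing $L_A$.

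The routine part is the case analysis establishing $B^v(w),C^v(w)\in\set{v,v_1}$. The one load-bearing step is the passage back and forth between the quantitative threshold condition "$L_A(w)\ge v$'' and a purely qualitative B\"uchi/co-B\"uchi condition, which is precisely where finiteness of $V$ and $L_A(w)\in V$ are used; and in the second part one must supply the right probabilistic B\"uchi$\to$co-B\"uchi translation ($\asbw\to\ascw$), re-expressing "accepting states recur with probability $1$'' as "accepting states are eventually permanent with probability $1$'', which is the almost-sure dual of Lemma~\ref{lemm:zcwtozbw} rather than a consequence of it.
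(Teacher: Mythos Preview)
Your proof is correct and follows essentially the same approach as the paper: threshold decomposition into \zcw/\asbw\ acceptors, conversion via Lemma~\ref{lemm:zcwtozbw} (and its dual) to \zbw/\ascw, reweighting to obtain $B^v$/$C^v$, and reassembly via closure under $\max$. The paper spells out only the \zli$\to$\zls\ direction and declares the other ``similar''; your explicit treatment of the second half, including the identification of the needed \asbw$\to$\ascw\ step as the complement-dual of Lemma~\ref{lemm:zcwtozbw}, matches what the paper intends.
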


\begin{myProof}
We present the proof that \zli\/ is reducible to \zls, the 
other proof being similar.
Let $A$  be a \zli, and let $V$ be the set of weights that appear in $A$.
For each $v \in V$, it is easy to construct a \zcw\/ $A_v$ whose (boolean) language 
is the set of words $w$ such that $L_A(w) \geq v$, by declaring to be accepting
the states with weight at least $v$.
We then construct for each $v \in V$ a \zbw\/ $\ov{A}_v$ 
that accepts the language of $A_v$ (such 
a \zbw\/ can be constructed by Lemma~\ref{lemm:zcwtozbw}).
Finally, assuming that $V= \{v_1,\dots,v_n\}$ with $v_1 < v_2 < \dots < v_n$,
we construct the \zls\/ $B_i$ for $i=1,2,\dots,n$ where $B_i$ is obtained
from $\ov{A}_{v_i}$ by assigning weight $v_{i}$ to each accepting states,
and $v_1$ to all the other states.
The \zls\/ that expresses the language of $A$ is 
$\max_{i=1,2\ldots,n} B_i$ and since \zls\/ is closed under $\max$ (see Lemma~\ref{lem:closure-max-one}), the 
result follows.
\end{myProof}


\begin{lemma}\label{lemm:comparing-linf-lsup}
\asli\/ and \zls\/ are reducible to each other; \asls\/ and \zli\/ have incomparable expressive power.
\end{lemma}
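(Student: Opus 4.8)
The plan is to handle the two assertions separately, the equivalence $\asli\equiv\zls$ being the substantial one and the incomparability of $\asls$ and $\zli$ following almost directly from earlier lemmas.

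For the reductions between $\asli$ and $\zls$, I would proceed by threshold decomposition, in the style of the proof of Lemma~\ref{lemm-inf-ztoas}. Given an $\asli$ automaton $A$ with weight set $V=\{v_1<v_2<\cdots<v_n\}$, for each $i$ let $A_i$ be the probabilistic coB\"uchi automaton ($\ascw$) obtained from $A$ by declaring accepting exactly the states (transitions) of weight at least $v_i$; since over a finite weight set $\LimInf(\weight(r))\geq v_i$ holds iff the run eventually stays in weight $\geq v_i$, the boolean language of $A_i$ is $\{w\mid L_A(w)\geq v_i\}$. The crucial fact is that $\zbw$ and $\ascw$ recognize the same class of boolean languages: this follows from~\cite{BG08} ($\asbw$ reducible to $\zbw$, and $\zbw$ closed under complementation) together with the elementary duality that the complement of a $\zbw$-language is an $\ascw$-language and conversely. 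Hence each $\{w\mid L_A(w)\geq v_i\}$ is recognized by some $\zbw$ automaton $B_i$. Assigning weight $v_i$ to the accepting states of $B_i$ and weight $v_1$ elsewhere yields a $\zls$ automaton $C_i$ with $L_{C_i}(w)=v_i$ when $L_A(w)\geq v_i$ and $L_{C_i}(w)=v_1$ otherwise, so that $L_A=\max_{1\leq i\leq n} L_{C_i}$; since $\zls$ is closed under $\max$ (Lemma~\ref{lem:closure-max-one}), $L_A$ is $\zls$-definable. The reduction from $\zls$ to $\asli$ is entirely symmetric: the level sets $\{w\mid L_A(w)\geq v_i\}$ of a $\zls$ automaton are $\zbw$-recognizable, hence $\ascw$-recognizable, and one reassembles them using that $\asli$ is closed under $\max$ (Lemma~\ref{lemm-closed-asliw-zlsw}).

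For the incomparability of $\asls$ and $\zli$, I would point to two separating languages already in hand. The language $L_I$ of words with infinitely many $a$'s is recognized by a deterministic $\LimSup$-automaton (weight $1$ after reading $a$, weight $0$ after reading $b$), hence is $\asls$-definable, but it is not $\zli$-definable by Lemma~\ref{lemm-liminf-asvz}; thus $\asls$ is not reducible to $\zli$. Dually, the language $L_F$ of words with finitely many $a$'s is recognized by a deterministic $\LimInf$-automaton, hence is $\zli$-definable, but it is not $\asls$-definable, as shown inside the proof of Lemma~\ref{lemm-nzlstoasls}; thus $\zli$ is not reducible to $\asls$.

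The main obstacle is the coincidence of $\zbw$ and $\ascw$ at the boolean level, and in particular the inclusion $\zbw\subseteq\ascw$, which amounts to the closure of $\zbw$ under complement imported from~\cite{BG08}; once this is granted, the threshold decomposition and the two appeals to closure under $\max$ are routine bookkeeping, and the incomparability part is just a matter of quoting Lemmas~\ref{lemm-liminf-asvz} and~\ref{lemm-nzlstoasls}.
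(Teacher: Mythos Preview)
Your proof is correct, but for the equivalence $\asli\equiv\zls$ you take a genuinely different route from the paper. The paper argues via a single duality observation: an automaton interpreted under one of these two semantics defines (up to negating the weights) the complement of the language it defines under the other, so the two classes are each other's co-class; since both $\asli$ and $\zls$ are closed under complement (Lemma~\ref{lem:asli-zls-complement}), they coincide, and since $\asls$ and $\zli$ are not (Lemma~\ref{lem:asls-zli-complement}), the same duality forces those two to be incomparable. Your argument instead performs a threshold decomposition, reducing the quantitative equivalence to the boolean identity $\ascw=\zbw$ and reassembling via closure under $\max$. Both approaches are sound and ultimately rest on the same input from~\cite{BG08} (closure of $\zbw$ under complement); the paper's route is shorter because the complement-closure lemma has already absorbed the threshold bookkeeping, while yours is more explicit and self-contained. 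For the incomparability part you name the concrete separating witnesses $L_I$ and $L_F$ directly; since Lemma~\ref{lem:asls-zli-complement} is itself proved via exactly these two languages, your argument there is really just an unpacking of the paper's.
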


\begin{myProof}
This result is an easy consequence of the fact that an automaton interpreted as \asli\/ defines
the complement of the language of the same automaton interpreted as \zls\/ (and similarly for \asls\/ and \zli),
and from the fact that \asli\/ and \zls\/ are closed under complement, while \asls\/ and \zli\/ are not (see Lemma~\ref{lem:asli-zls-complement} and~\ref{lem:asls-zli-complement}).
\end{myProof}

\subsection{Probabilistic $\Disc$-automata}

For probabilistic discounted-sum automata, 
the following result establishes equivalence of the nondeterministic
and the positive semantics, and the equivalence of the universal and
the almost-sure semantics.

\begin{theorem}\label{theo:ndi-zdi}
The following assertions hold:
(a)~\ndi\/ and \zdi\/ are reducible to each other; 
(b)~\udi\/ and \asdi\/ are reducible to each other.
\end{theorem}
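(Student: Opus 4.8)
\medskip\noindent\textbf{Proof plan.}
Both parts rest on a single observation about $\Disc_\lambda$: a finite prefix of a run already determines its value up to an error that decays geometrically, while every finite prefix of a run has strictly positive probability. I will call the first fact the \emph{tail bound}: if $\beta$ denotes the maximal absolute weight of $A$, and two runs $r,r'$ over a word $w$ agree on their first $n$ transitions, then $\abs{\Disc_\lambda(\weight(r)) - \Disc_\lambda(\weight(r'))} \le \sum_{i\ge n} 2\beta\lambda^i = \frac{2\beta\lambda^n}{1-\lambda}$, since the two discounted tails each have absolute value at most $\frac{\beta\lambda^n}{1-\lambda}$. The consequence is that on a fixed automaton graph the positive semantics collapses to the nondeterministic (supremum-over-runs) semantics, and the almost-sure semantics collapses to the universal (infimum-over-runs) semantics.

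For part~(a) I would first prove \zdi\ reducible to \ndi. Given a probabilistic $\Disc_\lambda$-automaton $A$, let $A'$ be its underlying non-probabilistic automaton (initial states $=$ support of $\rho_I$, edges $=$ positive-probability transitions, weights unchanged), read as a nondeterministic automaton. The inequality $L^{\max}_{A'}(w) \ge L^{>0}_A(w)$ is immediate: any event $\set{r \mid \Disc_\lambda(\weight(r)) \ge \eta}$ of positive probability is nonempty, hence witnesses a run of value at least $\eta$. For the reverse inequality, fix any run $r$ over $w$ and $\epsilon > 0$, choose $n$ with $\frac{2\beta\lambda^n}{1-\lambda} < \epsilon$, and consider the cylinder of all runs agreeing with $r$ on the first $n$ transitions; its probability equals that of the defining length-$n$ prefix, hence is positive, and by the tail bound every run in it has value greater than $\Disc_\lambda(\weight(r)) - \epsilon$. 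Thus $L^{>0}_A(w) \ge \Disc_\lambda(\weight(r)) - \epsilon$; taking the supremum over all runs $r$ over $w$ and letting $\epsilon \to 0$ yields $L^{>0}_A(w) \ge L^{\max}_{A'}(w)$. For the converse reduction \ndi\ $\to$ \zdi, equip a nondeterministic automaton with the uniform initial distribution and uniform transition distributions: the set of runs, hence $L^{\max}$, is unchanged while every transition now has positive probability, so the argument just given gives $L^{>0} = L^{\max}$.

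Part~(b) is entirely parallel, with $\sup$ and positive probability replaced by $\inf$ and probability~$1$. For \asdi\ $\to$ \udi, take the underlying automaton $A'$ read universally. Every run has value at least $L^{\min}_{A'}(w)$, so the event $\set{r \mid \Disc_\lambda(\weight(r)) \ge \eta}$ is all of $\Run^A(w)$ whenever $\eta \le L^{\min}_{A'}(w)$ (and $\Run^A(w) \ne \emptyset$ since $\delta$ returns distributions); hence $L^{=1}_A(w) \ge L^{\min}_{A'}(w)$. Conversely, given $\epsilon > 0$ pick a run $r$ over $w$ with $\Disc_\lambda(\weight(r)) < L^{\min}_{A'}(w) + \epsilon$ and $n$ with $\frac{2\beta\lambda^n}{1-\lambda} < \epsilon$; the length-$n$ cylinder of $r$ has positive probability and, by the tail bound, lies inside $\set{r' \mid \Disc_\lambda(\weight(r')) < L^{\min}_{A'}(w) + 2\epsilon}$, so $\P^A(\set{r \mid \Disc_\lambda(\weight(r)) \ge \eta}) < 1$ for every $\eta > L^{\min}_{A'}(w) + 2\epsilon$, giving $L^{=1}_A(w) \le L^{\min}_{A'}(w) + 2\epsilon$; let $\epsilon \to 0$. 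The reduction \udi\ $\to$ \asdi\ again uses the uniform-distribution construction on a universal automaton.

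The only routine points to get right are the measurability of the events $\set{r \mid \Disc_\lambda(\weight(r)) \ge \eta}$ (the value is a continuous, hence Borel, function of the infinite run, which is exactly what $\P^A$ is defined on) and the positivity of cylinder probabilities under the two constructions. I do not expect a genuine obstacle here; the one conceptual point — and precisely the reason the analogous equivalences fail for $\LimSup$, $\LimInf$, and $\LimAvg$ — is the tail bound: only for discounted sum does a finite observation pin the value down, which is what forces the probabilistic and non-probabilistic semantics to coincide.
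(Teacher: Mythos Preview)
Your proof is correct. Part~(a) is essentially the paper's argument: the same tail bound $\frac{\lambda^n}{1-\lambda}\cdot(v_{\max}-v_{\min})$, the same two constructions (underlying support automaton for one direction, uniform distributions for the other), and the same cylinder argument showing that the supremum over runs equals the positive-probability supremum.

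Part~(b) is where you diverge. You redo the cylinder argument symmetrically, showing directly that the almost-sure value coincides with the infimum over runs: a positive-probability cylinder near the infimum witnesses that $\P(\Val \ge \eta) < 1$ for $\eta$ above $L^{\min}_{A'}(w) + 2\epsilon$. The paper instead observes that negating all weights exchanges $\sup$ with $\inf$ over runs and, simultaneously, the essential supremum (positive semantics) with the essential infimum (almost-sure semantics); hence \udi\ $\leftrightarrow$ \asdi\ follows from part~(a) by a complement sandwich (\udi\ $\to$ \ndi\ for the complement $\to$ \zdi\ for the complement $\to$ \asdi\ for the original). Your route is more self-contained and makes the role of the tail bound equally visible on both sides; the paper's route is shorter and isolates a duality (almost-sure vs.\ positive, universal vs.\ nondeterministic) that it reuses elsewhere. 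Either is a complete proof.
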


\begin{myProof}
(a)~We first prove that \ndi\/ is reducible to \zdi.
Let $A=\tuple{Q,\rho_I,\Sigma,\delta_A,\weight}$ be a \ndi, and let $v_{\min},v_{\max}$
be its minimal and maximal weights respectively. 
Consider the \zdi\/ $B=\tuple{Q,\rho_I,\Sigma,\delta_B,\weight}$
where $\delta_B(q,\sigma)$ is the uniform distribution over the 
set of states $q'$ such that $(q,\sigma,\{q'\}) \in \delta_A$.
Let $r = q_0 \sigma_1 q_1 \sigma_2 \dots $ be a run of $A$ (over $w=\sigma_1 \sigma_2 \dots$)
with value $\eta$. For all $\epsilon > 0$,
we show that $\P^B(\{r \in \Run^B(w) \mid \Val(\weight(r)) \geq \eta - \epsilon\}) > 0\}$.
Let $n \in \nat$ such that $\frac{\lambda^n}{1-\lambda}\cdot (v_{\max} - v_{\min}) \leq \epsilon$,
and let $r_n = q_0 \sigma_1 q_1 \sigma_2 \dots \sigma_n q_n$. 
The discounted sum of the weights in $r_n$ is at least $\eta - \frac{\lambda^n}{1-\lambda}\cdot (v_{\max})$.
The probability of the set of runs over $w$ that are continuations of $r_n$
is positive, and the value of all these runs is at least $\eta - \frac{\lambda^n}{1-\lambda}\cdot (v_{\max} - v_{\min})$,
and therefore at least $\eta - \epsilon$.
This shows that $L_B(w) \geq \eta$, and thus $L_B(w) \geq L_A(w)$. 
Note that $L_B(w) \leq L_A(w)$ since there is no run in $A$ (nor in $B$) 
over $w$ with value greater than $L_A(w)$. Hence $L_B = L_A$.

Now, we prove that \zdi\/ is reducible to \ndi. Given a
\zdi\/ $B=\tuple{Q,\rho_I,\Sigma,\delta_B,\weight}$,
we construct a \ndi\/ $A=\tuple{Q,\rho_I,\Sigma,\delta_A,\weight}$
where $(q,\sigma,\{q'\}) \in \delta_A$ if and only if
$\delta_B(q,\sigma)(q') >0$, for all $q,q' \in Q$, $\sigma \in \Sigma$.
By analogous arguments as in the first part of the proof, it is easy 
to see that $L_B = L_A$.

(b)~It is easy to see that the complement of the quantitative language defined
by a \udi\/ (resp. \asdi) can be defined by a \ndi\/ (resp. \zdi).
Then, the result follows from Part~$a)$ (essentially,
given a \udi, we obtain easily an \ndi\/ for the complement, then
an equivalent \zdi, and finally a \asdi\/ for the complement of the complement, i.e., the original quantitative language).
\qed
\end{myProof}

Note that a by-product of this proof is that the language of a \zdi\/ does not depend
on the precise values of the probabilities, but only on whether they are positive or not.

\section{Closure Properties of Probabilistic Weighted Automata}


We consider the closure properties of the probabilistic weighted automata
under the operations $\max$, $\min$, complement, and sum. The results are
presented in Table~\ref{tab:closure-properties}.

\begin{table}[t]
\begin{center}
\begin{tabular}{|c|l|*{4}{c|}c|*{2}{c|}}
\cline{1-6}\cline{8-9}
&       & $\max$ & $\min$ & comp. & sum & &   \,emptiness\,  & universality\\
\cline{1-6}\cline{8-9}
\multirow{5}{*}{\rotatebox{90}{{\scriptsize $>0$}}} & \zmax   &  \ok & \ok  & \ko   & \ok & &  \ok  &  \ok \\
\cline{2-6}\cline{8-9}
& \zls   &  \ok & \ok  & \ok   & \ok & & \ko &  \ko \\
\cline{2-6}\cline{8-9}
& \zli   &  \ok & \ok  & \ko   & \ok & & \ok &  \ok  \\
\cline{2-6}\cline{8-9}
& \zla\,   &  \ok & \ko  & \ko   &  ?   & & ?      & ?     \\
\cline{2-6}\cline{8-9}
& \zdi   &  \ok & \ko  & \ko   & \ok & & \ok  &  ? (1)  \\
\cline{1-6}\cline{8-9}\\[-8pt]
\cline{1-6}\cline{8-9}
\multirow{5}{*}{\rotatebox{90}{{\scriptsize almost-sure}\ }} & \asmax &  \ok & \ok  & \ko   & \ok & &   \ok &  \ok \\
\cline{2-6}\cline{8-9}
& \asls  &  \ok & \ok  & \ko   & \ok & &  \ok  &  \ok\\
\cline{2-6}\cline{8-9}
& \asli  &  \ok & \ok  & \ok   & \ok & &  \ko  &  \ko\\
\cline{2-6}\cline{8-9}
& \asla  &  \ko & \ok  & \ko   & \ko & &  ?     & ? \\
\cline{2-6}\cline{8-9}
& \asdi  &  \ko & \ok  & \ko   & \ok & &  ? (1)  &  \ok \\
\cline{1-6}\cline{8-9}
\multicolumn{9}{c}{{\large \strut}\parbox[t]{82mm}{The universality problem for \ndi\/ can be reduced to~(1). It is not known whether this problem is decidable.}}
\end{tabular}
\end{center}
\caption{Closure properties and decidability of the emptiness and universality problems.\label{tab:closure-properties}}
\end{table}

\subsection{Closure under $\max$ and $\min$}

\begin{lemma}[Closure by initial non-determinism]\label{lem:closure-max-one}
\zls, \zli\/ and \zla\/ is closed under $\max$; and 
\asls, \asli\/ and \asla\/ is closed under $\min$.
\end{lemma}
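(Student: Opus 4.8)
The plan is to prove closure under $\max$ for the positive-semantics classes by a simple union-of-automata construction using initial randomization, and dualize it for closure under $\min$ in the almost-sure semantics. I would first fix two automata $A_1, A_2$ of the same type (say \zls), with initial distributions $\rho_I^1, \rho_I^2$, and build a new automaton $B$ whose state space is the disjoint union of those of $A_1$ and $A_2$ (renaming states if necessary so the copies are disjoint), whose transition function within each copy is inherited from $A_i$, and whose initial distribution is $\rho_I^B = \tfrac12 \rho_I^1 + \tfrac12 \rho_I^2$ — i.e.\ with probability $\tfrac12$ the automaton starts in a state drawn from $\rho_I^1$ and stays forever inside the $A_1$-copy, and symmetrically for $A_2$. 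The weight function is inherited coordinatewise. Since the two copies never communicate, every infinite run of $B$ over a word $w$ lives entirely in one copy, and is a run of the corresponding $A_i$ over $w$, carrying half its original probability (times $\tfrac12$).

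The key step is then the probability bookkeeping. For the positive semantics and $\max$: I claim $L^{>0}_B(w) = \max\{L^{>0}_{A_1}(w), L^{>0}_{A_2}(w)\}$. Indeed, for any threshold $\eta$, the set of runs of $B$ over $w$ with value $\ge \eta$ is the disjoint union of (a copy of) the corresponding set for $A_1$ and that for $A_2$; its $\P^B$-measure is $\tfrac12$ times the sum of the two $\P^{A_i}$-measures, which is positive if and only if at least one of the two is positive. Taking the supremum over such $\eta$ gives exactly the max of the two languages. This argument is uniform in the value function — it only uses that the runs of $B$ partition into runs of $A_1$ and of $A_2$ — so the same construction works verbatim for \zli\/ and \zla. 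For the almost-sure semantics and $\min$: with the same $B$, the set of runs with value $\ge \eta$ has $\P^B$-measure $1$ if and only if \emph{both} of the $\P^{A_i}$-sets have measure $1$ (since each contributes a $\tfrac12$-mass component and the total must be $1$). Hence $L^{=1}_B(w) = \min\{L^{=1}_{A_1}(w), L^{=1}_{A_2}(w)\}$, again uniformly in $\Val$, giving closure under $\min$ for \asls, \asli, \asla.

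I do not expect a serious obstacle here; the construction is the quantitative analogue of the classical ``guess which automaton to run'' trick, and the only thing to be careful about is the measure-theoretic statement that $\P^B(\cdot)$ restricted to runs starting in the $A_i$-copy equals $\tfrac12$ times the pushforward of $\P^{A_i}(\cdot)$ — which is immediate from the definition of $\P^A$ as the product of the initial probability with the transition probabilities along the run, since $\rho_I^B(q) = \tfrac12 \rho_I^i(q)$ for $q$ in the $i$-th copy and all subsequent factors agree with those of $A_i$. One minor point worth stating explicitly is that the choice of mixing weights $\tfrac12,\tfrac12$ is immaterial: any pair of positive weights summing to $1$ works, since positivity (resp.\ being equal to $1$) of the relevant measures is insensitive to rescaling by a positive constant. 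The mildly delicate case is \zla/\asla, where one should note that $\LimAvg$ of a run is unaffected by the finite prefix consisting of the initial state, so no correction is needed when passing between runs of $B$ and runs of $A_i$.
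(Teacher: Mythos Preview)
Your proof is correct and follows the same idea as the paper: split the initial probability between the two automata so that every run of the combined automaton lives entirely in one component, then observe that positivity of the run-measure corresponds to a disjunction (giving $\max$) and measure~$1$ to a conjunction (giving $\min$). Your implementation---mixing the two initial distributions directly as $\tfrac12\rho_I^1+\tfrac12\rho_I^2$ on the disjoint union---is in fact slightly cleaner than the paper's, which instead adds a fresh initial state $q_0$ whose outgoing transitions go uniformly to the one-step successors of the original initial states; this introduces a one-step prefix (harmless since $\LimSup$, $\LimInf$, $\LimAvg$ are all prefix-independent) that your version avoids, making your remark about the ``mildly delicate'' $\LimAvg$ case unnecessary in your own construction.
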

\begin{myProof}
Given two automata $A_1$ and $A_2$ consider the automata $A$ 
obtained by initial non-deterministic choice of $A_1$ and $A_2$.
Formally, let $q_1$ and $q_2$ be the initial states of $A_1$ and 
$A_2$, respectively, then in $A$ we add an initial state $q_0$ and
the transition from $q_0$ is as follows:
for $\sigma \in \Sigma$, consider the set $Q_{\sigma} = 
\set{q \in Q_1 \cup Q_2 \mid \trans_1(q_1,\sigma)(q)>0 \text{ or } 
\trans_2(q_2,\sigma)(q) >0}$.
From $q_0$, for input letter $\sigma$, the successors are from $Q_{\sigma}$ 
each with probability $1/|Q_{\sigma}|$.
If $A_1$ and $A_2$ are \zls\/ (resp. \zli, \zla), then 
$A$ is a \zls\/ (resp. \zli, \zla) such that $A=\max\set{A_1,A_2}$.
Similarly, if $A_1$ and $A_2$ are \asls\/ (resp. \asli, \asla), then 
$A$ is a \asls\/ (resp. \asli, \asla) such that $A=\min\set{A_1,A_2}$.
\qed
\end{myProof}

\begin{lemma}[Closure by synchronized product]\label{lem:closure-max-two}
\asls\/ is closed under $\max$ and 
\zli\/ is closed under $\min$.
\end{lemma}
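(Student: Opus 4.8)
The plan is to handle the two claims by a synchronized product construction, exploiting in each case a positive-probability event in the product Markov chain.

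First, consider the claim that \asls\/ is closed under $\max$. Given two \asls-automata $A_1$ and $A_2$, I would build the product automaton $A = A_1 \times A_2$ whose states are pairs $(q_1,q_2)$, whose transitions on letter $\sigma$ pick independently a successor of $q_1$ and of $q_2$ according to $\delta_1$ and $\delta_2$, and whose weight on a product transition is the \emph{maximum} of the two component weights. Fix a word $w$ and let $v = \max\{L_{A_1}(w), L_{A_2}(w)\}$, say attained by $A_1$. The key observation is that in the product, with probability $1$ the first component visits states of $\LimSup$-value at least $L_{A_1}(w) = v$ infinitely often (by definition of the almost-sure semantics for $A_1$), and therefore the product run has $\LimSup$ of weights at least $v$ almost surely, so $L_A(w) \ge v$. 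For the reverse inequality, I would argue that any value $v'$ achieved by the product with probability $1$ forces, for infinitely many positions, the larger of the two component weights to be at least $v' - \epsilon$; a pigeonhole/Borel--Cantelli-type argument on which of the two components is the "large" one infinitely often shows that at least one of $A_1, A_2$ already achieves value $\ge v'$ under its own almost-sure semantics, giving $L_A(w) \le v$.

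Second, the claim that \zli\/ is closed under $\min$ is the exact dual. Given \zli-automata $A_1, A_2$, take the same product, but now label each product transition with the \emph{minimum} of the two component weights. Fixing $w$ and letting $v = \min\{L_{A_1}(w), L_{A_2}(w)\}$, in the product Markov chain on input $w$ the event "$A_1$ reaches value $\ge L_{A_1}(w)$ and $A_2$ reaches value $\ge L_{A_2}(w)$" is an intersection of two positive-probability events on (largely) independent coordinates; one must check this intersection itself has positive probability, after which the product run has $\LimInf$ of the minimum weights at least $v$ with positive probability, so $L_A(w) \ge v$. Conversely, on any run whose $\LimInf$ is $\ge v'$, both components have $\LimInf \ge v'$, so each $A_i$ achieves $\ge v'$ with positive probability (projecting the positive-probability product event), giving $L_A(w) \le v$.

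The main obstacle is the positive-probability intersection argument in the \zli\/ case: unlike the almost-sure setting, two positive-probability events need not have positive-probability intersection in general, so I would need to use the product structure carefully --- e.g., condition on a finite prefix after which each component has, independently, probability bounded away from $0$ of staying "good", and use that the two components evolve independently given the input word so that the conditional probabilities multiply. A clean way to organize this is to note that for a \zli\/ (or \zcw-style) automaton the set of runs of value $\ge v$ on a fixed word is already a tail event in the corresponding Markov chain, so its probability is $0$ or determined by which closed recurrent sets are reached; one then argues the product reaches, with positive probability, a pair of closed recurrent components each witnessing value $\ge v$ in its coordinate. The \asls\/ direction is easier because "probability $1$" is closed under finite intersection, so there the only real work is the converse (upper-bound) inequality.
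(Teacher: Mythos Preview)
Your construction and overall approach match the paper's: synchronized product with $\max$ (resp.\ $\min$) of the component weights. The lower bound for \asls\ and the upper bound for \zli\ are argued exactly as you do, by projecting the product event to one coordinate.

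Where you diverge from the paper is in the remaining two inequalities, and there your write-up overcomplicates things and slightly misidentifies the difficulty. For the \asls\ upper bound the paper does \emph{not} argue by contrapositive plus a ``pigeonhole/Borel--Cantelli'' step; instead it takes any $v' > v$, observes that since $A_i(w) < v'$ each component has \emph{positive} probability of eventually staying in states of weight $< v'$, and then uses that in the product these two coordinate events are independent (the two components evolve independently given $w$), so their intersection has positive probability. Hence with positive probability the $\max$-weight is eventually $< v'$, which gives $L_A(w) < v'$. This is precisely the same ``intersection of two positive-probability events'' that you flagged as the main obstacle for the \zli\ lower bound---the two cases are perfectly symmetric, and both are handled in one line by independence of the coordinates in the synchronized product. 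Your closed-recurrent-set/tail-event machinery is not needed: since the product measure on runs over $w$ is the product of the two coordinate measures, the probabilities simply multiply. Your contrapositive argument for the \asls\ upper bound can also be made to work, but note that ``pigeonhole'' only gives $P(E_1 \cup E_2)=1$; to conclude $P(E_1)=1$ or $P(E_2)=1$ you again need independence, not Borel--Cantelli.
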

\begin{myProof}
We present the proof that \asls\/ is closed under $\max$. 
Let $A_1$ and $A_2$ be two probabilistic weighted automata with 
weight function $\wt_1$ and $\wt_2$, respectively.
Let $A$ be the usual synchronized product of $A_1$ and $A_2$ with 
weight function $\wt$ such that 
$\wt((s_1,s_2))=\max\set{\wt_1(s_1),\wt_2(s_2)}$.
Given a path $\pi=((s_0^1,s_0^2),(s_1^1,s_1^2), \ldots)$ in $A$ we denote by
$\pi \restr 1$ the path in $A_1$ that is the projection of the first component 
of $\pi$ and we use similar notation for $\pi \restr 2$.
Consider a word $w$, let $\max\set{A_1(w),A_2(w)}=v$. 
We consider the following two cases to show that $A(w)=v$.
\begin{enumerate}

\item W.l.o.g. let the maximum be achieved by $A_1$, i.e., $A_1(w)=v$. 
Let $B_i^v$ be the set of states $s_i$ in $A_i$ such that weight of $s_i$ 
is at least $v$.
Since $A_1(w)=v$, given the word $w$, in $A_1$ the event $\Buchi(B_1^v)$ holds 
with probability~1.
Consider the following set of paths in $A$
\[
\Pi^v =\set{\pi \mid (\pi \restr 1) \in \Buchi(B_1^v) }.
\]
Since given $w$, the event $\Buchi(B_1^v)$ holds with probability~1 in $A_1$,
it follows that given $w$, the event $\Pi^v$ holds with probability~1 in $A$.
The $\wt$ function ensures that every path $\pi \in \Pi^v$ visits weights 
of value at least $v$ infinitely often.
Hence $A(w) \geq v$.

\item Consider a weight value $v' > v$. 
Let $C_i^v$ be the set of states $s_i$ in $A_i$ such that the weight of 
$s_i$ is less than $v'$.
Given the word $w$, since $A_i(w) < v'$, it follows that 
probability of the event $\coBuchi(C_i^v)$ in $A_i$, given the word $w$, 
is positive.
Hence given the word $w$, the probability of the event 
$\coBuchi(C_1^v \times C_2^v))$ is positive in $A$.
It follows that $A(w) < v'$.
\end{enumerate}
The result follows. 
If $A_1$ and $A_2$ are \zli, and in $A$ we assign weights 
such that every state in $A$ has the minimum weight of its 
component states, and we consider $A$ as a \zli, then 
$A=\min\set{A_1,A_2}$.
The proof is similar to the result for \asls.
\qed
\end{myProof}

\begin{lemma}\label{lemm-closed-asliw-zlsw}
\zls\/ is closed under $\min$ and \asli\/ is 
closed under $\max$.
\end{lemma}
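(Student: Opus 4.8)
I would prove only that $\zls$ is closed under $\min$ and obtain closure of $\asli$ under $\max$ for free by complementation. The link is the observation already exploited for Lemma~\ref{lemm:comparing-linf-lsup}: for a probabilistic weighted automaton $A$ with weights $\weight$, interpreting $A$ as an $\asli$-automaton yields exactly the complement of the language obtained by interpreting the same transition structure with weights $1-\weight$ as a $\zls$-automaton, and symmetrically. This is pure algebra on the semantics: since the weights range over a finite set $V$, one checks that $L^{=1}_A(w)$ equals the least element of the support of the distribution of $\LimInf(\weight(r))$, whereas the $\zls$-value on $w$ of the automaton with weights $1-\weight$ equals $1$ minus that same quantity (using $\LimSup(1-v_n)=1-\LimInf(v_n)$). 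Since $1-\min(a,b)=\max(1-a,1-b)$, from $\asli$-automata for $M_1,M_2$ we get $\zls$-automata for $1-M_1,1-M_2$; granting closure of $\zls$ under $\min$ we then get a $\zls$-automaton for $\min(1-M_1,1-M_2)=1-\max(M_1,M_2)$, and complementing back gives an $\asli$-automaton for $\max(M_1,M_2)$.

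To prove that $\zls$ is closed under $\min$, let $A_1,A_2$ be $\zls$-automata and let $V=\{v_1<\dots<v_n\}$ collect all their weights. As $V$ is finite, $\LimSup(\weight(r))\ge v$ holds iff $r$ visits a state of weight $\ge v$ infinitely often, and $L_{A_j}(w)$ is the largest $v\in V$ with $\P^{A_j}(\LimSup(\weight(r))\ge v)>0$; hence $L_{A_j}(w)\ge v$ iff the probabilistic B\"uchi automaton $A_j^v$ --- the structure of $A_j$ with B\"uchi set the states of weight $\ge v$ --- accepts $w$ under the positive semantics. The crux is that $\zbw$ is closed under intersection: in the synchronised product $A_1^v\times A_2^v$ the two components use independent coin tosses, so under the product measure the events ``component $1$ sees $F_1$ infinitely often'' and ``component $2$ sees $F_2$ infinitely often'' are independent and their conjunction has positive probability iff each conjunct does; the standard round-robin flag construction turning a generalized B\"uchi condition into a B\"uchi condition is deterministic and preserves membership of runs, hence all probabilities. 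Let $B^v$ be a $\zbw$ with $L_{B^v}=\{w\mid L_{A_1}(w)\ge v\text{ and }L_{A_2}(w)\ge v\}$.

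It remains to reassemble. From $B^v$ build a $\zls$-automaton $C^v$ that puts weight $v$ on the B\"uchi states of $B^v$ and weight $v_1$ on all others; then $L_{C^v}(w)=v$ if $w\in L_{B^v}$ and $L_{C^v}(w)=v_1$ otherwise. As $\min(L_{A_1},L_{A_2})(w)$ is the minimum of two elements of $V$ it lies in $V$ and equals $\max\{v\in V\mid L_{A_1}(w)\ge v\text{ and }L_{A_2}(w)\ge v\}$, so $\min(L_{A_1},L_{A_2})=\max_{v\in V}L_{C^v}$. Closure of $\zls$ under $\max$ (Lemma~\ref{lem:closure-max-one}) then furnishes a $\zls$-automaton for $\min(L_{A_1},L_{A_2})$, which together with the reduction of the first paragraph proves the lemma.

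The only step that is not routine bookkeeping is the closure of $\zbw$ under intersection, and within it the point that the two component B\"uchi events are \emph{independent} under the product measure so that positivity multiplies (here the product of two probabilistic automata is literally a product of Markov chains, which is why this works even though the analogous statement for plain nondeterminism needs more care); the threshold decomposition and the reassembly by $\max$ merely duplicate the pattern of the proof that $\zli$ reduces to $\zls$, and the complementation trick is the one behind Lemma~\ref{lemm:comparing-linf-lsup}. A minor but essential point is that $\LimSup$ of a sequence over the finite set $V$ again lies in $V$; this is what makes the ``value $\ge v$ iff B\"uchi'' translation and the final $\max$ exact rather than only $\varepsilon$-approximate.
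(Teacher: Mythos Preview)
Your proof is correct and, for the $\zls$-under-$\min$ half, essentially identical to the paper's: threshold decomposition into $\zbw$'s $A_i^v$, closure of $\zbw$ under intersection, reassembly into $\zls$'s $C^v$ with weights $v$ and $v_1$, and a final $\max$ using Lemma~\ref{lem:closure-max-one}. The paper simply cites~\cite{BG05} for the intersection closure; your independence-plus-round-robin justification is a correct unpacking of that citation.

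The one genuine difference is how you handle $\asli$ under $\max$. The paper says only that the proof is ``similar'', which presumably means the dual construction via $\ascw$ closure under union. You instead derive it from the $\zls$ half by the $\LimInf$/$\LimSup$ complementation identity (the same observation underlying Lemma~\ref{lemm:comparing-linf-lsup}): replace weights $\weight$ by $1-\weight$, swap the semantics, and use $1-\min = \max$ of complements. This is clean and self-contained, since the complementation identity is a two-line computation on finitely-supported distributions (you correctly identify $L^{=1}_A(w)$ as the minimum of the support of $\LimInf(\weight(r))$). Your route has the mild advantage that it sidesteps a potential circularity in a naive dualization --- if one mirrors the $\zls$ proof literally, the reassembly step would ask for $\asli$ closure under $\max$, which is the statement being proved; the paper's ``similar'' must therefore mean either your complementation trick or a direct product-of-thresholds reassembly, neither of which it spells out.
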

\begin{myProof} 
Let $A_1$  and $A_2$ be two \zls. We construct a \zls\/ $A$ 
such that $A = \min\set{A_1,A_2}$. Let $V_i$ 
be the set of weights that appear in $A_i$ (for $i=1,2$), and let 
$V=V_1 \cup V_2$ and let $v_1$ be the least value in $V$.
For each weight $v \in V_1 \cup V_2 = \{v_1,\dots,v_k\}$, 
consider the \zbw\/ $A_i^v$ that is obtained from $A_i$ by considering 
all states with weight at least $v$ as accepting states.
Since \zbw\/ is closed under intersection(by the results of~\cite{BG05}), 
we can construct a \zbw\/ $A_{12}^v$ that is the intersection of $A_1^v$ and 
$A_2^v$, i.e. $A_{12}^v=A_1^v \cap A_2^v$.
We construct a \zls\/ $B_{12}^v$ from $A_{12}^v$ by assigning weights 
$v$ to the accepting states of $A_{12}^v$ and the minimum weight 
$v_1$ to all other states.
Consider a word $w$, and we consider the following cases.
\begin{enumerate}
\item If $\min\set{A_1(w),A_2(w)}=v$, then for all $v' \in V$ such that 
$v' \leq v$ we have $A_{12}^{v'}(w)=1$, (i.e., the \zbw\/ $A_{12}^{v'}$ 
accepts $w$).

\item If $A_{12}^v(w)=1$, then $A_1(w) \geq v$ and $A_2(w) \geq v$, 
i.e., $\min\set{A_1(w),A_2(w)} \geq v$.
\end{enumerate}
It follows from above that 
$\min\set{A_1,A_2}= \max_{v \in V} B_{12}^v$.
Since \zls\/ is closed under $\max$ (by initial non-determinism), 
it follows that \zls\/ is closed under $\min$.
The proof of closure of \asli\/ under $\max$ is similar.
\qed
\end{myProof}

\noindent The closure properties of $\LimAvg$-automata in the positive semantics rely on the following lemma.

\begin{lemma}\label{lem:limavg-min-comp}
Consider the alphabet $\Sigma=\set{a,b}$, and consider the languages
$L_a$ and $L_b$ that assigns the long-run average number 
of $a$'s and $b$'s, respectively.
Then the following assertions hold.
\begin{enumerate}
\item There is no \zla\/ for the language $L_m=\min\set{L_a,L_b}$.
\item There is no \zla\/ for the language $L^*= 1 -\max\set{L_a,L_b}$.
\end{enumerate}
\end{lemma}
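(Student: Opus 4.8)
\emph{Plan.} Both parts will be proved by the same contradiction argument, exploiting that on the three words $a^\omega$, $b^\omega$ and $(b^j a^j)^\omega$ the functions $L_m=\min\set{L_a,L_b}$ and $L^*=1-\max\set{L_a,L_b}$ take identical values: on $a^\omega$ and $b^\omega$ both are $0$ (since $\{L_a,L_b\}=\{0,1\}$ there), while on $(b^j a^j)^\omega$ both equal $1/2$ (the frequency of occurrences of $a$ in the prefixes of $(b^j a^j)^\omega$ converges to $1/2$, so $L_a=L_b=1/2$ there). Hence it suffices to rule out a \zla\ $A$ with $A(a^\omega)=A(b^\omega)=0$ and $A((b^j a^j)^\omega)=1/2$ for every $j\geq 1$.

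\emph{Step 1: constraints from the constant words.} Assume such an $A$ exists. From $A(b^\omega)=0$ under the positive semantics I would first deduce that every closed recurrent set $C$ reachable from the support of $\rho_I$ in the $b$-Markov chain has expected limit-average value $\alpha_C\leq 0$: by Property~1 a run over $b^\omega$ almost surely reaches some closed recurrent set, and once inside $C$ the limit-average of its weight sequence equals $\alpha_C$ almost surely (ergodic theorem for finite Markov chains); so if some reachable $C$ had $\alpha_C>0$ we would get $\P^A(\set{r\mid \LimAvg(\weight(r))\geq \alpha_C})\geq \P^A(\text{reach }C)>0$, forcing $A(b^\omega)\geq\alpha_C>0$, a contradiction. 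The symmetric argument applied to $A(a^\omega)=0$ gives the same for the $a$-Markov chain.

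\emph{Step 2: contradiction via Lemma~\ref{lemm-long-enough-b}.} Since all closed recurrent sets of both the $a$- and the $b$-Markov chains have expected limit-average value at most~$0$, Lemma~\ref{lemm-long-enough-b}(3) yields $j$ such that every closed recurrent set arising from $A$ on input $(b^j a^j)^\omega$ has expected limit-average reward strictly less than $1/2$. Reading this input as a time-homogeneous Markov chain on $Q\times\set{0,\dots,2j-1}$ (tracking the position modulo $2j$), a run over $w=(b^j a^j)^\omega$ almost surely reaches one of these finitely many closed recurrent sets, whose maximal stationary expectation $\alpha^*$ is thus $<1/2$; by the ergodic theorem $\LimAvg(\weight(r))\leq \alpha^*$ almost surely. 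Hence $\P^A(\set{r\mid \LimAvg(\weight(r))\geq\eta})=0$ for every $\eta>\alpha^*$, so $A(w)\leq\alpha^*<1/2$, contradicting $A(w)=1/2$. This refutes the assumed \zla\ for $L_m$, and with the identical computation the one for $L^*$.

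\emph{Main obstacle.} The delicate point is the passage from ``every closed recurrent set has expected limit-average value $<1/2$'' to ``$A(w)<1/2$''. This requires (i) making precise that the closed recurrent sets produced by Lemma~\ref{lemm-long-enough-b} live in the time-homogeneous product chain with the period counter, and (ii) upgrading Property~2 (convergence of the \emph{expected} averages) to the almost-sure convergence of $\LimAvg$ along a run that has entered a recurrent class, i.e. the standard ergodic theorem for finite Markov chains. This is exactly the reasoning already used implicitly in the proofs of Lemmas~\ref{lemm-fin-lang-avg}(3) and~\ref{lemm-inf-lang-avg}(2), so I would reuse it; everything else is bookkeeping (the values of $L_m$ and $L^*$ on $a^\omega$, $b^\omega$, $(b^j a^j)^\omega$, and the remark that, since \zla\ is closed under $\max$ by Lemma~\ref{lem:closure-max-one} and $L_a,L_b$ are \zla, part~2 is precisely a non-closure-under-complement statement).
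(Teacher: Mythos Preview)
Your approach is essentially the paper's—both hinge on Lemma~\ref{lemm-long-enough-b}(3) applied to $(b^j a^j)^\omega$—but Step~1 has a real gap. From $A(b^\omega)=0$ you only learn that the $b$-closed recurrent sets reachable from the initial support \emph{along $b$'s only} have expected limit-average $\leq 0$. Lemma~\ref{lemm-long-enough-b}(3), however, requires this for \emph{all} closed recurrent sets of the $b$-Markov chain: on input $(b^j a^j)^\omega$ the intervening $a$-blocks may carry the automaton into states from which entirely new $b$-recurrent classes become reachable, and those are not covered by your hypothesis. So the reduction in your Plan (``it suffices to rule out $A$ with the prescribed values on $a^\omega$, $b^\omega$, $(b^j a^j)^\omega$'') is too aggressive; those three constraints alone do not yield the hypothesis of Lemma~\ref{lemm-long-enough-b}(3).

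The paper closes this gap by keeping the full assumption $A=L_m$ (resp.\ $L^*$) and arguing by dichotomy: either some $a$- or $b$-closed recurrent set $C$ reachable in $A$ via an arbitrary finite word $w_C$ has positive expected limit-average, in which case $A(w_C\cdot a^\omega)>0$ (resp.\ with $b^\omega$) while $L_m(w_C\cdot a^\omega)=L^*(w_C\cdot a^\omega)=0$; or no such $C$ exists, and then—assuming w.l.o.g.\ every state of $A$ is reachable—the full hypothesis of Lemma~\ref{lemm-long-enough-b}(3) is available and your Step~2 goes through. In your framing the fix is simply to also use $A(w\cdot a^\omega)=A(w\cdot b^\omega)=0$ for every finite $w$, which $L_m$ and $L^*$ both satisfy. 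The rest of your write-up (the product chain with the period counter, and the appeal to the ergodic theorem to pass from recurrent-class expectations to almost-sure $\LimAvg$ values) is correct and matches what the paper does implicitly.
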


\begin{myProof}
To obtain a contradiction, assume that there exists a \zla\/ 
$A$ (for either $L_m$ or $L^*$). 
We first claim that if we consider the $a$-Markov or the $b$-Markov chain of $A$, 
then there must be either an $a$-closed recurrent set or a 
$b$-closed recurrent set $C$ that is reachable in $A$ 
such that the expected sum of the weights in $C$ is positive.
Otherwise, if for all $a$-closed recurrent sets and $b$-closed recurrent sets we have that 
the expected sum of the weights is zero or negative, then we fool the automaton
as follows.
By Lemma~\ref{lemm-long-enough-b}, it follows that there exists a $j$ such that 
$A( (a^j \cdot b^j) ^\omega) < 1/2$, however, $L_m(w)=L^*(w)=\frac{1}{2}$,
i.e., we have a contradiction.
W.l.o.g., we assume that there is an $a$-closed recurrent set $C$ such that expected 
sum of weights of $C$ is positive.
Then we present the following word $w$: a finite word $w_C$ to reach 
the cycle $C$, followed by $a^\omega$; the answer of the automaton is positive, 
{\it i.e.}, $L_{A}(w)>0$,  while $L_m(w)=L^*(w)=0$.
Hence the result follows.
\end{myProof}

\begin{lemma}\label{lem:zla-min--asla-max}
\zla\/ is not closed under $\min$ and \asla\/ is not closed under $\max$.
\end{lemma}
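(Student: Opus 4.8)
The plan is to derive both non-closure claims directly from Lemma~\ref{lem:limavg-min-comp} together with the duality between the positive and almost-sure semantics that was already exploited (e.g.\ in Lemma~\ref{lemm:comparing-linf-lsup}): an automaton interpreted as \asla\/ computes the complement of the language of the same automaton interpreted as \zla, modulo the fact that $\LimAvg$ is defined with a $\liminf$ rather than a $\limsup$. First I would handle the \zla\/ case. Suppose for contradiction that \zla\/ is closed under $\min$. The languages $L_a$ and $L_b$ of Lemma~\ref{lem:limavg-min-comp}, which assign the long-run average number of $a$'s (resp.\ $b$'s), are each expressible by a deterministic $\LimAvg$-automaton (a single state with a self-loop of weight $1$ on $a$ and $0$ on $b$, and symmetrically), hence by a \zla. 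Closure under $\min$ would then produce a \zla\/ for $L_m = \min\{L_a,L_b\}$, contradicting part~(1) of Lemma~\ref{lem:limavg-min-comp}.

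For the \asla\/ case I would argue analogously using part~(2) of Lemma~\ref{lem:limavg-min-comp}, routed through the positive semantics. Note that $L_a + L_b$ is the constant function $1$ (every position carries exactly one $a$ or one $b$), so $L_b = 1 - L_a$ and $L_a = 1 - L_b$; hence $\min\{1-L_a, 1-L_b\} = 1 - \max\{L_a,L_b\} = L^*$. Now $1-L_a$ is itself just $L_b$, which is expressible by a deterministic $\LimAvg$-automaton and therefore by a \zla, and similarly $1-L_b = L_a$ is a \zla. If \asla\/ were closed under $\max$, then by the complement-duality (an automaton as \zla\/ and the same automaton as \asla\/ give complementary languages, since on a deterministic automaton both semantics agree and equal the unique run's value, and $\max$ under \asla\/ dualizes to $\min$ under \zla) one would obtain a \zla\/ computing $L^*$, contradicting part~(2) of Lemma~\ref{lem:limavg-min-comp}.

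The main obstacle I anticipate is making the complement-duality argument for \asla\/ fully rigorous, because $\LimAvg$ uses $\liminf$ and is not self-complementary the way $\LimSup$ and $\LimInf$ are with respect to each other. The cleanest way around this is probably not to invoke duality at all for the \asla\/ statement, but instead to reprove the obstruction of Lemma~\ref{lem:limavg-min-comp} directly in the almost-sure setting: assume a \asla\/ $A$ computes $\max\{L_a,L_b\}$ (equivalently one argues about $1 - L^*$), and use Lemma~\ref{lemm-long-enough-b} — whose three parts are stated symmetrically enough to apply under the almost-sure reading — to fool $A$ on a word of the form $(a^j b^j)^\omega$ or $w_C \cdot a^\omega$. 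Concretely: since $A$ must output value at least $\tfrac12$ on $(a^j b^j)^\omega$ with probability $1$ for every $j$, every reachable $a$-closed and $b$-closed recurrent set must have expected limit-average at least $\tfrac12$, but then presenting $w_C \cdot a^\omega$ for a word $w_C$ reaching an $a$-closed recurrent set forces $A(w_C \cdot a^\omega) \geq \tfrac12$ with probability $1$ while $\max\{L_a,L_b\}(w_C \cdot a^\omega) = 1$ is fine — so one instead exploits the gap in the other direction, choosing $w_C$ reaching a recurrent set whose average is strictly below $1$ and concluding $A$ cannot reach value $1$ almost surely on $w_C \cdot a^\omega$ where $\max\{L_a,L_b\} = 1$. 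I expect the bookkeeping of which inequality is violated (and hence which of the two claimed languages $L_m$, $L^*$ to target in the \asla\/ case) to be the only delicate point; everything else is a routine packaging of already-established lemmas.
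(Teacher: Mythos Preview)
Your treatment of the \zla\/ case is correct and matches the paper's: $L_a$ and $L_b$ are \dla-definable, hence \zla-definable, and Lemma~\ref{lem:limavg-min-comp}(1) rules out a \zla\/ for $\min\{L_a,L_b\}$.

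For the \asla\/ case, however, there is a genuine gap.  You rightly abandon the complement-duality route (negating weights turns $\liminf$ into $\limsup$, so \asla\/ and \zla\/ are not complements of one another in the way \asli\/ and \zls\/ are).  But your fallback direct argument does not go through.  Your step ``from $A((a^jb^j)^\omega)\geq\tfrac12$ deduce that every $a$-closed and $b$-closed recurrent set has average $\geq\tfrac12$'' conflates recurrent sets of different Markov chains: the closed recurrent sets of the $(a^jb^j)^\omega$-chain are not the $a$-closed recurrent sets.  Your alternative step ``choose $w_C$ reaching an $a$-closed recurrent set whose average is strictly below $1$'' fails for the opposite reason: the hypothesis that $A$ computes $\max\{L_a,L_b\}$ forces \emph{every} reachable $a$-closed recurrent set to have expected average exactly~$1$, since $A(w_C\cdot a^\omega)$ must equal $1$ almost surely.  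So there is no such recurrent set to exploit, and Lemma~\ref{lemm-long-enough-b} gives nothing useful here either (its part~(1) yields $A((a^jb)^\omega)>0$, but the target value $j/(j{+}1)$ is also positive).

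The paper supplies the missing idea.  Having established that every $a$-closed recurrent set has average~$1$, it compares the two periodic words $(a^j b^{3j})^\omega$ (target value $3/4$) and $(a^{2j}b^{3j})^\omega$ (target value $3/5$).  For large $j$, the state distribution after reading $a^j$ is $\epsilon$-close to that after $a^{2j}$ (convergence to the stationary distribution of the $a$-chain), so the contribution of the $b^{3j}$ block and of the leading $a^j$ block is essentially unchanged between the two words; the \emph{extra} $a^j$ block, spent inside an $a$-closed recurrent set of average~$1$, contributes roughly $j$ to the period sum.  Since $A((a^j b^{3j})^\omega)=3/4$ pins the period sum of the shorter word to about $3j$, the period average of the longer word is at least roughly $(3j+j)/(5j)=4/5$, and a quantitative version of this gives $A((a^{2j}b^{3j})^\omega)>3/5$, a contradiction.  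This comparison of two periodic inputs with different $a$-block lengths is the crux; it is not mere bookkeeping, and it does not fall out of Lemma~\ref{lemm-long-enough-b} or Lemma~\ref{lem:limavg-min-comp}.
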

\begin{myProof}
The result for \zla\/ follows from Lemma~\ref{lem:limavg-min-comp}.
We now show that \asla\/ is not closed under $\max$.
Consider the alphabet $\Sigma=\set{a,b}$ and the quantitative languages
$L_a$ and $L_b$ that assign the value of long-run average
number of $a$'s and $b$'s, respectively.
There exists \dla\/ (and hence \asla) for $L_a$ and $L_b$.
We show that $L_m=\max(L_a,L_b)$ cannot be expressed by 
an \asla\/. By contradiction, assume that $A$ is an \asla\/
with set of states $Q$ that defines $L_m$.
Consider any $a$-closed recurrent $C$ in $A$. 
The expected limit-average of the weights 
of the recurrent set must be~1, as if we consider the 
word $w^*=w_C \cdot a^\omega$ where $w_C$
is a finite word to reach $C$, the value
of $w^*$ in $L_m$ is $1$. 
Hence, the limit-average of the weights of all the reachable 
$a$-closed recurrent set $C$ in $A$ is~1.

Given $\epsilon>0$, there exists $j_\epsilon$ such that the following 
properties hold:
\begin{enumerate}
\item from any state of $A$, given the word $a^{j_\epsilon}$ with 
probability $1-\epsilon$ an $a$-closed recurrent set is reached 
(by property~1 for Markov chains);

\item once an  $a$-closed recurrent set is reached, given the word 
$a^{j_\epsilon}$, (as a consequence of property~2 for Markov chains)
we can show that the following properties hold:
(a)~the expected average of the weights is at least $j_\epsilon\cdot(1-\epsilon)$, 
and (b)~the probability distribution of the states is with $\epsilon$ of 
the probability distribution of the states for the word 
$a^{2 \cdot j_\epsilon}$ (this holds as the probability distribution 
of states on words $a^j$ 
converges to the probability distribution of states on the word $a^\omega$).

\end{enumerate}
\newcommand{\wh}{\widehat}
Let $\beta>1$ be a number that is greater than the absolute maximum value 
of weights in $A$.
We chose $\epsilon>0$ such that $\epsilon< \frac{1}{40\cdot\beta}$.
Let $j= 2\cdot j_\epsilon$ (such that $j_\epsilon$ satisfies the 
properties above).
Consider the word $(a^j \cdot b^{3j})^\omega$ and the answer by $A$ must be
$\frac{3}{4}$, as $L_m((a^j \cdot b^{3j})^\omega)=\frac{3}{4}$.
Consider the word $\wh{w}=(a^{2j} \cdot b^{3j})^\omega$ and consider a closed
recurrent set in the Markov chain obtain from $A$ on $\wh{w}$. 
We obtain the following lower bound on the expected limit-average of the 
weights: 
(a)~with probability at least $1-\epsilon$, after $j/2$ steps, $a$-closed
recurrent sets are reached;  
(b)~the expected average of the weights for 
the segment between $a^j$ and $a^{2j}$ is at least $j\cdot(1-\epsilon)$;
and (c)~the difference in probability distribution of the states after 
$a^j$ and $a^{2j}$ is at most $\epsilon$.
Since the limit-average of the weights of $(a^j \cdot b^{3j})^\omega$ is 
$\frac{3}{4}$, the lower bound on the limit-average of the weights is as 
follows
\[
\begin{array}{rcl}
(1-3\cdot\epsilon) \cdot( \frac{3\cdot j + j\cdot(1-\epsilon)  }{5j}) 
-3\cdot\epsilon\cdot\beta
& = & (1-\epsilon)(\frac{4}{5} -\frac{\epsilon}{5}) - 3\cdot\epsilon\cdot\beta \\[1ex] 
& \geq & \frac{4}{5} - \epsilon - 3\cdot \epsilon \cdot \beta \\[1ex]
& \geq & \frac{4}{5} -  4\cdot \epsilon \cdot \beta \\[1ex]
& \geq & \frac{4}{5} -\frac{1}{10} \\[1ex]
& \geq & \frac{7}{10} > \frac{3}{5}.
\end{array}
\]
It follows that $A((a^{2j} \cdot b^{3j})^\omega) > \frac{3}{5}$.
This contradicts that $A$ expresses $L_m$. 
\end{myProof}

\subsection{Closure under complement}

\begin{lemma}\label{lem:asli-zls-complement}
\zls\/ and \asli\/ are closed under complement.
\end{lemma}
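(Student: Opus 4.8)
The plan is to exploit the duality observed in Lemma~\ref{lemm:comparing-linf-lsup}: an automaton $A$ interpreted as \asli\/ defines exactly the complement of the quantitative language obtained by interpreting the same $A$ as \zls, and vice versa. Consequently, \zls\/ being closed under complement is equivalent to \asli\/ being closed under complement, so it suffices to prove the statement for one of them and transfer it. I would prove it for \zls\/ directly, using the \zbw-based machinery already developed.

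Here is how I would carry it out. Let $A$ be a \zls\/ with weight set $V = \{v_1 < \cdots < v_n\}$. The key fact is the boolean-level result from~\cite{BG08} (used already in earlier lemmas): \zbw\/ and \asbw\/ are reducible to each other, and dually \zcw\/ and \ascw\/ are reducible to each other, and moreover these classes are closed under the relevant boolean operations (in particular, \asbw\/ is closed under complement since its complement is a \zcw\/-type language, which converts back). Concretely, for each threshold $v_i$ form the \zbw\/ $A_{\geq v_i}$ recognizing $\{w \mid L_A(w) \geq v_i\}$ by declaring accepting exactly the states of weight $\geq v_i$. Then $L_A(w) = \max\{ v_i \mid w \in \mathcal{L}(A_{\geq v_i})\}$ (with value $v_1$ as a floor if needed), so $1 - L_A(w)$ is a finite combination — a $\min$ of shifted indicator languages — of the complements $\overline{\mathcal{L}(A_{\geq v_i})}$. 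Each such complement is recognizable: pass to an \asbw\/ equivalent to $A_{\geq v_i}$ via~\cite{BG08}, complement it as an \ascw\/-recognizable language, convert the resulting \ascw\/ back to a \zbw\/ via Lemma~\ref{lemm:zcwtozbw} (through the \ascw-to-\zcw-to-\zbw\/ route), and assign weights to turn it into a \zls. Finally assemble these pieces using closure of \zls\/ under $\max$ (Lemma~\ref{lem:closure-max-one}) and under $\min$ (Lemma~\ref{lemm-closed-asliw-zlsw}) to realize $1 - L_A$ as a \zls. Then, by the duality of Lemma~\ref{lemm:comparing-linf-lsup}, closure of \asli\/ under complement follows: given an \asli\/ $B$, interpret its underlying automaton as a \zls\/ computing $1 - L_B$, complement that \zls\/ to get a \zls\/ for $L_B$, and reinterpret as an \asli\/ for $1 - L_B$.

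The main obstacle I expect is bookkeeping the thresholds correctly: $1 - L_A$ is not literally a complement of a single boolean language but a value that steps through $1 - v_1, \dots, 1 - v_n$, so one must carefully express it as $\max_i$ (over the "downward" thresholds) of automata whose accepting states carry weight $1 - v_i$ and whose non-accepting states carry the minimal value, ensuring the boolean language of the $i$-th automaton is exactly $\{w \mid L_A(w) \leq v_i\}$ (equivalently $\{w \mid L_A(w) < v_{i+1}\}$, the complement of $\mathcal{L}(A_{\geq v_{i+1}})$). Getting the off-by-one on the thresholds right, and checking that the \ascw/\zcw/\zbw\/ conversions preserve exactly these boolean languages, is the delicate part; the rest is routine assembly via the already-established $\max$ and $\min$ closure lemmas. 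Given the space constraints, I would likely present the \zls\/ construction in full and then invoke Lemma~\ref{lemm:comparing-linf-lsup} in one sentence for the \asli\/ case.
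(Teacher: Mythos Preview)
Your \zls\/ argument is essentially the paper's: decompose by thresholds into the \zbw\/ $A_{\geq v_i}$, complement each at the boolean level, reassign weights, and take the $\max$. Two simplifications over your write-up: closure of \zbw\/ under complement is available directly from~\cite{BG08}, so the detour through \asbw/\ascw/\zcw\/ and Lemma~\ref{lemm:zcwtozbw} is unnecessary; and only the $\max$ closure (Lemma~\ref{lem:closure-max-one}) is needed in the assembly, not $\min$---your second paragraph already gets this right, expressing $1-L_A$ as a $\max$ over the complemented threshold languages.

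For \asli\/ the paper does not transfer via duality; it repeats the same threshold argument with \ascw\/ in place of \zbw\/ (using closure of \ascw\/ under complement, which follows from the \zbw\/ result). Your duality shortcut is in principle fine, but two points need fixing. First, citing Lemma~\ref{lemm:comparing-linf-lsup} is circular: that lemma's proof explicitly invokes the present lemma. You must instead state and use the underlying duality fact directly. Second, that fact is not quite what you wrote: interpreting the \emph{same} automaton as \zls\/ does not yield $1-L_B$ (a one-state automaton with weight $5$ has value $5$ under both semantics). The correct statement is that negating all weights turns $\LimInf$ into $-\LimSup$, and combining this with $\P(\cdot)=1 \Leftrightarrow \lnot(\P(\lnot\cdot)>0)$ gives $L^{>0,\LimSup}_{-A} = -\,L^{=1,\LimInf}_{A}$. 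With this corrected duality the transfer goes through.
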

\begin{myProof}
We first present the proof for \zls. 
Let $A$  be a \zls, and let $V$ be the set of weights that appear in $A$.
For each $v \in V$, it is easy to construct a \zbw\/ $A_v$ whose (boolean) language 
is the set of words $w$ such that $L_A(w) \geq v$, by declaring to be accepting
the states with weight at least $v$.
We then construct for each $v \in V$ a \zbw\/ $\bar{A}_v$ (with accepting states) 
that accepts the (boolean) complement of the language accepted by $A_v$ (such 
a \zbw\/ can be constructed since \zbw\/ is closed under complementation by the
results of~\cite{BG08}).
Finally, assuming that $V= \{v_1,\dots,v_n\}$ with $v_1 < v_2 < \dots < v_n$,
we construct the \zls\/ $B_i$ for $i=2,\dots,n$ where $B_i$ is obtained
from $\bar{A}_{v_i}$ by assigning weight $-v_{i-1}$ to each accepting states,
and $-v_n$ to all the other states. The complement of $L_A$ is then
$\max\{L_{B_2},\dots,L_{B_n}\}$ which is accepted by a \zls\/ 
(since \zls\/ is closed under $\max$).
The result for \asli\/ is similar and it uses the closure of 
\ascw\/ under complementation which can be easily proved from the 
closure under complementation of \zbw.
\end{myProof}

\begin{lemma}\label{lem:asls-zli-complement}
\asls\/ and \zli\/ are not closed under complement.
\end{lemma}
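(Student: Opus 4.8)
The plan is to exhibit a single automaton whose language, interpreted one way, is expressible in the class, but whose complement is provably not, reusing the separating examples already built in the expressiveness section. Recall from Lemma~\ref{lemm:comparing-linf-lsup} that an automaton interpreted as an \asls\ defines the complement of the language of the same automaton interpreted as a \zli, and vice versa. Hence it suffices to find one quantitative language $L$ such that $L$ is definable by an \asls\ but its complement $1-L$ is not, and (dually/separately) one language definable by a \zli\ whose complement is not. Since complement is an involution, these two tasks are in fact the same up to swapping the two interpretations, so a single well-chosen witness automaton settles both.

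First I would take the language $L_F$ of ``finitely many $a$'s'' and its complement $L_I$ of ``infinitely many $a$'s.'' By Lemma~\ref{lemm-inf-lang-avg}(3), $L_I$ is expressible as an \asla, and more to the point, by Lemma~\ref{lemm-liminf-asvz}, $L_I$ is expressible as an \asli. If \asli\ were closed under complement then $L_F = 1 - L_I$ would be expressible as an \asli; but $L_F$ is not expressible as a \zli\ (shown in Lemma~\ref{lemm-fin-lang-avg}(2) via the closed-recurrent-set fooling argument, or via Lemma~\ref{lemm-liminf-asvz}), and — wait, I need \emph{non}-\asli-expressibility of $L_F$, not non-\zli. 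So instead I would argue directly: $L_F$ is not \asli-expressible by the same Markov-chain argument as in Lemma~\ref{lemm-fin-lang-avg}(3), replacing the limit-average reward of a closed recurrent set by its \emph{minimum} reward, exactly as was done to prove Lemma~\ref{lemm-liminf-asvz}'s claim that \zli\ cannot express $L_I$. Concretely: given a putative \asli\ $A$ for $L_F$, consider its $b$-Markov chain; every reachable closed recurrent set must have minimum weight $\geq 1$ (else feed $w\cdot b^\omega$ for a finite word $w$ reaching a bad recurrent set and get value $<1$), and then by the length-argument of Lemma~\ref{lemm-long-enough-b} (minimum-reward variant) there is $j$ with $A((b^j a)^\omega) \geq 1$, contradicting $L_F((b^j a)^\omega)=0$. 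This shows \asli\ is not closed under complement, hence by Lemma~\ref{lemm:comparing-linf-lsup}'s correspondence (\asls\ defines the complement of the \zli\ reading of the same automaton) that \zli\ is not closed under complement either.

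For \asls\ (equivalently \zli\ via the same correspondence), I would symmetrically use $L_F$, which \emph{is} expressible as an \asls\ (it is a nondeterministic, indeed deterministic-in-the-limit, B\"uchi language, hence a \zls, hence — no: I need \asls, so I would instead observe $L_F$ is expressible by a deterministic \dls\ reading, hence as an \asls). Its complement $L_I$ is \emph{not} \asls-expressible: this is the content of the argument used in Lemma~\ref{lemm-nzlstoasls} (that \asls\ cannot express $L_F$) run with $L_F$ and $L_I$ swapped, or more cleanly, apply Lemma~\ref{lemm:comparing-linf-lsup} again in the other direction. I expect the main obstacle to be purely bookkeeping: making sure each witness language is shown expressible in the \emph{exact} class claimed and \emph{not} expressible after complementation in that \emph{same} class, without accidentally invoking a closure property being proven. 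The cleanest route, which I would ultimately adopt, is: (i) cite Lemma~\ref{lemm:comparing-linf-lsup} for the ``same automaton'' complement correspondence between \asls\ and \zli; (ii) note $L_F$ is \asls-expressible and $L_I = 1-L_F$ is not \asli-expressible (Markov chain / minimum-reward fooling argument as above), so viewed through the correspondence neither \asls\ nor \zli\ is closed under complement — and conversely $L_I$ is \asli-expressible while $L_F$ is not \asls-expressible, reconfirming both non-closures.
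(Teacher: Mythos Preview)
Your proposal contains several genuine errors that make the argument fail.

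\textbf{Wrong expressibility claims.} You assert that $L_F$ is \asls-expressible and that $L_I$ is not \asli-expressible; both are false. In fact $L_F$ (finitely many $a$'s) is \emph{not} \asls-expressible --- this is exactly what Lemma~\ref{lemm-nzlstoasls} proves --- while $L_I$ \emph{is} \asli-expressible (Lemma~\ref{lemm-liminf-asvz}). You have the roles of $L_F$ and $L_I$ reversed throughout. Relatedly, your claim that $L_F$ is \dls-expressible is wrong: any \dls\ is a special case of an \asls, and $L_F$ is not \asls-expressible.

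\textbf{The ``minimum-reward variant'' does not work.} Your attempt to show that $L_F$ is not \asli-expressible (and hence that \asli\ is not closed under complement) is doomed: \asli\ \emph{is} closed under complement (Lemma~\ref{lem:asli-zls-complement}), and $L_F$ \emph{is} \asli-expressible via the obvious two-state \dli\ (weight~$1$ after a~$b$, weight~$0$ after an~$a$). Concretely, your adaptation of Lemma~\ref{lemm-long-enough-b} breaks because even if every $b$-closed recurrent set has minimum weight~$\geq 1$, a closed recurrent set of the $(b^j a)$-Markov chain may \emph{strictly contain} a $b$-recurrent set together with additional low-weight states reached via the single $a$, and the $\LimInf$ is governed by those low-weight states. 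The argument transfers to $\LimSup$ (maximum weight is monotone under set inclusion) but not to $\LimInf$.

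\textbf{Circularity.} Invoking Lemma~\ref{lemm:comparing-linf-lsup} as stated is circular: its proof in the paper explicitly uses the present lemma. The elementary ``same automaton'' duality (an automaton read as \asls\ defines the complement of the same automaton read as \zli) is fine to use, but on its own it only lets you transfer a correct non-closure witness for one class to the other --- it does not create one.

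\textbf{What the paper does.} The paper's proof is the direct one you circle around but never land on: $L_I$ is \dbw-definable, hence \asls-definable, while its complement $L_F$ is not \asls-definable (Lemma~\ref{lemm-nzlstoasls}); dually, $L_F$ is \dcw-definable, hence \zli-definable, while its complement $L_I$ is not \zli-definable (Lemma~\ref{lemm-liminf-asvz}). No correspondence lemma is needed.
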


\begin{proof}
It follows from Lemma~\ref{lemm-nzlstoasls}  that the language $L_F$ of finitely 
many $a$'s is not expressible by an \asls, whereas the complement $L_I$ of 
infinitely many $a$'s is expressible as a \dbw\/ and hence as a \asls.
It follows from Lemma~\ref{lemm-liminf-asvz} that language $L_I$ is not expressible as an 
\zli, whereas its complement $L_F$ is expressible by a \dcw\/ and hence 
a \zli.
\qed
\end{proof}

\begin{lemma}
\zla\/ and \asla\/ are not closed under complement.
\end{lemma}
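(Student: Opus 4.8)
The plan is to obtain both non-closure statements essentially for free from the closure and non-closure properties of $\min$ and $\max$ already established in Lemmas~\ref{lem:closure-max-one} and~\ref{lem:zla-min--asla-max}, using the pointwise De~Morgan identities for quantitative languages. Since $L^c(w)=1-L(w)$ and $1-\max\set{x,y}=\min\set{1-x,1-y}$, for all quantitative languages $L_1,L_2$ we have $\min\set{L_1,L_2}=(\max\set{L_1^c,L_2^c})^c$ and, symmetrically, $\max\set{L_1,L_2}=(\min\set{L_1^c,L_2^c})^c$. Consequently, a class of automata closed under complement and under $\max$ is automatically closed under $\min$, and a class closed under complement and under $\min$ is automatically closed under $\max$.

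First I would treat \zla. Suppose, for contradiction, that \zla\/ were closed under complement. By Lemma~\ref{lem:closure-max-one}, \zla\/ is closed under $\max$; the first De~Morgan identity above would then give closure under $\min$, contradicting Lemma~\ref{lem:zla-min--asla-max}. Equivalently, and more concretely, since there exist \dla\/ (hence \zla) for $L_a$ and $L_b$, the language $\max\set{L_a,L_b}$ is definable by a \zla, so its complement $L^*=1-\max\set{L_a,L_b}$ would be too, contradicting Lemma~\ref{lem:limavg-min-comp}(2). Then I would treat \asla\/ dually: assuming \asla\/ is closed under complement, its closure under $\min$ (Lemma~\ref{lem:closure-max-one}) together with the second De~Morgan identity would force closure under $\max$, contradicting Lemma~\ref{lem:zla-min--asla-max}.

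I do not expect any genuine obstacle here: all the real content has already been isolated in the earlier lemmas — most importantly Lemma~\ref{lem:limavg-min-comp}, whose proof fools a hypothetical \zla\/ by routing a positive-weight closed recurrent set into a suitable periodic word, using Lemma~\ref{lemm-long-enough-b} — so what remains is only the De~Morgan bookkeeping together with the remark that $L_a$ and $L_b$ admit deterministic limit-average automata. The one place to be slightly careful is that complement is normalized by the constant $1$, so the De~Morgan identities must be applied with that same constant; this matches the normalization under which Lemma~\ref{lem:limavg-min-comp} is stated, so there is no inconsistency.
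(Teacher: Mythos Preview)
Your argument is correct. For \zla, it coincides with the paper's: both derive non-closure under complement directly from Lemma~\ref{lem:limavg-min-comp} (the paper simply says the result ``follows from'' that lemma, which is precisely your concrete version using part~(2)).

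For \asla, however, you take a genuinely different route. You reduce non-closure under complement to non-closure under $\max$ (Lemma~\ref{lem:zla-min--asla-max}) via the De~Morgan identity, whereas the paper gives a direct fooling-word argument. The paper considers the single-state \dla\/ $A$ defining $L_a$, assumes some \asla\/ $B$ defines $1-L_a$, and constructs a word $w=b^{n_1}a^{m_1}b^{n_2}a^{m_2}\cdots$ by alternating long enough blocks so that both $A(w)\leq\epsilon$ and the expected running average of $B$ stays below $\epsilon$; since $B=1-A$ this yields $1\leq 2\epsilon$, a contradiction. Your approach is shorter and neatly recycles the work already packaged in Lemma~\ref{lem:zla-min--asla-max}; the paper's approach is self-contained and does not rely on the closed-recurrent-set analysis underlying that lemma, at the cost of redoing a small piece of that analysis in an ad~hoc way.
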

\begin{myProof}
The fact that \zla\/ is not closed under complement follows from 
Lemma~\ref{lem:limavg-min-comp}.
We now show that \asla\/ is not closed under complement.
Consider the \dla\/ $A$ over alphabet $\Sigma= \{a,b\}$ that consists of 
a single self-loop state with weight $1$ for $a$ and $0$ for $b$.
Notice that $A(w.a^\omega) = 1$ and $A(w.b^\omega) = 0$ for all 
$w \in \Sigma^*$.
To obtain a contradiction, assume that there exists a \asla\/ $B$ 
such that $B=1-A$. 
For all finite words $w \in \Sigma^*$, let $B(w)$ be the expected average weight 
of the finite run of $B$ over $w$.
Fix $0 < \epsilon < \frac{1}{2}$. For all finite words $w$, there exists
a number $n_w$ such that the average number of $a$'s in $w.b^{n_w}$ is at most 
$\epsilon$, 
and there exists a number $m_w$ such that $B(w.a^{m_w}) \leq \epsilon$
(since $B(w.a^\omega) = 0$). 
Hence, we can construct a word $w = b^{n_1} a^{m_1} b^{n_2} a^{m_2} \dots$ 
such that 
$A(w) \leq \epsilon$ and $B(w) \leq \epsilon$.
Since $B = 1-A$, this implies that $1 \leq 2\epsilon$, a contradiction.
\qed
\end{myProof}

\subsection{Closure under sum}

\begin{lemma}\label{lemm:zls-asls-closed-under-sum}
\zls\/ and \asls\/ are closed under sum.
\end{lemma}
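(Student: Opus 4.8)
The plan is to show closure under sum for $\LimSup$-automata in both the positive and almost-sure semantics by reducing the problem, via the weight-threshold slicing that is used repeatedly in this section (e.g. in Lemmas~\ref{lem:asli-zls-complement} and~\ref{lemm-closed-asliw-zlsw}), to closure properties of probabilistic B\"uchi automata. Given two $\LimSup$-automata $A_1$ and $A_2$, let $V_1,V_2$ be their sets of weights, and note that for each pair of thresholds $(v_1,v_2)\in V_1\times V_2$ one can build a \zbw\ (resp.\ \asbw) $A_1^{v_1}$ and $A_2^{v_2}$ whose boolean language is $\{w\mid L_{A_1}(w)\geq v_1\}$ and $\{w\mid L_{A_2}(w)\geq v_2\}$, by declaring accepting the states of weight at least $v_1$ (resp.\ $v_2$). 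Here we crucially use that $\LimSup\geq v$ is exactly the B\"uchi condition ``weight $\geq v$ infinitely often,'' so thresholding behaves well.

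The key observation is that for any word $w$, if $L_{A_1}(w)=v_1^*$ and $L_{A_2}(w)=v_2^*$, then $(A_1+A_2)(w)=v_1^*+v_2^*$ is the largest value $v_1+v_2$ over pairs $(v_1,v_2)\in V_1\times V_2$ such that both thresholds are met. So first I would take the synchronized product $A_1^{v_1}\otimes A_2^{v_2}$ of the two B\"uchi automata with the generalized-B\"uchi acceptance ``both components accept'', which is again convertible to an ordinary \zbw\ (resp.\ \asbw) using that these classes are closed under intersection --- for \zbw\ this is cited from~\cite{BG05}, and for \asbw\ it follows from the same product construction and the degeneralization of generalized B\"uchi conditions. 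Call the result $A_{12}^{v_1,v_2}$, an automaton accepting $\{w\mid L_{A_1}(w)\geq v_1 \text{ and } L_{A_2}(w)\geq v_2\}$. Then convert $A_{12}^{v_1,v_2}$ into a $\LimSup$-automaton $B^{v_1,v_2}$ by assigning weight $v_1+v_2$ to its accepting states and the minimum possible value $\min V_1+\min V_2$ to all other states; as a \zls\ (resp.\ \asls) this automaton has value $v_1+v_2$ on words it accepts and value $<v_1+v_2$ otherwise, by the same reasoning as in Lemma~\ref{lemm-closed-asliw-zlsw}. Finally, $A_1+A_2=\max_{(v_1,v_2)\in V_1\times V_2} B^{v_1,v_2}$, which is a \zls\ (resp.\ \asls) since these classes are closed under $\max$ by Lemma~\ref{lem:closure-max-one} (resp.\ Lemma~\ref{lem:closure-max-two}).

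The verification that $A_1+A_2=\max_{(v_1,v_2)} B^{v_1,v_2}$ splits into two routine inequalities exactly as in Lemma~\ref{lemm-closed-asliw-zlsw}: for one direction, if $L_{A_1}(w)=v_1^*$ and $L_{A_2}(w)=v_2^*$ then $A_{12}^{v_1^*,v_2^*}$ accepts $w$, so $B^{v_1^*,v_2^*}(w)=v_1^*+v_2^*$; for the other, if $B^{v_1,v_2}(w)=v_1+v_2$ then $A_{12}^{v_1,v_2}$ accepts $w$, whence $L_{A_1}(w)\geq v_1$ and $L_{A_2}(w)\geq v_2$, so $(A_1+A_2)(w)\geq v_1+v_2$. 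The main obstacle I expect is making sure the intersection (generalized-B\"uchi) construction goes through cleanly in the probabilistic setting: for \zbw\ the closure under intersection is cited, but for \asbw\ one should check that the standard ``counter'' degeneralization --- cycling through which B\"uchi set to wait for --- preserves the almost-sure acceptance semantics, which it does because the counter advances with probability~1 whenever the current target set is visited infinitely often. Everything else is bookkeeping with the finitely many threshold pairs.
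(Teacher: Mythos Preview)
Your proposal is correct and is essentially the same approach as the paper's proof. The paper also builds, for each pair $(v_1,v_2)$, a synchronized product of $A_1$ and $A_2$ equipped with a bit $b\in\{1,2\}$ that toggles when the currently-awaited threshold is seen, assigns weight $v_1+v_2$ at the toggle and the minimum otherwise, and then takes the $\max$ over all pairs; this inline ``bit $b$'' construction is exactly the standard counter degeneralization you invoke abstractly via closure of \zbw/\asbw\ under intersection, so the two arguments coincide up to the level of abstraction at which the intersection step is presented.
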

\begin{myProof}
Given two \zls\/ (resp. \asls) $A_1$ and $A_2$, 
we construct a \zls\/ (resp. \asls) $A$ for the sum of their 
languages as follows. 
For a pair $(v_1,v_2)$ of weights ($v_i$ in $A_i$, for $i=1,2$),
consider a copy of the synchronized product of $A_1$ and $A_2$. 
We attach a bit $b$ whose range is $\{1,2\}$ to each state to 
remember that we expect $A_b$ to visit the guessed weight $v_b$. 
Whenever this occurs, the bit $b$ is set to $3-b$, and the weight of the state 
is $v_1 + v_2$. All other states ({\it i.e.}
when $b$ is unchanged) have weight 
$\min\{v_1 + v_2 \mid v_1 \in V_1 \land v_2 \in V_2\}$.
Let the automata constructed be $A_{(v_1,v_2)}$.
Then $A =\max_{(v_1,v_2)} A_{(v_1,v_2)}$.
Since \zls\/ (resp. \asls) is closed under $\max$ the result follows. 
\end{myProof}

\begin{lemma}\label{lemm:zli-asli-closed-under-sum}
\zli\/ and \asli\/ are closed under sum.
\end{lemma}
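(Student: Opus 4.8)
The plan is to treat both semantics uniformly by a \emph{threshold decomposition}, exploiting that a probabilistic $\LimInf$-automaton is entirely determined by the weight thresholds its runs eventually maintain. Write $V_i$ for the finite set of weights of $A_i$ and $m_i=\min V_i$. The first step is to record the elementary fact that, for a probabilistic $\LimInf$-automaton $A$ and a word $w$, the value $L_A(w)$ is the largest weight $v$ of $A$ such that the set of runs over $w$ that \emph{eventually stay in states of weight $\geq v$} has positive probability (case $\zli$), resp.\ probability $1$ (case $\asli$): indeed $\LimInf(\wt(r))\geq v$ holds exactly when $r$ eventually remains in $\{s:\wt(s)\geq v\}$, and since $V_i$ is finite these events are nested, so the defining supremum is attained and ranges over $V_i$.

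Next, for every pair $(v_1,v_2)\in V_1\times V_2$ I would build $A_{(v_1,v_2)}$ from the usual synchronized product of $A_1$ and $A_2$ (as in Lemma~\ref{lem:closure-max-two}) by assigning weight $v_1+v_2$ to each state $(s_1,s_2)$ with $\wt_1(s_1)\geq v_1$ and $\wt_2(s_2)\geq v_2$, and weight $m_1+m_2$ to every other state, and interpreting $A_{(v_1,v_2)}$ under the same semantics as $A_1$ and $A_2$. The claim to prove is that $L_{A_{(v_1,v_2)}}(w)=v_1+v_2$ if $v_1\leq L_{A_1}(w)$ and $v_2\leq L_{A_2}(w)$, and $L_{A_{(v_1,v_2)}}(w)=m_1+m_2$ otherwise. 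Two observations drive the argument: for a fixed input word the two components of the product Markov chain evolve independently; and a run of the product eventually stays in the ``good'' states iff its first component eventually stays in $\{\wt_1\geq v_1\}$ and its second in $\{\wt_2\geq v_2\}$ (the ``eventually always'' modality distributes over conjunction). Hence the probability that a run of $A_{(v_1,v_2)}$ eventually stays in the good states equals the product of the corresponding probabilities in $A_1$ and in $A_2$, and that product is positive (resp.\ equal to $1$) precisely when $v_1\leq L_{A_1}(w)$ and $v_2\leq L_{A_2}(w)$, by the characterisation of the first step. Since $A_{(v_1,v_2)}$ has only the two weights $v_1+v_2$ and $m_1+m_2\leq v_1+v_2$, this pins down $L_{A_{(v_1,v_2)}}(w)$ as claimed.

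Finally I would take the maximum over all pairs. Every pair with $v_1\leq L_{A_1}(w)$ and $v_2\leq L_{A_2}(w)$ contributes value $v_1+v_2$, the largest such being reached at $(v_1,v_2)=(L_{A_1}(w),L_{A_2}(w))\in V_1\times V_2$ and equal to $L_{A_1}(w)+L_{A_2}(w)$; every other pair contributes $m_1+m_2\leq L_{A_1}(w)+L_{A_2}(w)$, as $L_{A_i}(w)\geq m_i$ always. Thus $\max_{(v_1,v_2)}A_{(v_1,v_2)}=A_1+A_2$ pointwise, and since $\zli$ is closed under $\max$ (Lemma~\ref{lem:closure-max-one}) and $\asli$ is closed under $\max$ (Lemma~\ref{lemm-closed-asliw-zlsw}), this yields a $\zli$ (resp.\ $\asli$) for $A_1+A_2$. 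For $\asli$ one may alternatively bypass the construction and invoke Lemma~\ref{lemm:comparing-linf-lsup} together with Lemma~\ref{lemm:zls-asls-closed-under-sum}, since $\asli$ and $\zls$ define the same class of languages and $\zls$ is closed under sum.

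The main obstacle I anticipate is the value characterisation in the second step, specifically resisting the naive construction that weights $(s_1,s_2)$ by $\wt_1(s_1)+\wt_2(s_2)$ and reads the product directly: $\LimInf$ is only super-additive, so that automaton does not compute the language sum. The threshold decomposition is exactly what repairs this, and its correctness rests on the finiteness of $V_1,V_2$ (turning a supremum of admissible thresholds into a maximum) and on the independence of the two product components under any fixed input word.
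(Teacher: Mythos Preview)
Your proposal is correct and follows essentially the same route as the paper: a threshold decomposition indexed by pairs $(v_1,v_2)\in V_1\times V_2$, yielding for each pair a two-valued $\LimInf$-automaton whose value is $v_1+v_2$ exactly when $L_{A_1}(w)\geq v_1$ and $L_{A_2}(w)\geq v_2$, and then a $\max$ over all pairs using the closure lemmas you cite. The only cosmetic difference is that the paper phrases the per-pair construction as an intersection of \zcw/\ascw\ (invoked as a black box), whereas you build the synchronized product directly and justify it via independence of the components on a fixed input word; these produce the same automaton, and your alternative shortcut for \asli\ via Lemma~\ref{lemm:comparing-linf-lsup} and Lemma~\ref{lemm:zls-asls-closed-under-sum} is also valid.
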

\begin{myProof}
Given two \zli\/ (resp. \asli) $A_1$ and $A_2$, we construct a \zli\/
(resp. \asli) $A$ for the sum of their languages as follows. 
For $i=1,2$, let $V_i$ be the set of weights that appear in $A_i$.
Let $v_{\min}=\min\{v_1 + v_2 \mid v_1 \in V_1 \land v_2 \in V_2\}$.
For $v_1 \in V_1$ and $v_2 \in V_2$, for $i=1,2$, 
consider the \zcw\/ (resp. \ascw) $A_{v_i}$ obtained from $A_i$ by 
making all states with weights at least $v_i$ as accepting states.
Let $A_{(v_1,v_2)}$ be the \zcw\/ (resp. \ascw) such that 
$A_{(v_1,v_2)}= A_{v_1} \cap A_{v_2}$: such an \zcw\/ 
(resp. \ascw) exists since \zcw\/ (resp. \ascw) 
is closed under intersection. 
In other words, for a word $w$ we have 
$A_{(v_1,v_2)}(w) =1$ iff $A_1(w) \geq v_1$ and $A_2(w) \geq v_2$.
Let $\ov{A}_{(v_1,v_2)}$ be the \zli\/ (resp. \asli) obtained 
from $A_{(v_1,v_2)}$ by assigning weight $v_1 + v_2$ to all accepting 
states and weight $v_{\min}$ to all other states.
Then the automaton for the sum of $A_1$ and $A_2$ (denoted as $A_1+A_2$) 
is $\max_{(v_1,v_2) \in V_1 \times V_2} \ov{A}_{(v_1,v_2)}$. 
Since \zli\/ (resp. \asli) is closed under $\max$ the result follows. 
\end{myProof}

\begin{lemma}
\asla\/ is not closed under sum.
\end{lemma}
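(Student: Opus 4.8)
The plan is to reuse, for sum, the two languages that witnessed non-closure under $\max$. Work over $\Sigma=\set{a,b}$ and let $L_a$ (resp.\ $L_b$) assign to $w$ the long-run average number of $a$'s (resp.\ of $b$'s), i.e.\ the $\liminf$ over $n$ of the fraction of $a$'s (resp.\ $b$'s) among the first $n$ letters; each is expressible by a one-state \dla, hence by an \asla. Since the two fractions sum to $1$ at every prefix length, $L_a(w)+L_b(w)$ equals $1$ minus the oscillation $\limsup-\liminf$ of the $a$-frequency, so it is $1$ exactly when that frequency converges and strictly smaller otherwise. The goal is to show $L_a+L_b$ is not expressible by any \asla, so assume for contradiction that $B$ is an \asla with $L_B=L_a+L_b$.

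The first step is to extract the structural consequence already used in the proof of Lemma~\ref{lem:zla-min--asla-max}: for every finite word $u$ we have $L_B(u\cdot a^\omega)=L_a(u\cdot a^\omega)+L_b(u\cdot a^\omega)=1$ and, symmetrically, $L_B(u\cdot b^\omega)=1$. Consequently every closed recurrent set of the $a$-Markov chain of $B$ that is reachable in $B$ has expected limit-average reward at least $1$, and the same holds for the $b$-Markov chain: if some reachable $a$-closed recurrent set $C$ had expected limit-average value $<1$, then a finite word reaching $C$ with positive probability followed by $a^\omega$ would (by the ergodic theorem for the finite irreducible chain on $C$) receive a value $<1$ under the almost-sure semantics, contradicting the displayed equalities.

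Second, I would build a fooling word of value $0$. Take block lengths $n_1=m_1=1$ and, recursively, choose $n_{i+1}$ and $m_{i+1}$ so large that the total length of all preceding blocks is negligible compared with each of them (and large enough for the probabilistic estimates of the last step); set $w=b^{n_1}a^{m_1}b^{n_2}a^{m_2}\cdots$. Then along $w$ the prefix-frequency of $a$'s drops arbitrarily close to $0$ at the end of each $b$-block and the prefix-frequency of $b$'s drops arbitrarily close to $0$ at the end of each $a$-block, so $\liminf f_a(w)=\liminf f_b(w)=0$; hence $(L_a+L_b)(w)=0$, and therefore $L_B(w)=0$.

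The heart of the argument — and where I expect the real work to lie — is to show that, to the contrary, $\LimAvg(\weight(r))\ge 1$ for almost every run $r\in\Run^B(w)$, which contradicts $L_B(w)=0$ (it would in fact suffice to prove $L_B(w)\ge c$ for any fixed $c>0$). During each (very long) block the random run is governed by the $a$- or the $b$-Markov chain of $B$, and the recurrent set it eventually sits in is reachable in $B$ because the prefix of $w$ read so far witnesses it. Using that the probability of not having entered such a recurrent set within the first $o(\ell_i)$ steps of block $i$ decays geometrically in the block length $\ell_i$, Borel--Cantelli (for block lengths growing fast enough) gives that almost surely this entry happens promptly in all but finitely many blocks; and a concentration estimate for additive functionals of a finite irreducible Markov chain, together with the lower bound of the first step, shows that the average weight accumulated over the remainder of each such block is $\ge 1-o(1)$. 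Because the block lengths grow fast, the prefix-average of $\weight(r)$ is then $\ge 1-o(1)$ at every position — not only at block boundaries, but also at positions lying just inside a block, where the prefix-average is dominated by the near-$1$ average already accumulated over all previous blocks — so $\LimAvg(\weight(r))\ge 1$ almost surely. The delicate part is thus purely probabilistic: upgrading ``expected limit-average $\ge 1$ on every reachable single-letter recurrent set'' to an almost-sure statement about the running average of a run that keeps alternating between the $a$-phase and the $b$-phase, with the block lengths chosen to simultaneously collapse both frequencies and to control the concentration errors.
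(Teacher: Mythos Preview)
Your approach is essentially the paper's: the same witness languages $L_a,L_b$, the same structural consequence that every reachable single-letter closed recurrent set must have expected limit-average $1$, and the same fooling word built from alternating $a$- and $b$-blocks of growing length so that $L_a(w)=L_b(w)=0$. The only difference is in the final probabilistic step: the paper argues more loosely that for every $\gamma>0$ the running average is eventually $\geq 1/6$ with probability $\geq 1-\gamma$ (hence $B(w)\geq 1/6$ almost surely), whereas you make this step rigorous via Borel--Cantelli and finite-chain concentration and obtain the sharper bound $B(w)\geq 1$; both routes are valid.
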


\begin{myProof}
Consider the alphabet $\Sigma=\set{a,b}$, and consider the \dla-definable languages 
$L_a$ and $L_b$ that assigns to each word $w$ the long-run average number of $a$'s 
and $b$'s in $w$ respectively. 
Let $L_{+}=L_a + L_b$. 
We show that $L_+$ is not expressible by \asla.
Assume towards contradiction that $L_{+}$ is defined by an \asla\/ $A$ with set 
of states $Q$ (we assume w.l.o.g that every state in $Q$ is reachable).
Let $\beta>1$ be greater than the maximum absolute value of the weights in $A$.

First, we claim that from every state $q \in Q$, if we consider the 
automaton $A_q$ with $q$ as starting state then $A_q(a^\omega)=1$:
this follows since if we consider a finite word $w_q$ to reach 
$q$, then $L_+(w_q \cdot a^\omega)=1$ and hence $A(w_q \cdot a^\omega)=1$.
It follows that from any state $q$, as $k$ tends to $\infty$,
the expected average of the weights converges almost-surely to~1.
This implies if we consider the $a$-Markov chain arising from $A$, then
from any state $q$, for all closed recurrent set $C$ of states reachable
from $q$, the expected average of the weights of $C$ is~1.
Hence for every $\gamma>0$ there exists a natural number $k_0^\gamma$ 
such that from any state $q$, for all $k > k_0^\gamma$  
given the word $a^k$ the expected average of the weights is at least $\frac{1}{2}$
with probability $1-\gamma$ (this is because we can chose long enough $k$ 
such that the closed recurrent states are reached with probability $1-\gamma$ 
by property~1 for Markov chains, and then the long enough sequence ensures that 
the expected average approaches~1 by property~2 for Markov chains), and for 
the first $k_0^\gamma$ steps the expected average of the weights is at least $-\beta$.
The same result holds if we consider as input a sequence of $b$'s instead of $a$'s.

Consider the word $w$ generated inductively by the following procedure: 
(a)~$w_0$ is the empty word; 
(b)~we generate $w_{i+1}$ from $w_i$ as follows:
\begin{compressEnum}
\itCompress the sequence of letters added to $w_i$ to obtain $w_{i+1}$ is at 
least $i$;
\itCompress first we  generate a long enough sequence $w_{i+1}'$ of $a$'s 
after $w_i$ such that the average number of $b$'s in $w_i \cdot w_{i+1}'$ falls below $\frac{1}{i}$;
\itCompress then generate a long enough sequence $w_{i+1}''$ of $b$'s such that the average number of 
$a$'s in $w_{i} \cdot w_{i+1}' \cdot w_{i+1}''$ falls below $\frac{1}{i}$;
\itCompress the word $w_{i+1}=w_i \cdot w_{i+1}' \cdot w_{i+1}''$.
\end{compressEnum}
The word $w$ is the limit of these sequences.
For $\gamma>0$, consider $i\geq 6 \cdot k_0^\gamma \cdot \beta$ (where $k_0^\gamma$ 
satisfies the properties described above for $\gamma$). 
By construction for $i > 6\cdot k_0^\gamma \cdot \beta$, the length of $w_i$ is at least 
$6\cdot k_0 \cdot\beta$, 
and hence it follows that in the segment constructed between 
$w_i$ and $w_{i+1}$, for all $|w_i| \leq \ell \leq |w_{i+1}|$ with probability 
at least $1-\gamma$ the expected average of the weights is at least 
\[
\frac{\frac{\ell -k_0^\gamma}{2} - k_0^\gamma\cdot \beta}{\ell} \geq 
\frac{1}{2} -   \frac{2\cdot k_0^\gamma \cdot \beta}{\ell} 
\geq \frac{1}{2} -\frac{1}{3} \geq \frac{1}{6}.
\]
Hence for all $\gamma>0$, the expected average of the weights is 
at least $\frac{1}{6}$ with probability at least $1-\gamma$.
Since this holds for all $\gamma>0$, it follows that the expected
average of the weights is at least $\frac{1}{6}$ almost-surely, 
(i.e., $A(w)\geq \frac{1}{6}$).
We have $L_{a}(w) = L_{b}(w) = 0$ and thus $L_{+}(w) = 0$, while
$A(w) \geq \frac{1}{6}$.
Thus we have a contradiction.
\end{myProof}

\begin{lemma}\label{lemm:zdi-asdi-closed-under-sum}
\zdi\/ and \asdi\/ are closed under sum.
\end{lemma}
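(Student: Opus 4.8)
The plan is to build, from two probabilistic $\Disc_\lambda$-automata $A_1 = \tuple{Q_1,\rho_1,\Sigma,\delta_1,\weight_1}$ and $A_2 = \tuple{Q_2,\rho_2,\Sigma,\delta_2,\weight_2}$ sharing the discount factor $\lambda$, their synchronised product $A$ on state space $Q_1 \times Q_2$: the initial distribution is $\rho_I(q_1,q_2) = \rho_1(q_1)\cdot\rho_2(q_2)$, the transition function is $\delta((q_1,q_2),\sigma)(q_1',q_2') = \delta_1(q_1,\sigma)(q_1')\cdot\delta_2(q_2,\sigma)(q_2')$, and the weight is $\weight((q_1,q_2),\sigma,(q_1',q_2')) = \weight_1(q_1,\sigma,q_1') + \weight_2(q_2,\sigma,q_2')$. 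For every word $w$, the runs of $A$ over $w$ are in bijection with pairs $(r_1,r_2)$ of runs of $A_1$ and $A_2$ over $w$; under this bijection $\P^A$ is the product measure $\P^{A_1}\otimes\P^{A_2}$ (the two coordinates evolve independently, as one checks on cylinders), and, by linearity of the discounted sum, $\Disc_\lambda(\weight(r)) = \Disc_\lambda(\weight_1(r_1)) + \Disc_\lambda(\weight_2(r_2))$. The claim to establish is that $L_A = L_{A_1} + L_{A_2}$ for both the positive and the almost-sure semantics, which immediately gives the lemma.

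To prove the claim, fix $w$ and write $v_i$ for the value of $w$ in $A_i$. The set $E_i(\eta) = \{r_i \mid \Disc_\lambda(\weight_i(r_i)) \geq \eta\}$ shrinks as $\eta$ grows, so $v_i$ is a threshold: for $\eta < v_i$ the relevant probability property (positive probability, resp.\ probability one) holds of $E_i(\eta)$, and for $\eta > v_i$ it fails — and in the positive case ``fails'' means probability $0$. For $L_A(w) \geq v_1+v_2$: given $\epsilon>0$, on $E_1(v_1-\epsilon)\cap E_2(v_2-\epsilon)$ the product run has discounted sum at least $v_1+v_2-2\epsilon$, and this intersection has positive probability (resp.\ probability one) since a product of two positive numbers is positive (resp.\ an intersection of two probability-one events has probability one); let $\epsilon\to0$. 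For $L_A(w) \leq v_1+v_2$: any run $r=(r_1,r_2)$ with $\Disc_\lambda(\weight(r)) \geq v_1+v_2+3\epsilon$ lies in $E_1(v_1+\epsilon)\cup E_2(v_2+\epsilon)$; in the positive semantics both of these have probability $0$, hence so does the union; in the almost-sure semantics, if the union had probability $1$ then, by independence of the two coordinates, one of $E_1(v_1+\epsilon),E_2(v_2+\epsilon)$ would already have probability $1$, contradicting $v_i+\epsilon > v_i$. Letting $\epsilon\to0$ yields equality.

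The one genuinely delicate point is the upper-bound step in the almost-sure semantics: a union of two events each of probability strictly less than one may well have probability one, so the argument really needs that the two coordinates of the product run are \emph{independent}, which is exactly what the product construction provides via $\P^A = \P^{A_1}\otimes\P^{A_2}$. An alternative route is to go through Theorem~\ref{theo:ndi-zdi}, reducing $A_1,A_2$ to equivalent \ndi\/ (resp.\ \udi), taking the nondeterministic (resp.\ universal) product with summed weights — where $\sup$ (resp.\ $\inf$) distributes over the two independent choices — and translating the result back to a \zdi\/ (resp.\ \asdi); this merely hides the same independence observation inside the (non)deterministic product, at the cost of citing that the (non)deterministic discounted-sum classes are closed under sum.
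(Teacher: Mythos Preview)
Your proof is correct. The direct synchronised-product argument works as written; the independence of the two coordinates under $\P^A=\P^{A_1}\otimes\P^{A_2}$ is precisely what makes the almost-sure upper bound go through, via $1-\P(E_1\cup E_2)=(1-\P(E_1))(1-\P(E_2))$.

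The paper takes exactly your ``alternative route'': it invokes Theorem~\ref{theo:ndi-zdi} to pass to \ndi\/ (resp.\ \udi), uses the easy closure of those classes under sum by the same synchronised product with summed weights, and translates back. That gives a one-line proof once the equivalence theorem is available. Your direct argument is longer but self-contained --- it does not rely on the reducibility between probabilistic and nondeterministic/universal discounted-sum automata --- and it makes explicit where independence is used, which the indirect route hides inside the $\sup$/$\inf$ factorisation over the nondeterministic product. Since you state both routes, you effectively subsume the paper's proof.
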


\begin{myProof}
The result for \zdi\/ follows from Theorem~\ref{theo:ndi-zdi}
and the fact that \ndi\/ and \udi\/ are closed under sum (which is easy to prove using 
a synchronized product of automata where the weight of a joint transition is the sum
of the weights of the corresponding transitions.
\end{myProof}

\noindent{\bf Open question.} Whether \zla\/ is closed under sum remains open.

\section{Decision Problems for Probabilistic Weighted Automata}

We conclude the paper with some decidability and undecidability results for classical decision problems
about quantitative languages (see Table~\ref{tab:closure-properties}). 
Most of them are direct corollaries of the results in~\cite{BG08}.
Given a weighted automaton $A$ and a rational number $\nu \in \rat$, 
the \emph{quantitative emptiness problem} asks whether there exists a word $w \in \Sigma^{\omega}$
such that $L_{A}(w) \geq \nu$, and the \emph{quantitative universality problem}
asks whether $L_{A}(w) \geq \nu$ for all words $w \in \Sigma^{\omega}$.



\begin{theorem}
The emptiness and universality problems for \zmax\/ and \asmax\/ are decidable.
\end{theorem}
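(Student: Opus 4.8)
The emptiness and universality problems for \zmax\/ and \asmax\/ are decidable.

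The plan is to leverage Theorem~\ref{theo:zmax-asmax-to-dmax}, which reduces both \zmax\/ and \asmax\/ to \dmax. First I would make the reduction effective: inspecting the proof of Theorem~\ref{theo:zmax-asmax-to-dmax}, the \zmax\/ (resp.\ \asmax) automaton $A$ defines the same language as the underlying \nmax\/ (resp.\ \umax) automaton obtained by forgetting probabilities, and then \cite[Theorem~9]{CDH08a} provides an effective construction of an equivalent \dmax\/ automaton $A'$. So given $A$ and a threshold $\nu \in \rat$, we can compute $A'$ with $L_{A'} = L_A$, and it suffices to decide emptiness and universality for \dmax.

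Next I would solve the decision problems for \dmax\/ directly. Let $A' = \tuple{Q,q_I,\Sigma,\delta,\weight}$ be deterministic. For emptiness, we ask whether some word $w$ has $L_{A'}(w) = \Max(\weight(r_w)) \geq \nu$, where $r_w$ is the unique run on $w$. Since $\Max$ takes the supremum of weights seen along the run, $L_{A'}(w) \geq \nu$ holds for some infinite $w$ iff there is an infinite run from $q_I$ that traverses at least one transition of weight $\geq \nu$; and because the state space is finite, such an infinite run exists iff there is a finite path from $q_I$ using a transition of weight $\geq \nu$ that can be extended to an infinite run — i.e., reaching any state from which an infinite run exists. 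As every state has at least one outgoing transition on every letter (determinism), every state admits an infinite run, so emptiness reduces to a simple reachability question: is there a transition $(q,\sigma,q')$ with $\weight(q,\sigma,q') \geq \nu$ reachable from $q_I$? This is decidable in polynomial time by graph search. For universality, we ask whether \emph{every} word has value $\geq \nu$; equivalently, whether there is \emph{no} word $w$ with $L_{A'}(w) < \nu$, i.e.\ no infinite run all of whose transition weights are $< \nu$. This amounts to checking whether the subgraph of $A'$ restricted to transitions of weight $< \nu$ contains an infinite path reachable from $q_I$; since this subgraph may have states with no outgoing low-weight transition on some letter, we must check whether there is a word $w$ and a run on $w$ staying entirely within low-weight transitions, which is a safety/reachability-of-a-cycle question in a finite graph, again decidable.

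The main obstacle is simply ensuring that the reduction of Theorem~\ref{theo:zmax-asmax-to-dmax} and the construction of \cite[Theorem~9]{CDH08a} are both effective and preserve the threshold problem — this is routine since the language is preserved exactly, so $L_A(w) \geq \nu \iff L_{A'}(w) \geq \nu$ for all $w$. Everything else is elementary finite-graph reachability. Hence both problems are decidable for \zmax\/ and \asmax.

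\qed
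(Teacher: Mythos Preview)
Your proposal is correct and follows essentially the same approach as the paper: reduce to \dmax\/ via Theorem~\ref{theo:zmax-asmax-to-dmax}, then invoke decidability for \dmax. The paper simply cites \cite[Theorem~1]{CDH08a} for the latter, whereas you spell out the elementary graph-reachability argument for emptiness and universality of \dmax; this makes your proof more self-contained but is not a different route.
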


\begin{proof}
By Theorem~\ref{theo:zmax-asmax-to-dmax}, these problems reduce to emptiness
of \dmax\/ which is decidable (\cite[Theorem~1]{CDH08a}).
\end{proof}


The following theorems are trivial corollaries of~\cite[Theorem~2]{BG08}.

\begin{theorem}
The emptiness problem for \zls\/ and the universality problem for \asls\/ are undecidable.
\end{theorem}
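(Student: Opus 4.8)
The plan is to reduce from the corresponding problems for probabilistic B\"uchi and coB\"uchi automata, whose emptiness and universality are already known to be undecidable by~\cite[Theorem~2]{BG08}. Recall from the definitions that a probabilistic B\"uchi automaton is exactly a probabilistic $\LimSup$-automaton in which every weight is $0$ or $1$, and a word is ``accepted'' (with positive probability) precisely when its $\LimSup$-value under the positive semantics equals $1$; similarly a probabilistic coB\"uchi automaton is a probabilistic $\LimInf$-automaton with weights in $\set{0,1}$, accepted with probability~$1$ exactly when its $\LimInf$-value under the almost-sure semantics equals~$1$.

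First I would handle the emptiness problem for \zls. Given a probabilistic B\"uchi automaton $A$ (under the semantics ``$\P(\text{accepting runs})>0$''), view it verbatim as a \zls\ with weights $0$ and $1$. Then for every word $w$ we have $L^{>0}_A(w)=1$ if $w$ is in the (positive-semantics) language of the B\"uchi automaton, and $L^{>0}_A(w)=0$ otherwise, since the only possible values of a run are $0$ and $1$. Choosing the threshold $\nu=1$, the quantitative emptiness question ``does there exist $w$ with $L^{>0}_A(w)\geq 1$?'' is exactly the emptiness question for the probabilistic B\"uchi language, which is undecidable by~\cite[Theorem~2]{BG08}. Hence emptiness for \zls\ is undecidable.

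Next I would treat the universality problem for \asls. By~\cite[Theorem~2]{BG08}, the universality problem for probabilistic coB\"uchi automata under the almost-sure acceptance semantics is undecidable (equivalently, emptiness of probabilistic B\"uchi automata under the positive semantics, by complementation/duality of the two models as used repeatedly in~\cite{BG08}). Taking such a coB\"uchi automaton and reading it as an \asli\ with weights $0,1$, universality at threshold $\nu=1$ coincides with coB\"uchi universality. To move this to \asls\ I would invoke the dualities already established in the excerpt: by Lemma~\ref{lemm:comparing-linf-lsup}, an automaton interpreted as \asli\ defines the complement of the language of the same automaton interpreted as \zls, and by Lemma~\ref{lemm:zls-asls-closed-under-sum}/the surrounding reductions, \asls\ and \zls, \asli\ and \zls\ are effectively interchangeable; more directly, the language $L^{=1}_A$ of an \asls-automaton and the language $L^{>0}_{A'}$ of a \zls-automaton are related so that the universality of the one is the emptiness of the other. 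Concretely, $L^{=1}_A(w)\geq 1$ for all $w$ fails exactly when some $w$ has $L^{=1}_A(w)<1$, which for a $\set{0,1}$-weighted automaton means the B\"uchi states are not visited infinitely often with probability~$1$, i.e.\ the complementary positive-B\"uchi condition holds for $w$; the existence of such a $w$ is the undecidable emptiness problem. Spelling out this correspondence gives undecidability of universality for \asls.

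The main obstacle is bookkeeping the exact correspondence between the $\set{0,1}$-weighted probabilistic $\LimSup$/$\LimInf$ automata and the probabilistic B\"uchi/coB\"uchi automata of~\cite{BG05,BG08}, and in particular making sure that the threshold-$1$ quantitative problems line up with the acceptance problems in~\cite{BG08} under the right semantics (positive vs.\ almost-sure) and the right side (emptiness vs.\ universality). Once that dictionary is fixed, each direction is a one-line reduction: the acceptance/emptiness/universality instances transfer with no change to the automaton and only a fixed choice of threshold $\nu=1$. I expect the write-up to be short, essentially ``this is the observation that probabilistic B\"uchi automata are $\set{0,1}$-weighted $\LimSup$-automata, so the quantitative emptiness of \zls\ at threshold~$1$ is precisely probabilistic B\"uchi emptiness under the positive semantics, which is undecidable by~\cite[Theorem~2]{BG08}; the universality of \asls\ follows dually.''
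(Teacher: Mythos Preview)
Your argument for \zls\ emptiness is correct and is exactly the paper's one-line reduction: a probabilistic B\"uchi automaton under positive acceptance is literally a $\{0,1\}$-weighted \zls, so quantitative emptiness at threshold~$1$ is \zbw\ emptiness, undecidable by~\cite[Theorem~2]{BG08}.

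The second half, however, has a genuine gap in the duality step. For a $\{0,1\}$-weighted $\LimSup$-automaton~$A$, the failure of ``$L^{=1}_A(w)\geq 1$ for all $w$'' means there is some $w$ with $\P\big(\LimSup(\weight(r))=0\big)>0$, i.e.\ the \emph{positive coB\"uchi} condition on the weight-$0$ states holds for some word --- not a positive B\"uchi condition as you claim. Hence \asls\ universality at threshold~$1$ reduces to \zcw\ non-emptiness, and the paper itself proves later that \zli\ (and thus \zcw) emptiness is \emph{decidable}. The dualities you invoke do not rescue this: Lemma~\ref{lemm:comparing-linf-lsup} pairs \asli\ with \zls\ and \asls\ with \zli, it does not make \asls\ and \zls\ or \asls\ and \asli\ ``interchangeable''; and closure under sum is irrelevant here.

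In fact the stated theorem is internally inconsistent with the rest of the paper: Table~\ref{tab:closure-properties} marks \asls\ universality as decidable, and a subsequent theorem asserts this explicitly. The intended second claim is almost certainly about \asli, and there your coB\"uchi instinct is exactly right: a $\{0,1\}$-weighted \asli\ at threshold~$1$ is an \ascw, and non-universality of \ascw\ is precisely \zbw\ non-emptiness on the complementary accepting set, undecidable by~\cite[Theorem~2]{BG08}. If you redirect your second paragraph from \asls\ to \asli\ and drop the ``\asls\ and \zls\ are effectively interchangeable'' sentence, you obtain a correct proof that matches the paper's intended reduction.
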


It is easy to obtain the following result as a straightforward generalization
of~\cite[Theorem~6]{BG08}.

\begin{theorem}
The emptiness problem for \asls\/ and the universality problem for \zli\/ are decidable.
\end{theorem}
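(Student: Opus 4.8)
The plan is to reduce both decision problems to the emptiness problem for probabilistic B\"uchi automata under the almost-sure semantics (\asbw), whose decidability is exactly~\cite[Theorem~6]{BG08} (or a routine extension of it), and which should not be confused with the \emph{undecidable} emptiness problem for \zbw\/ used in the preceding theorem.

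First I would handle \asls. Fix a probabilistic $\LimSup$-automaton $A$ with weight set $V$ and threshold $\nu$. Since $\LimSup(\weight(r))$ always takes a value in the finite set $V$, the value $L^{=1}_A(w)$ also lies in $V$ for every $w$ (it is the largest $v\in V$ with $\P^A(\{r\mid \LimSup(\weight(r))\ge v\})=1$, which is well defined since $v=\min V$ always works). Hence ``$L^{=1}_A(w)\ge\nu$ for some $w$'' is equivalent to ``$L^{=1}_A(w)\ge v$ for some $w$'' with $v=\min\{u\in V\mid u\ge\nu\}$; if this set is empty, every word has value $<\nu$ and the answer is trivially ``no''. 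The key observation is then that $\LimSup(\weight(r))\ge v$ iff some weight $\ge v$ occurs infinitely often along $r$, a B\"uchi condition. So, letting $A_v$ be the probabilistic B\"uchi automaton obtained from $A$ by marking as accepting exactly the transitions of weight $\ge v$ (which, by a standard transformation, can be turned into a state-based B\"uchi condition, or one simply uses the state-weight convention of the paper), we get $L^{=1}_A(w)\ge v$ iff $w$ is in the almost-sure language of $A_v$. Thus the emptiness problem for \asls\/ is literally the emptiness problem for \asbw, and decidability follows from~\cite{BG08}.

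Next I would handle \zli. Given a probabilistic $\LimInf$-automaton $A$, by the same argument $L^{>0}_A(w)\in V$ for all $w$, so universality with threshold $\nu$ amounts to: $L^{>0}_A(w)\ge v$ for all $w$, with $v$ as above (and if no such $v$ exists, universality trivially fails). Now $\LimInf(\weight(r))\ge v$ iff only finitely many weights $<v$ occur along $r$, a coB\"uchi condition; so, letting $A_v$ be the probabilistic coB\"uchi automaton in which exactly the states of weight $\ge v$ are accepting, we have $L^{>0}_A(w)\ge v$ iff $A_v$ accepts $w$ with positive probability. Hence universality of \zli\/ is the universality problem for \zcw. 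To finish I would invoke the boolean duality between positive-coB\"uchi and almost-sure-B\"uchi acceptance: for $(A_v,F)$ over state set $Q$, a word $w$ fails to be accepted with positive probability iff $\P^{A_v}(\coBuchi(F))=0$ iff $\P^{A_v}(\Buchi(Q\setminus F))=1$, i.e.\ iff $(A_v,Q\setminus F)$ accepts $w$ with probability~$1$. So $(A_v,F)$ is universal as a \zcw\/ iff $(A_v,Q\setminus F)$ is empty as an \asbw, and decidability again follows from~\cite{BG08}.

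There is essentially no obstacle beyond bookkeeping: all the real content is inside~\cite[Theorem~6]{BG08}. The passage from the pure B\"uchi/coB\"uchi setting of~\cite{BG08} to arbitrary rational weights is free, precisely because the value of a word for \asls\/ (resp.\ \zli) is always one of the finitely many weights of the automaton, so a single-threshold B\"uchi (resp.\ coB\"uchi) test captures the whole problem; the one point worth double-checking is that the reductions land on the \emph{almost-sure} emptiness problem (decidable) rather than the positive one (undecidable).
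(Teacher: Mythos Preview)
Your proposal is correct and matches the paper's own approach: the paper simply states that the result is ``a straightforward generalization of~\cite[Theorem~6]{BG08}'', and what you have written is precisely that generalization spelled out --- threshold the finite weight set to reduce \asls\/ emptiness to \asbw\/ emptiness and \zli\/ universality (via \zcw\/ universality and the $\coBuchi$/$\Buchi$ duality) to the complement of \asbw\/ emptiness. Your care in noting that the reduction lands on the decidable almost-sure side rather than the undecidable positive side is well placed.
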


\begin{theorem}
The emptiness problem for \zli\/ and the universality problem for \asls\/ are decidable.
\end{theorem}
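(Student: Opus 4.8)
The plan is to reduce both quantitative decision problems to \emph{qualitative} problems about boolean probabilistic automata, exactly as the preceding theorem reduces to~\cite[Theorem~6]{BG08}, and then quote the dual decidability results of~\cite{BG08}. Concretely, I would show that the quantitative emptiness problem for \zli\ reduces to the emptiness problem for probabilistic coB\"uchi automata under the probable semantics (\zcw), and that the quantitative universality problem for \asls\ reduces to the universality problem for probabilistic B\"uchi automata under the almost-sure semantics (\asbw); both of these boolean problems are decidable by~\cite{BG08} (they are the duals of the problems used for the previous theorem).

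First I would set up the weight-level slicing. Fix a probabilistic weighted automaton $A$ with weights $v_1 < \dots < v_k$ and a threshold $\nu$. Since every run over an infinite word is infinite and its weight sequence ranges over $\{v_1,\dots,v_k\}$, both $\LimInf(\weight(r))$ and $\LimSup(\weight(r))$ lie in $\{v_1,\dots,v_k\}$; so, for $v_1 < \nu \le v_k$ and $v_j := \min\{v_i : v_i \ge \nu\}$, and writing $F_j$ for the set of states of weight at least $v_j$, one has
\[
L^{>0}_A(w) \ge \nu \iff \P^A_w\bigl(\{r : r \text{ eventually stays in } F_j\}\bigr) > 0,\qquad
L^{=1}_A(w) \ge \nu \iff \P^A_w\bigl(\{r : r \text{ visits } F_j \text{ infinitely often}\}\bigr) = 1,
\]
i.e.\ $w$ has value $\ge\nu$ under the \zli\ semantics iff $(A,F_j)$ accepts $w$ as a \zcw, and $w$ has value $\ge\nu$ under the \asls\ semantics iff $(A,F_j)$ accepts $w$ as a \asbw; the cases $\nu \le v_1$ and $\nu > v_k$ are immediate. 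The only point needing attention is that $v_j$ must be the \emph{least} weight that is $\ge\nu$, which uses the monotonicity $\{r : \LimInf(\weight(r)) \ge v_{j'}\} \subseteq \{r : \LimInf(\weight(r)) \ge v_j\}$ for $v_{j'} \ge v_j$ (and dually for $\LimSup$). I would also note that the two halves of the statement are equivalent: by Lemma~\ref{lemm:comparing-linf-lsup} an automaton read as an \asls\ defines the complement of the same automaton read as a \zli, so the universality problem for \asls\ reduces to the emptiness problem for \zli\ (with a complemented threshold) and vice versa, and it therefore suffices to settle either one.

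The hard part --- really, the reason this theorem is not a corollary of the other decidability results in this section --- is that the only language-preserving reductions available for coB\"uchi automata run the wrong way for a decidability transfer: \zcw\ is reducible to \zbw\ (Lemma~\ref{lemm:zcwtozbw}) and to \ascw\ (as used in Lemma~\ref{lemm-inf-ztoas}), but emptiness of \zbw\ is undecidable (the quantitative emptiness problem for \zls\ reduces to it by the very same slicing) and emptiness of \ascw\ is undecidable as well. So decidability of \zcw\ emptiness cannot be deduced from those reductions; it has to be taken directly from~\cite{BG08} as the qualitative result on probable-coB\"uchi emptiness (equivalently, almost-sure-B\"uchi universality). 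Granting that input from~\cite{BG08}, all that remains is the elementary bookkeeping of the slicing above.
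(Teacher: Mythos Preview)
Your slicing reduction to the boolean problems (\zcw\ emptiness, equivalently \asbw\ universality) is correct and matches what the paper does implicitly; you also correctly isolate this boolean problem as the crux and correctly rule out the reduction-based shortcuts (via Lemma~\ref{lemm:zcwtozbw} to \zbw, or via \ascw).

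The gap is your attribution of the decidability of \zcw\ emptiness to~\cite{BG08}. Observe that the paper cites~\cite{BG08} for every \emph{other} decision theorem in this section, but not for this one: here it supplies its own argument. What~\cite{BG08} provides (its Theorem~6, used in the preceding theorem) is decidability of \asbw\ \emph{emptiness}, which by duality is \zcw\ \emph{universality}; it does not yield \asbw\ \emph{universality} (equivalently \zcw\ emptiness), and as you yourself argue, that does not follow from the available reductions. The paper's proof instead recasts positive-\coBuchi\ emptiness as the existence of a blind positive-winning strategy in a stochastic game with a \coBuchi\ objective, and then invokes~\cite{CDH-POMDP} to decompose this into positive-winning subproblems for safety and for reachability, each of which is decidable. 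That POMDP/blind-game argument is the substantive content of the theorem; your proposal defers it to a citation that does not contain it.
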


\begin{myProof}[(Sketch)]
We sketch the main ideas of the proof that emptiness of $\coBuchi$ automata
in positive semantics is achievable in EXPTIME and with exponential memory. 
The proof extends easily to \zli\/ and to the universality problem for \asls.

Emptiness of $\coBuchi$ automata in positive semantics can be viewed as 
deciding the existence of a blind positive-winning strategy in a stochastic 
game with $\coBuchi$ objective. 
It follows from the results of~\cite{CDH-POMDP} that this problem can be decomposed
into positive winning for safety and reachability objectives.
\end{myProof}

The following result is a particular case of~\cite[Corollary~3]{BG08}.

\begin{theorem}
The emptiness problem for \asli\/ and the universality problem for \zls\/ are undecidable.
\end{theorem}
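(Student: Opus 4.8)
The plan is to derive both undecidability claims from the known undecidability, established in~\cite[Corollary~3]{BG08}, of the corresponding \emph{boolean} problems for probabilistic coB\"uchi and B\"uchi automata, by observing that these boolean problems are already special cases of the quantitative ones at hand.

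First I would make the boolean special case explicit. As already noted in the definitions, a $\LimInf$-automaton all of whose weights lie in $\set{0,1}$ is exactly a probabilistic coB\"uchi automaton whose accepting states are those of weight~$1$, so under the almost-sure semantics it is an \ascw; likewise a $\LimSup$-automaton with weights in $\set{0,1}$, read under the positive semantics, is a \zbw. For such a $0/1$-weighted automaton $A$ and a word $w$, the sequence $\weight(r)$ along any run $r$ takes only the values $0$ and $1$, hence $\LimInf(\weight(r))=1$ iff $r$ satisfies the coB\"uchi condition and $\LimSup(\weight(r))=1$ iff $r$ satisfies the B\"uchi condition, the value being $0$ otherwise. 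Consequently $L^{=1}_A(w)\in\set{0,1}$ with $L^{=1}_A(w)=1$ iff the coB\"uchi condition holds over $w$ with probability~$1$, and $L^{>0}_A(w)\in\set{0,1}$ with $L^{>0}_A(w)=1$ iff the B\"uchi condition holds over $w$ with positive probability.

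Next, fix any rational threshold $\nu$ with $0<\nu\le 1$. For a $0/1$-weighted automaton the quantitative emptiness question for \asli, ``does some $w$ satisfy $L_A(w)\ge\nu$'', is then literally the (boolean) emptiness question for \ascw, and the quantitative universality question for \zls, ``does every $w$ satisfy $L_A(w)\ge\nu$'', is literally the (boolean) universality question for \zbw. The identity map on automata therefore witnesses reductions from the two boolean problems to the two quantitative ones; since the boolean problems are undecidable by~\cite[Corollary~3]{BG08}, the quantitative emptiness problem for \asli\/ and the quantitative universality problem for \zls\/ are undecidable as well. As a sanity check one may note, using Lemma~\ref{lemm:comparing-linf-lsup}, that a $0/1$-weighted automaton is empty as an \ascw\/ precisely when it is universal as a \zbw, so the two halves of the statement are in fact equivalent and it suffices to invoke either one.

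I do not expect a genuine obstacle here: the argument is the routine observation that $\LimInf$- (resp.\ $\LimSup$-) automata with $0/1$ weights are coB\"uchi (resp.\ B\"uchi) automata, plus the citation of~\cite{BG08}. The only points needing a little care are that the reduction is genuinely independent of the concrete value of $\nu\in(0,1]$ and that the non-strict ``$\ge$'' in the definitions of emptiness and universality causes no mismatch with the acceptance conditions — both immediate because the automaton values occurring here are confined to $\set{0,1}$ — together with pinning down the precise form of~\cite[Corollary~3]{BG08} that is being specialized.
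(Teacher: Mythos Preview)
Your argument is correct and is exactly the paper's approach: the paper merely records that the theorem ``is a particular case of~\cite[Corollary~3]{BG08}'', and you have unpacked this by noting that $0/1$-weighted \asli\/ and \zls\/ automata are precisely \ascw\/ and \zbw, so the boolean undecidability from~\cite{BG08} already sits inside the quantitative problems for any threshold $\nu\in(0,1]$.

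One small slip in your closing ``sanity check'': for the \emph{same} $0/1$-weighted automaton the \ascw\/ and \zbw\/ readings are not complements of each other (e.g.\ a single weight-$1$ state with a self-loop is both non-empty as \ascw\/ and universal as \zbw); the duality you want requires swapping the weights $0\leftrightarrow 1$, after which \ascw-emptiness indeed coincides with \zbw-universality. This aside plays no role in your main argument, which stands as written.
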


Finally, by Theorem~\ref{theo:ndi-zdi} and the decidability of emptiness for \ndi,
we get the following result.

\begin{theorem}
The emptiness problem for \zdi\/ and the universality problem for \asdi\/ are decidable.
\end{theorem}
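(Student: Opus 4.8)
The emptiness problem for \zdi\ and the universality problem for \asdi\ are decidable.

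The plan is to reduce both problems to the emptiness problem for \ndi, which is known to be decidable (as stated in the excerpt, via the reference to decidability of emptiness for nondeterministic discounted-sum automata). The entire argument rests on Theorem~\ref{theo:ndi-zdi}, which establishes that \ndi\ and \zdi\ are reducible to each other, and that \udi\ and \asdi\ are reducible to each other.

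First, for the emptiness of \zdi: given a \zdi\ $A$ and threshold $\nu$, I would invoke Theorem~\ref{theo:ndi-zdi}(a) to obtain a \ndi\ $A'$ with $L_{A'} = L_A$. Since emptiness is defined purely in terms of the quantitative language (whether some word $w$ has $L_A(w) \geq \nu$), and $L_{A'} = L_A$, the emptiness instance for $A$ with threshold $\nu$ holds if and only if the emptiness instance for $A'$ with threshold $\nu$ holds. The latter is decidable by the decidability of emptiness for \ndi. Second, for the universality of \asdi: given an \asdi\ $A$ and threshold $\nu$, I would invoke Theorem~\ref{theo:ndi-zdi}(b) to obtain a \udi\ $A'$ with $L_{A'} = L_A$. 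Universality asks whether $L_A(w) \geq \nu$ for all words $w$; for a \udi, the value of a word is the \emph{minimal} value over all runs, so universality of a \udi\ — asking whether every word has value at least $\nu$ — is equivalent to asking whether no word has a run of value strictly below $\nu$, which reduces to (the complement of) an emptiness question for \ndi. More directly, as already noted in the excerpt and in the proof of Theorem~\ref{theo:ndi-zdi}, the complement of the quantitative language of a \udi\ is definable by a \ndi; since the complement operator here is $L^c(w) = 1 - L(w)$, universality of $A$ with threshold $\nu$ is equivalent to emptiness of (a \ndi\ for) $1 - L_A$ failing at threshold $1 - \nu$ being false — i.e., it reduces to a \ndi\ emptiness check, which is decidable.

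I do not anticipate a genuine obstacle here, since all the substantive work — the equivalences between probabilistic and (non)deterministic discounted-sum automata, and in particular the non-obvious fact that probabilistic choice adds no expressive power for discounted sum — is already packaged in Theorem~\ref{theo:ndi-zdi}. The only point requiring a small amount of care is the passage from universality to emptiness via complementation: one must check that the threshold transforms correctly ($\nu \mapsto 1-\nu$, using $L^c(w) = 1-L(w)$), and that the strictness of the inequalities is handled consistently, but this is routine. Thus the theorem follows immediately from Theorem~\ref{theo:ndi-zdi} together with the decidability of emptiness for nondeterministic discounted-sum automata.
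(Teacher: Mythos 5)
Your proof is correct and follows exactly the route the paper takes: the paper derives this theorem in one line from Theorem~\ref{theo:ndi-zdi} together with the decidability of emptiness for \ndi, reducing \zdi-emptiness to \ndi-emptiness and \asdi-universality, via \udi, to a (dual) \ndi-emptiness check. Your added remark about handling the threshold transformation and the strict-versus-nonstrict inequality under complementation is the only point where any care is needed, and it is indeed routine since the optimal discounted value of a \ndi\ is attained and computable.
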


Note that by Theorem~\ref{theo:ndi-zdi}, the universality problem for \ndi\/ (which is not know to be decidable) 
can be reduced to the universality problem for \zdi\/ and to the emptiness problem for \asdi.

\bibliography{main}
\bibliographystyle{plain}
\end{document}